\newif\iffullmode
\fullmodetrue 
\iffullmode
  \documentclass[acmsmall,nonacm]{acm/acmart}
\else
  \documentclass[acmsmall]{acm/acmart}
\fi

\usepackage{graphicx}
\usepackage{amsmath}
\usepackage{algorithm}
\usepackage[noend]{algpseudocode}
\usepackage{tikz}
\usepackage{xcolor}
\usetikzlibrary{decorations.pathreplacing}
\usetikzlibrary{arrows.meta}
\usetikzlibrary{calc}

\definecolor{pRed}{HTML}{C4361D}
\definecolor{pBlue}{HTML}{1F6FB2}
\definecolor{pTeal}{HTML}{0F8A7E}
\definecolor{pOrange}{HTML}{D8691C}
\definecolor{pPurple}{HTML}{9B3B8C}
\definecolor{pGray}{HTML}{5C5C5C}

\usepackage{diagbox}
\usepackage{booktabs}
\usepackage{colortbl}
\usepackage{thmtools}

\newcommand{\bx}[0]{\mathbf{x}}

\newcommand{\mL}{\mathcal{L}}
\newcommand{\mC}{\mathcal{C}}
\newcommand{\mH}{\mathcal{H}}    
\newcommand{\smH}{\dot{\mathcal{H}}}
\newcommand{\mT}{\mathcal{T}}
\newcommand{\mA}{\mathcal{A}}

\newcommand{\mE}{\mathcal{E}}

\newcommand{\bU}{\mathbf{U}}

\newcommand{\sfhow}{\textsf{sfhow}}
\newcommand{\cqneg}{\texttt{CQ}^{\neg}}
\newcommand{\cq}{\texttt{CQ}}

\newcommand{\apporfull}[1]{\iffullmode Appendix~\ref{#1}\else the full paper~\cite{FullVersion}\fi}

\newcommand{\rev}[1]{#1}    

\AtBeginDocument{} 

\begin{document}

\iffullmode
\else
\setcopyright{cc}
\setcctype{by}
\acmJournal{PACMMOD}
\acmYear{2026} \acmVolume{4} \acmNumber{5 (PODS)} \acmArticle{311}
\acmMonth{11} \acmDOI{10.1145/3850146}
\received{30 May 2026}
\received[revised]{25 August 2026}
\received[accepted]{1 September 2026}
\fi

\title{Conjunctive Queries with Negation: Beyond Signed-Acyclicity}

\author{Simon Frisk}
\affiliation{
  \institution{University of Wisconsin--Madison}
  \city{Madison}
  \state{Wisconsin}
  \country{USA}
}
\email{simon.frisk@wisc.edu}

\author{Paraschos Koutris}
\affiliation{
  \institution{University of Wisconsin--Madison}
  \city{Madison}
  \state{Wisconsin}
  \country{USA}
}
\email{paris@cs.wisc.edu}

\begin{abstract}
We study constant-delay enumeration for conjunctive queries with negation ($\cqneg$). \rev{It is known that \emph{signed-acyclic} queries can be enumerated with constant delay after linear preprocessing time, using linked lists with skip links.}
We \rev{use} a hypergraph width measure for $\cqneg$, the \rev{\emph{fractional signed hyperorder width}}
($\sfhow$), \rev{which requires} a single variable order \rev{to} simultaneously handle every subset
of the negative atoms\rev{, which Capelli et al.\ introduced to characterize the preprocessing time of direct access}.
$\sfhow$ strictly generalizes signed-acyclicity (recovered when $\sfhow = 1$) and
fractional hypertree width (recovered on queries without negation). Our main algorithmic result is a variable-elimination algorithm that achieves constant-delay enumeration on input $I$ for any full $\cqneg$ query with preprocessing time $O(|I|^{\sfhow})$.
Furthermore, we show that using multiple variable orders can improve this bound, exhibiting
an $O(|I|^{3/2})$ algorithm for $k$-cycle queries with at least one negative edge.

\end{abstract}

\iffullmode
\else
\begin{CCSXML}
<ccs2012>
   <concept>
       <concept_id>10003752.10010070.10010111.10011711</concept_id>
       <concept_desc>Theory of computation~Database query processing and optimization (theory)</concept_desc>
       <concept_significance>500</concept_significance>
       </concept>
 </ccs2012>
\end{CCSXML}

\ccsdesc[500]{Theory of computation~Database query processing and optimization (theory)}

\keywords{conjunctive queries, negation, query evaluation}
\fi

\maketitle

\section{Introduction}

Conjunctive Queries with Negation ($\cqneg$) form a natural class of database queries. Every SQL query built from joins, projections, and antijoins (\texttt{NOT IN}, \texttt{NOT EXISTS}, \texttt{EXCEPT}) is a $\cqneg$ query. Furthermore, negation is the standard mechanism for expressing set difference, integrity-constraint violations, and graph patterns that forbid certain substructures (e.g., open triangles, or paths that avoid a designated set of edges). Despite their ubiquity, $\cqneg$ queries are algorithmically far less understood than their positive counterparts.

In this paper, we study the problem of enumerating the output tuples of a $\cqneg$ query. A $\cqneg$ query extends a standard Conjunctive Query ($\cq$) by allowing negated atoms in the query body. Formally, it is of the form
\begin{align}
Q(\bx_F) \leftarrow \bigwedge_{K \in \mE^+} R_K(\bx_K) \wedge
\bigwedge_{N \in \mE^-} \neg R_N(\bx_N)
\end{align}
where the variables are of the form $x_i$, $i \in [n] = \{1, 2, \dots, n\}$, and $\mE^+, \mE^-$ are two multisets of hyperedges, called {\em positive} and {\em negative}, respectively. Each hyperedge is a subset of $[n]$. The variables $\{x_i \mid i \in F \}$, for $F \subseteq [n]$, are the {\em free variables}. When $F = [n]$, we say that $Q$ is a full query. We focus on full queries in this paper. The structure $\smH = ([n], \mE^+, \mE^-)$ associated with $Q$ is called the {\em signed hypergraph} of $Q$. We use the dotted notation to distinguish it from a standard hypergraph.

Given a query $Q$ and a database instance $I$, the {\em enumeration task} asks to first construct a data structure during a preprocessing phase, after which the answers $Q(I)$ can be enumerated with as little delay as possible. The ideal goal is linear preprocessing time $O(|I|)$ with constant-delay enumeration. In this work, we will focus on algorithms for $\cqneg$ queries that achieve constant delay, albeit with superlinear preprocessing time.

The computational complexity of enumeration is well-understood for several subclasses of $\cqneg$. When $Q$ only has positive hyperedges ($\mE^- = \emptyset$), the query becomes a standard $\cq$ with hypergraph $\mH = ([n], \mE^+)$. We will sometimes call this a \emph{positive} query. In this case, a combination of a constant-delay enumeration algorithm for acyclic queries~\cite{BaganDG07} with any worst-case optimal join~\cite{NPRR,LeapfrogTriejoin} gives a constant-delay algorithm with preprocessing time $O(|I|^{\textsf{fhw}(\mH)})$ for full queries, where $\textsf{fhw}(\mH)$ is the fractional hypertree width of $\mH$. This algorithm performs the enumeration task following only one tree decomposition of $\mH$. The PANDA algorithm~\cite{Panda,PandaExpress} instead uses multiple tree decompositions to improve the preprocessing time to $O(|I|^{\textsf{subw}(\mH)})$, where $\textsf{subw}(\mH)$ is the submodular width of $\mH$. This is the state-of-the-art preprocessing-time bound for positive full queries.

When $\smH$ contains a mixture of positive and negative hyperedges, the landscape is much less understood. We do know the subclass of queries for which we need only linear-time preprocessing to achieve constant delay~\cite{SignedAcyclic}: these are the free-connex {\em signed-acyclic} queries~\cite{BraultBaronThesis}. Signed-acyclicity generalizes $\alpha$-acyclicity by requiring that for every subset $S \subseteq \mE^-$, the hypergraph $\mH^S = ([n],\mE^+ \cup S)$ is $\alpha$-acyclic. Beyond the linear-time regime, not much is known about \rev{constant-delay} enumeration, although the decision problem of testing whether $Q(I)$ is non-empty has been studied somewhat more extensively~\cite{BraultBaron12,SignedAcyclic,NestSetWidth}. 

\rev{Capelli et al.~\cite{CapelliIrwin24,CapelliCarmeliIrwinSalvati26} gave two elegant direct-access algorithms for $\cqneg$ queries: one reduces direct access to positive queries, and the other builds on constructing an ordered circuit that represents the answers.}
\rev{\rev{The preprocessing time of their algorithms is governed by the \emph{fractional signed hyperorder width} ($\sfhow$)~\cite{CapelliIrwin24,CapelliCarmeliIrwinSalvati26}.} Rather than measuring the width of $\smH$ directly, it is measured on all \emph{positive completions} $\mH^S = ([n], \mE^+ \cup S)$, and it is required that all of these $2^{|\mE^-|}$ completions admit a \emph{common} variable order. Formally,
\[
\sfhow(\smH) \;:=\; \min_{\sigma}\; \max_{S \subseteq \mE^-}\; \max_{i \in [n]}\; \rho^*_{\mH^S}\bigl(\chi_\sigma^S(x_i)\bigr),
\]
where $\sigma$ is a variable order, $\chi_\sigma^S(x_i)$ is the bag of variable $x_i$ in the canonical tree decomposition of $\mH^S$ induced by $\sigma$, and $\rho^*$ is the fractional edge cover number. The crucial design choice is the order of quantifiers: the variable order $\sigma$ is chosen \emph{once} and must serve every $S$ simultaneously, rather than being tailored per~$S$. This min-max formulation is what allows a single variable-elimination algorithm to handle the exponentially many positive completions at once.}
\rev{For full queries, Capelli et al. obtain preprocessing time ${O}(|I|^{\sfhow})$ and polylogarithmic access time, and hence enumerate $Q(I)$ with polylogarithmic delay.}

\rev{The width $\sfhow$ is a strict generalization of signed-acyclicity, in the sense that $\sfhow=1$ if and only if the query is signed-acyclic. Also, $\sfhow$ is a strict generalization of fractional hypertree width, meaning that $\sfhow$ coincides with fractional hypertree width on positive queries, but there are queries with negation where $\sfhow$ is strictly larger than fractional hypertree width~\cite{CapelliIrwin24,CapelliCarmeliIrwinSalvati26}.}

\rev{The constant-delay algorithm of~\cite{SignedAcyclic} for signed-acyclic queries, i.e., queries with $\sfhow = 1$, is based on a different technique: it eliminates the variables one at a time and stores, for each variable, linked lists of candidate values with \emph{skip links} that jump over values blocked by negative atoms. Thus, there are two separate lines of work: direct access handles arbitrary $\sfhow$ but gives only polylogarithmic delay, while the skip-link framework gives constant delay but only for $\sfhow = 1$.}

\subsection{Our Contributions}

Based on the above discussion, we ask the following question in this work:

\begin{quote}
 {\em \rev{Can we achieve constant-delay enumeration for full $\cqneg$ queries with preprocessing time $O(|I|^{\sfhow})$?}}
\end{quote}

\rev{We answer this question positively.} Our algorithm extends the variable-at-a-time enumeration framework of~\cite{SignedAcyclic} to arbitrary $\sfhow$. Fix any variable order $\sigma$ witnessing $\sfhow(\smH)$. The algorithm eliminates the variables in the order $\sigma$, building at each step a data structure that, given an assignment to the variables eliminated so far, enumerates the consistent extensions to the next variable. The use of a single variable order shared across all positive completions is what allows the construction cost at each step to be bounded uniformly by the AGM bound\rev{, similar to the analysis of~\cite{CapelliIrwin24,CapelliCarmeliIrwinSalvati26}}, summing to a total preprocessing cost of $O(|I|^{\sfhow})$. The main technical challenge is to merge, into a single data structure, the obstructions imposed by all subsets $S \subseteq \mE^-$ so that, during the enumeration phase, the next answer can be located after a bounded number of operations.

A natural follow-up question is whether $\sfhow$ is the right bound, or whether it can be tightened in analogy to how fractional hypertree width is tightened to submodular width via the use of multiple tree decompositions. In our setting, the analogous move is to use \emph{multiple} variable orders, each tailored to a different region of the input, rather than committing to a single $\sigma$. We obtain such an improvement for the family of $k$-cycle queries with negation, where a partitioning argument allows us to combine several variable orders to beat the $\sfhow$ bound. Generalizing this multi-variable-order approach to arbitrary $\cqneg$ queries is left as future work.

\section{Preliminaries}

\paragraph{The class $\cqneg$}
A Conjunctive Query with Negation ($\cqneg$) is defined as follows:
\begin{align} \label{eq:cqneg}
Q(\bx_F) \leftarrow \bigwedge_{K \in \mE^+} R_K(\bx_K) \wedge
\bigwedge_{N \in \mE^-} \neg R_N(\bx_N)
\end{align}
where the variables are of the form $x_i$, $i \in [n] = \{1, 2, \dots, n\}$, and $\mE^+, \mE^-$ are two multisets of hyperedges, called {\em positive} and {\em negative}, respectively. Each hyperedge is a subset of $[n]$. The variables $\{x_i \mid i \in F \}$, for $F \subseteq [n]$, are the {\em free variables}. When $F = [n]$, we say that $Q$ is a full query. The structure $\smH = ([n], \mE^+, \mE^-)$ associated with $Q$ will be called the {\em signed hypergraph}
of $Q$. In this work, we will study only {\em safe queries}, where $\bigcup_{K \in \mE^+} K = [n]$, i.e., every variable appears in at least one positive atom.
When $\mE^-  = \emptyset$, then $Q$ is a Conjunctive Query ($\cq$) with the associated hypergraph $([n], \mE^+)$.

\paragraph{Enumeration Complexity} We will study the enumeration problem
$\textsc{Enum}(Q, I)$ for $\cqneg$ queries, where the goal is to output all answers to a given $\cqneg$ query $Q$
on a given database instance $I$.

An enumeration algorithm for $\textsc{Enum}(Q, I)$ consists of two phases:
\begin{itemize}
    \item {\bf Preprocessing phase:} it constructs efficient data structures from $Q$ and $I$; and
    \item {\bf Enumeration phase:} it may access the data structures built during preprocessing, and emit
    the answers of $Q(I)$ one by one, without repetitions.
\end{itemize}

We say that an enumeration algorithm enumerates with delay $O(\tau)$ if the time between the
emission of any two consecutive answers (and the time to emit the first answer, and the time
from the last answer to the end) is bounded by $O(\tau)$. In particular, we say that an
enumeration algorithm is {\em constant-delay} if it enumerates with delay independent of the
input database size $|I|$.

\paragraph{Model of Computation}
In this paper, we use the random-access machine (RAM)
as our computation model with $O(\log |I|)$-bit words.
Our runtime results will be with respect to the {\em data complexity} of evaluation, i.e., the
complexity is measured in the size of the database instance, while the query is fixed.
For a database instance $I$, we define $|I|$ as the sum of the size of all relations in the instance.
\section{Variable Orders, Tree Decompositions, and Widths}
\label{sec:nice-td}

In this section, we will define \rev{the notion of width} we use in the paper to capture the
complexity of evaluating Conjunctive Queries with negation. For reasons that will become clear later,
it will be more convenient to work with variable orders instead of tree decompositions. More background on the connection between variable orders and tree decompositions can be found in~\cite{FAQ}. Some of the lemmas below are from~\cite{FAQ}. Missing proofs of the remaining lemmas are in the appendix.

\subsection{A Refresher on Variable Orders}
\label{sec:varorderrefresh}

We start by defining tree decompositions and the general notion of $f$-width.

\begin{definition}[Tree Decomposition]
A tree decomposition of a hypergraph $\mH= ([n],\mE)$ is a pair $(T,\chi)$ where $T$
is a tree and $\chi : V(T) \to 2^{[n]}$ maps each node $t$ of the tree (called a bag) to a subset $\chi(t)$ of vertices such that:
\begin{enumerate}
    \item For every hyperedge $K \in \mE$, there is a node $t$ of $T$ such that $K \subseteq \chi(t)$.
    \item For every $i \in [n]$, the set $\{t \in V(T) \mid i \in \chi(t)\}$ is a non-empty (connected) subtree of $T$.
\end{enumerate}
\end{definition}

We will denote by $\textsf{TD}(\mH)$ the set of tree decompositions of a hypergraph $\mH$.
Given a set function $f: 2^{[n]} \to \mathbb{R}^+$, we define the $f$-width of a hypergraph $\mH$ as:
\[\text{width}_f(\mH) := \min_{(T,\chi) \in \textsf{TD}(\mH)} \max_{t \in V(T)} f(\chi(t)).\]

\paragraph{Variable Orders}
We now switch to defining variable orders following~\cite{FAQ}. For a hypergraph $\mH=([n],\mE)$, a {\em variable order} $\sigma$ is simply a total order on $[n]$.
Given a vertex $x_i \in [n]$, define $\partial_{\mH} x_i = \{K \in \mE : x_i \in K\}$, i.e., the set of hyperedges of $\mH$ containing $x_i$. The variable order $\sigma = (\sigma_1, \dots, \sigma_n)$ defines inductively a sequence of hypergraphs $\mH_0, \mH_1, \dots, \mH_n$ where $\mH_0 = \mH$ and for $i=1, \dots, n$, $\mH_{i}$ is obtained from $\mH_{i-1}$ as follows:
\begin{enumerate}
    \item Remove the vertex $\sigma_i$.
    \item Remove all hyperedges in $\partial_{\mH_{i-1}} \sigma_i$.
    \item Add the hyperedge $U_i \setminus \{\sigma_i\}$, where $U_i := \bigcup_{K \in \partial_{\mH_{i-1}} \sigma_i} K$.
\end{enumerate}

This construction is analogous to constructing a chordal completion of a graph given an elimination order. We will now associate to $\sigma$ an object $\mT_\sigma = (T_\sigma, \chi_\sigma)$, which we will call the {\em $\sigma$-canonical tree decomposition} (and which we will show below to indeed be a tree decomposition) defined as follows: the nodes of $\mT_\sigma$ are $t_1, \dots, t_n$, where $t_i$ corresponds to the vertex $\sigma_i$. The bag $\chi_\sigma(t_i)$ is defined as $U_i$. Finally, for each $i$, we connect $t_i$ to the node $t_j$ with $j > i$ such that $\chi_\sigma(t_j) \supseteq U_i \setminus \{\sigma_i\}$ and $j$ is minimal with this property.

\begin{lemma}[\cite{FAQ}]
    Let $\mH = ([n],\mE)$ be a hypergraph, and let $\sigma$ be a variable order over $[n]$. Then, $\mT_\sigma$ is a tree decomposition of $\mH$.
\end{lemma}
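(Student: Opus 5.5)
We must verify three things: every original hyperedge lies in some bag, every vertex occurs in a connected set of bags, and $T_\sigma$ is actually a tree. The whole argument rests on one invariant of the elimination process: for every $i$, each hyperedge of $\mH_i$ is either an original hyperedge of $\mH$, or of the form $U_j \setminus \{\sigma_j\}$ for some $j \le i$. Moreover, every vertex of a hyperedge in $\mH_i$ lies in $\{\sigma_{i+1},\dots,\sigma_n\}$. Both facts follow by induction on $i$ directly from steps (1)–(3), since only edges containing $\sigma_i$ are removed and the single new edge omits $\sigma_i$.

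\textbf{Tree structure and edge coverage.} First I check that the parent of $t_i$ is well defined for $i<n$. If $U_i \setminus \{\sigma_i\}$ is empty, any $j>i$ works, say $j=n$; to keep the argument uniform, I attach such nodes to $t_n$, so possibly several components are merged. Otherwise, the edge $U_i\setminus\{\sigma_i\}$ belongs to $\mH_i$. Let $j>i$ be the least index with $\sigma_j \in U_i \setminus\{\sigma_i\}$. Then this edge survives untouched in $\mH_{i},\dots,\mH_{j-1}$, since none of those steps eliminates one of its vertices. Hence it lies in $\partial_{\mH_{j-1}}\sigma_j$, and so $U_i\setminus\{\sigma_i\}\subseteq U_j$. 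Every node $t_i$ with $i<n$ thus has exactly one parent of larger index, so $T_\sigma$ has $n-1$ edges and no cycles, i.e., it is a tree rooted at $t_n$.

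For coverage, take $K\in\mE$ and let $i$ be minimal with $\sigma_i\in K$. By the invariant, $K$ survives intact to $\mH_{i-1}$ and contains $\sigma_i$, so $K\subseteq U_i=\chi_\sigma(t_i)$.

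\textbf{Running intersection.} Fix a vertex $v=\sigma_k$. Any bag $U_i$ containing $v$ has $i\le k$, by the second part of the invariant. I claim that every node $t_i$ with $i<k$ and $v\in U_i$ has its parent $t_j$ also containing $v$, with $j\le k$. Indeed, $v\in U_i\setminus\{\sigma_i\}\subseteq U_j$, and $j\le k$ because $\sigma_k$ is itself a candidate index in the minimality argument above. Iterating, every bag containing $v$ is joined to $t_k$ by a path of bags containing $v$. Since $v \in U_k$, the set of such bags is non-empty and forms a connected subtree.

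\textbf{Main obstacle.} The delicate step is the parent-existence argument: we need that the minimal $j$ required by the definition (minimal with $\chi_\sigma(t_j)\supseteq U_i\setminus\{\sigma_i\}$) is at most the first elimination index of a vertex of $U_i\setminus\{\sigma_i\}$. This is exactly what the running intersection property uses. It follows because that first elimination index already satisfies the containment, so the true minimum can only be smaller or equal, and the argument above goes through with the actual parent. The other subtle case is the empty-edge case, which must be handled so that the structure is a tree and not merely a forest.
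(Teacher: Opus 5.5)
Your proof is correct. The paper does not prove this lemma itself; it defers to \cite{FAQ}. Your argument is the standard elimination-order one: coverage via the first-eliminated vertex of each hyperedge, existence of a parent because $U_i\setminus\{\sigma_i\}$ persists as a hyperedge until its first vertex is eliminated, and running intersection by climbing parent pointers toward $t_k$.

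Two small remarks.

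\textbf{The empty case needs no special handling.} If $U_i\setminus\{\sigma_i\}=\emptyset$, then every $j>i$ satisfies the containment, so the minimality in the definition already makes $t_{i+1}$ the parent. Attaching such nodes to $t_n$ instead proves the claim for a slightly different tree. This is harmless, because your argument works verbatim with $t_{i+1}$ as the parent.

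\textbf{The step $\sigma_k\in U_k$ is used without justification.} It follows from a short induction: for every $i<k$, some hyperedge of $\mH_i$ contains $\sigma_k$, since an edge containing it either survives step $i$ or is absorbed into $U_i\setminus\{\sigma_i\}$. This presupposes that $\sigma_k$ lies in some hyperedge of $\mH$. That holds for safe queries, but an isolated vertex would give $U_k=\emptyset$ and break the non-emptiness condition of a tree decomposition.
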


From the above construction, the $n$ bags of $\mT_\sigma$ correspond exactly to the sets
$\{ U_i \mid i \in [n] \}$. Moreover, $\sigma_i \in U_i$ and the largest $j \in [n]$ such that $\sigma_i \in U_j$ is $j=i$. We thus say that $\sigma_i$ {\em disappears} in the bag $U_i$, and denote by $\chi_\sigma(\sigma_i)$ the bag where $\sigma_i$ disappears, i.e., $\chi_\sigma(\sigma_i) = U_i$.

In order to relate variable orders back to tree decompositions, we need the following lemma.

\begin{lemma}[\cite{FAQ}]
Let $\mH$ be a hypergraph and $\mT=(T,\chi)$ be a tree decomposition of $\mH$. There exists a variable order $\sigma$ such that for every node $v' \in V(\mT_\sigma)$, there is a node $v \in V(\mT)$ such that $\chi_\sigma(v')\subseteq \chi(v)$.
\end{lemma}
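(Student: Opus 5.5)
We prove the statement by building $\sigma$ from the leaves of $\mT$ inward, so that it mimics a leaf-pruning of $\mT$. First we normalize $\mT$. Repeatedly contract any edge $\{u,v\}$ of $T$ with $\chi(u)\subseteq\chi(v)$ into $v$. This keeps a tree decomposition, and it only removes bags that are contained in other bags. So it suffices to prove the statement for the resulting decomposition, which we still call $\mT=(T,\chi)$; now no bag is contained in a neighbouring bag. We may also assume that the bags cover $[n]$. If some vertex is missing, we attach it as a singleton bag to an arbitrary node.

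Next we define $\sigma$. Root $T$ at an arbitrary node $r$ and fix a post-order traversal of $T$ (children before parents). For each node $v \neq r$ with parent $p$, let $D(v) = \chi(v)\setminus \chi(p)$, and let $D(r)=\chi(r)$. Running intersection implies that the sets $D(v)$ partition $[n]$. Indeed, each vertex $x$ lies in a connected subtree of $T$. That subtree has a unique topmost node $v$, and $x \in D(v)$ only for this $v$. Let $\sigma$ list the vertices of $D(v_1)$, then $D(v_2)$, and so on, following the post-order $v_1,v_2,\dots$. Within a single $D(v)$ the order is arbitrary.

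The key claim, proved by induction on $i$, is the following invariant. Suppose the elimination process has processed exactly the vertices of the blocks $D(w)$ for all $w$ in a post-order prefix $P$, possibly together with part of the next block $D(v)$. Then every hyperedge of the current hypergraph $\mH_{i}$ is contained either in $\chi(u)$ for some $u \notin P$, or, if $v$ is partially processed, in $\chi(v)$. Initially this holds because every hyperedge of $\mH$ lies in some bag. For the inductive step, eliminate $x=\sigma_{i+1}\in D(v)$.

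Every current hyperedge containing $x$ must lie in a bag containing $x$. The only nodes containing $x$ are $v$ and descendants of $v$. All descendants of $v$ are already in $P$, because of the post-order. So, by the invariant, each such hyperedge lies in $\chi(v)$. This requires one more check: a hyperedge attributed to an already-processed descendant $w$ must really lie in a bag of some $u\notin P$.

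To handle this, we strengthen the invariant as follows. When all of $D(w)$ has been eliminated, every hyperedge previously assigned to $\chi(w)$ lies in $\chi(w)\cap\chi(\mathrm{parent}(w))\subseteq \chi(\mathrm{parent}(w))$. We then reassign it to the parent. With this, $U_{i+1}\subseteq\chi(v)$. The new hyperedge $U_{i+1}\setminus\{x\}$ is also contained in $\chi(v)$, so the invariant is preserved. When $D(v)$ is exhausted, the remaining vertices of any hyperedge assigned to $v$ lie in $\chi(v)\setminus D(v)\subseteq\chi(p)$, which justifies the reassignment to the parent $p$.

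The conclusion is that every bag $\chi_\sigma(t_i)=U_i$ is contained in $\chi(v)$ for the node $v$ whose block contains $\sigma_i$. Since we only removed bags contained in other bags during normalization, this gives the required node of the original $\mT$.

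The main obstacle is the bookkeeping of this inductive invariant. We must ensure that a merged hyperedge $U_i\setminus\{\sigma_i\}$ never straddles two different bags. This is exactly where the post-order and running intersection are used together: the only bags containing the eliminated vertex have already collapsed into $\chi(v)$.
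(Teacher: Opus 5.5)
Your proof is correct and is essentially the standard argument behind the result the paper cites from \cite{FAQ}; the paper does not reprove it. That argument repeatedly eliminates a vertex occurring only in a leaf bag, notes that every hyperedge containing it lies in that bag, shrinks the bag and recurses, and your post-order blocks $D(v)$ with the reassign-to-parent invariant are this recursion unrolled. The contraction normalization is never used, and the singleton-bag padding is vacuous, because the paper's definition of tree decomposition already requires every vertex to lie in some bag.
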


Given the above lemma, for any monotone set function $f$, the $f$-width of a hypergraph $\mH$ equals the minimum, over variable orders $\sigma$, of the maximum $f$-value attained over the bags of $\mT_\sigma$.

\begin{restatable}[\cite{FAQ}]{lemma}{transformtocanonicaltdlemma}
\label{lemma:TransformToCanonicalTD}
Let $\mH = ([n],\mE)$ be a hypergraph and $f: 2^{[n]} \to \mathbb{R}^+$ be monotone. Then,
\[\text{width}_f(\mH) = \min_\sigma \max_{t \in V(\mT_\sigma)} f(\chi_\sigma(t))
 = \min_\sigma \max_{i \in [n]} f(\chi_\sigma(\sigma_i)).\]
\end{restatable}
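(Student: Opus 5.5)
We prove the two equalities separately, writing $A := \min_\sigma \max_{t \in V(\mT_\sigma)} f(\chi_\sigma(t))$ and $B := \min_\sigma \max_{i \in [n]} f(\chi_\sigma(\sigma_i))$.

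The second equality $A = B$ is immediate from the construction. The nodes of $\mT_\sigma$ are exactly $t_1,\dots,t_n$, and $\chi_\sigma(t_i) = U_i = \chi_\sigma(\sigma_i)$. So for every fixed $\sigma$ the two inner maxima range over the same multiset of bags, and they coincide. Taking the minimum over $\sigma$ preserves the equality.

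For $\text{width}_f(\mH) \le A$, we use the first lemma above: for every $\sigma$, $\mT_\sigma$ is a tree decomposition of $\mH$, so $\mT_\sigma \in \textsf{TD}(\mH)$. Hence $\text{width}_f(\mH) \le \max_{t \in V(\mT_\sigma)} f(\chi_\sigma(t))$ for every $\sigma$, and minimizing over $\sigma$ gives the inequality.

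For $A \le \text{width}_f(\mH)$, fix a tree decomposition $\mT = (T,\chi)$ attaining the minimum in the definition of $\text{width}_f(\mH)$. The minimum exists because we may restrict to tree decompositions without duplicate or redundant bags, of which there are finitely many, with the same width. The second lemma above gives a variable order $\sigma$ such that every bag $\chi_\sigma(v')$ is contained in some bag $\chi(v)$. Since $f$ is monotone, $f(\chi_\sigma(v')) \le f(\chi(v)) \le \max_{t \in V(T)} f(\chi(t)) = \text{width}_f(\mH)$. Maximizing over $v'$ and then minimizing over $\sigma$ yields $A \le \text{width}_f(\mH)$.

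The only point needing care is whether the minimum over all of $\textsf{TD}(\mH)$ is attained. If one prefers to avoid that argument, the same reasoning works with an infimum: apply it to an arbitrary $\mT$ and take the infimum over $\mT$. Since the set of variable orders is finite, $A$ is a genuine minimum, and the chain of inequalities gives equality. Everything else is a direct combination of the two preceding lemmas with monotonicity of $f$.
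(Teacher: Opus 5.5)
Your proof is correct and follows the route the paper itself indicates: the paper gives no separate proof and obtains the lemma directly from the two preceding lemmas of \cite{FAQ}, namely that every $\mT_\sigma$ is a tree decomposition and that every tree decomposition is bag-wise dominated by some $\mT_\sigma$, combined with monotonicity of $f$, exactly as you do. Your remark about whether the minimum is attained is a welcome addition; your own domination argument also settles it, since it shows that the finitely many canonical decompositions already achieve the infimum over $\textsf{TD}(\mH)$.
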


\subsection{Variable Orders for CQs with Negation}

Consider now a $\cqneg$ with signed hypergraph $\smH=([n],\mE^+,\mE^-)$. For any $S\subseteq \mE^-$, we denote by $\mH^S$ the hypergraph $([n],\mE^+\cup S)$.
Let us fix a variable order $\sigma$ on $[n]$. Then, for any $S\subseteq \mE^-$,
we can construct the $\sigma$-canonical tree decomposition $\mT_\sigma^S$ of
the hypergraph $\mH^S$. We will denote by $\chi_\sigma^S(x_i)$ the (unique) bag
in $\mT_\sigma^S$ where vertex $x_i$ disappears. We note here that the tree structure of $\mT_\sigma^S$ for different $S$ is not the same, as the following example shows.

\begin{example}
\label{example:RunningExample}
    Consider the query
    \[Q(x,y,z,w) \gets R(x,y) \wedge S(z,w) \wedge \neg T(x,y,z,w)\]
    and the variable order $\sigma=(x,z,y,w)$. The table below shows the $\sigma$-canonical tree decompositions $\mT_\sigma^\emptyset$ and $\mT_\sigma^{\{T\}}$: each cell is the bag $\chi_\sigma^S(x_i)$ where the column picks the disappearing variable $x_i$ (shown underlined) and the row picks $S$. The arrows within each row indicate the parent edges of the corresponding tree decomposition (each variable's bag points to the bag of its parent in $\mT_\sigma^S$).

    \begin{center}
    \scalebox{0.8}{\begin{tikzpicture}[
    cell/.style={font=\small, inner xsep=3pt, inner ysep=2pt},
    arr/.style={->, >=Latex, semithick, gray!80, shorten <=3pt, shorten >=3pt},
]
\def\rl{1.45}     
\def\redge{9.30}  
\def\xC{2.45}\def\zC{4.65}\def\yC{6.65}\def\wC{8.50}
\def\rowA{-0.74}  
\def\rowB{-1.52}  
\def\hb{-0.36}    
\def\rmid{-1.13}  
\def\bot{-1.90}
\fill[gray!8]    (0,0)    rectangle (\redge,\hb);
\fill[gray!8]    (0,\hb)  rectangle (\rl,\bot);
\fill[pOrange!10] (\rl,\rmid) rectangle (\redge,\bot);
\draw[gray!25] (0,0) rectangle (\redge,\bot);
\draw[gray!25] (0,\hb)--(\redge,\hb);
\draw[gray!25] (0,\rmid)--(\redge,\rmid);
\draw[gray!25] (\rl,0)--(\rl,\bot);
\node[font=\small\bfseries] at (\xC,-0.18) {$x$};
\node[font=\small\bfseries] at (\zC,-0.18) {$z$};
\node[font=\small\bfseries] at (\yC,-0.18) {$y$};
\node[font=\small\bfseries] at (\wC,-0.18) {$w$};
\node[font=\small\bfseries, text=pGray]   at (0.725,\rowA) {$S=\emptyset$};
\node[font=\small\bfseries, text=pOrange] at (0.725,\rowB) {$S=\{T\}$};
\node[cell] (x0) at (\xC,\rowA) {$\{\underline{x},y\}$};
\node[cell] (z0) at (\zC,\rowA) {$\{\underline{z},w\}$};
\node[cell] (y0) at (\yC,\rowA) {$\{\underline{y}\}$};
\node[cell] (w0) at (\wC,\rowA) {$\{\underline{w}\}$};
\node[cell] (x1) at (\xC,\rowB) {$\{\underline{x},y,z,w\}$};
\node[cell] (z1) at (\zC,\rowB) {$\{y,\underline{z},w\}$};
\node[cell] (y1) at (\yC,\rowB) {$\{\underline{y},w\}$};
\node[cell] (w1) at (\wC,\rowB) {$\{\underline{w}\}$};
\draw[arr] (x0) to[bend left=14] (y0);
\draw[arr] (y0) -- (w0);
\draw[arr] (z0) to[bend right=14] (w0);
\draw[arr] (x1) -- (z1);
\draw[arr] (z1) -- (y1);
\draw[arr] (y1) -- (w1);
\end{tikzpicture}}
    \end{center}
\end{example}

\begin{restatable}{lemma}{bagscontaineddifferentslemma}
\label{lemma:BagsContainedDifferentS}
    Let $\mH=([n],\mE^+,\mE^-)$ be a signed hypergraph and let $\sigma$ be a variable order. Let $S_1 \subseteq S_2 \subseteq \mE^-$. Then, for every $i \in [n]$, we have
    $
    \chi_\sigma^{S_1}(x_i) \subseteq \chi_\sigma^{S_2}(x_i)
    $.
\end{restatable}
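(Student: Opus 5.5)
We argue by induction on the elimination steps, comparing the two elimination sequences for $\mH^{S_1}$ and $\mH^{S_2}$ run under the same order $\sigma$. Let $\mH^{S_1}_0,\dots,\mH^{S_1}_n$ and $\mH^{S_2}_0,\dots,\mH^{S_2}_n$ be these sequences, and let $U^{S_1}_j, U^{S_2}_j$ be the corresponding unions at step $j$. The key invariant is a domination relation between edge multisets. We say $\mH^{S_1}_j$ is \emph{dominated} by $\mH^{S_2}_j$ if every hyperedge $K$ of $\mH^{S_1}_j$ is contained in some hyperedge $K'$ of $\mH^{S_2}_j$. We will prove that this holds for every $j=0,\dots,n$. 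The lemma then follows by reading off the bags, since $\chi^{S}_\sigma(\sigma_j)=U^S_j$.

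The base case $j=0$ is immediate. Indeed, $\mE^+\cup S_1\subseteq \mE^+\cup S_2$ as multisets, so each edge is dominated by itself.

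For the inductive step, assume domination at step $j-1$ and eliminate $v=\sigma_j$. Any edge $K\in\partial_{\mH^{S_1}_{j-1}} v$ lies in some edge $K'$ of $\mH^{S_2}_{j-1}$. Since $v\in K\subseteq K'$, we get $K'\in \partial_{\mH^{S_2}_{j-1}} v$. Hence $U^{S_1}_j\subseteq U^{S_2}_j$, which is exactly the bag inclusion at $\sigma_j$. It remains to check that domination is preserved in $\mH^{S_1}_j$. There are two kinds of edges to handle.
\begin{itemize}
\item The new edge $U^{S_1}_j\setminus\{v\}$ is contained in the new edge $U^{S_2}_j\setminus\{v\}$.
\item An old edge $K$ of $\mH^{S_1}_{j-1}$ not containing $v$ has a dominating edge $K'$ in $\mH^{S_2}_{j-1}$. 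Either $K'$ does not contain $v$ and survives into $\mH^{S_2}_j$, or $K'$ contains $v$ and is removed. In the latter case, $K\subseteq K'\setminus\{v\}\subseteq U^{S_2}_j\setminus\{v\}$, because $v\notin K$.
\end{itemize}
Either way, $K$ is dominated in $\mH^{S_2}_j$.

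The only delicate point is the second case, where the dominating edge gets absorbed. Choosing domination rather than literal edge inclusion as the invariant is what makes the induction go through. Literal inclusion fails, because the structures of the two decompositions diverge, as Example~\ref{example:RunningExample} shows. Everything else is routine bookkeeping.
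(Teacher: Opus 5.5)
Your proposal is correct and follows essentially the same route as the paper's proof: induction on the elimination steps with the invariant that every hyperedge of $\mH_j^{S_1}$ is contained in some hyperedge of $\mH_j^{S_2}$. You derive $U_j^{S_1}\subseteq U_j^{S_2}$ first and then handle the new edge and the inherited edges, including the case where the dominating edge is absorbed into $U_j^{S_2}\setminus\{\sigma_j\}$, exactly as the paper does.
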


In particular, for any $S\subseteq \mE^-$, for any $i \in [n]$, we have $\chi_\sigma^\emptyset(x_i) \subseteq \chi_\sigma^S(x_i)$. In Example~\ref{example:RunningExample}, each column of the table satisfies $\chi_\sigma^\emptyset(x_i)\subseteq\chi_\sigma^{\{T\}}(x_i)$, as guaranteed by Lemma~\ref{lemma:BagsContainedDifferentS}.

\subsection{\texorpdfstring{\rev{The}}{The} Width Measure for CQs with Negation}

Given a hypergraph $\mH = ([n],\mE)$ and a subset of variables $X \subseteq [n]$, we define the {\em fractional edge cover number} of $X$ in $\mH$ as
 \[\rho_\mH^*(X) := \min \left\{ \sum_{K \in \mE} w_K \mid \forall K \in \mE: w_K \geq 0, \forall i \in X: \sum_{K : i \in K} w_K \geq 1 \right\}.\]
The fractional hypertree width of the hypergraph $\mH$ is
\[
\textsf{fhw}(\mH) = \min_{(T,\chi) \in \textsf{TD}(\mH)} \max_{t \in V(T)} \rho_\mH^*(\chi(t)).
\]
We can now define the measure of width for $\cqneg$ queries \rev{that we use}.

\begin{definition}[\rev{Fractional signed hyperorder width}]
\label{def:sfhow}
    Let $\smH=([n],\mE^+,\mE^-)$ be a signed hypergraph. The \rev{fractional signed hyperorder width} of $\smH$ is defined as
    \[
    \sfhow(\smH) := \min_{\sigma}\max_{S\subseteq \mE^-} \max_{i \in [n]} \rho_{\mH^S}^*(\chi_{\sigma}^S(\sigma_i))
    \]
\end{definition}
\rev{This is equivalent to the definition in~\cite{CapelliIrwin24,CapelliCarmeliIrwinSalvati26}, which eliminates $\sigma_i$ by removing it from every hyperedge instead of removing the hyperedges containing it. Both give the same bags $\chi_\sigma^S(\sigma_i)$.}

\begin{example}
    Consider the query $Q(x,y,z) \leftarrow R(x,y) \wedge T(y,z) \wedge \neg S(x,z)$ with variable order $\sigma=(x,y,z)$. When $S=\emptyset$, all the bags have width $1$; when $S=\{S\}$, $x$ gains the variable $z$ as a neighbor, meaning $\sfhow(Q)=3/2$.
    \begin{center}
    \scalebox{0.8}{\begin{tikzpicture}[
    scale=0.85,
    vtx/.style={circle, draw, inner sep=3pt, font=\small},
    pedge/.style={draw=black!70, thick},
    neg/.style={dashed, draw=pOrange, thick},
    bag/.style={draw, rounded corners=3pt, inner sep=4pt, font=\small,
                fill=gray!8, draw=pGray!60},
    obag/.style={draw, rounded corners=3pt, inner sep=4pt, font=\small,
                fill=pOrange!10, draw=pOrange!70},
    edge/.style={draw=gray!60, thick},
    slab/.style={font=\small\bfseries, anchor=east},
]

\node[vtx] (qx) at (0.0, -1.85) {$x$};
\node[vtx] (qy) at (1.3, -1.85) {$y$};
\node[vtx] (qz) at (0.65,-0.75) {$z$};
\draw[pedge] (qx)--(qy) node[midway, below, font=\scriptsize]{$R$};
\draw[pedge] (qy)--(qz) node[midway, right, font=\scriptsize]{$T$};
\draw[neg] (qx)--(qz) node[midway, left,  font=\scriptsize]{$S$};

\draw[gray!30, thick] (2.7,  0.12) -- (2.7, -2.75);
\draw[gray!30, thick] (7.3,  0.12) -- (7.3, -2.75);

\node[slab, text=pGray]   at (4.20, -0.22) {$S=\emptyset$};
\node[bag] (z0) at (5.00, -0.30) {$\{\underline{z}\}$};
\node[bag] (y0) at (5.00, -1.25) {$\{\underline{y},z\}$};
\node[bag] (x0) at (5.00, -2.20) {$\{\underline{x},y\}$};
\draw[edge] (z0)--(y0); \draw[edge] (y0)--(x0);

\node[slab, text=pOrange] at (9.35, -0.22) {$S=\{S\}$};
\node[obag] (z1) at (10.10, -0.30) {$\{\underline{z}\}$};
\node[obag] (y1) at (10.10, -1.25) {$\{\underline{y},z\}$};
\node[obag] (x1) at (10.10, -2.20) {$\{\underline{x},y,z\}$};
\draw[edge] (z1)--(y1); \draw[edge] (y1)--(x1);

\end{tikzpicture}}
    \end{center}
\end{example}

\subsubsection{\texorpdfstring{\rev{Comparison to Other Measures}}{Comparison to Other Measures}}\mbox{}\\
\rev{The fractional signed hyperorder width can be compared to other width measures. See~\cite{CapelliIrwin24,CapelliCarmeliIrwinSalvati26} for more on this. For completeness, we include proofs of the lemmas below in \apporfull{sec:app:missing-proofs-sec3}.}

\begin{restatable}[\rev{\cite{CapelliIrwin24,CapelliCarmeliIrwinSalvati26}}]{lemma}{sfhwequalsfhwlemma}
\label{lemma:sfhwequalsfhw}
Let $\smH=([n],\mE^+,\emptyset)$ be a signed hypergraph with no negative hyperedges, and $\mH=([n],\mE^+)$.
Then $\sfhow(\smH) = \textsf{fhw}(\mH)$.
\end{restatable}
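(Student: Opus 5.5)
When $\mE^- = \emptyset$, the only subset $S \subseteq \mE^-$ is $S = \emptyset$, so the plan is to collapse the maximum over $S$ in Definition~\ref{def:sfhow} and then apply Lemma~\ref{lemma:TransformToCanonicalTD} with $f = \rho^*_{\mH}$.

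\textbf{Step 1: collapse the outer maximum.} With $S=\emptyset$ we have $\mH^\emptyset = ([n], \mE^+) = \mH$. Moreover $\mT^\emptyset_\sigma$ is exactly the $\sigma$-canonical tree decomposition $\mT_\sigma$ of $\mH$, so $\chi^\emptyset_\sigma(\sigma_i) = \chi_\sigma(\sigma_i) = U_i$. The definition therefore reduces to
\[
\sfhow(\smH) = \min_\sigma \max_{i\in[n]} \rho^*_{\mH}\bigl(\chi_\sigma(\sigma_i)\bigr).
\]

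\textbf{Step 2: check monotonicity.} To apply Lemma~\ref{lemma:TransformToCanonicalTD} we need $f = \rho^*_\mH$ to be monotone. If $X \subseteq Y$, then any fractional cover of $Y$ also satisfies the covering constraints for every $i \in X$. Hence $\rho^*_\mH(X) \le \rho^*_\mH(Y)$. The values are also nonnegative, as the lemma requires. There is one small technicality: $\rho^*_\mH(X)$ is finite only when every vertex of $X$ lies in some edge. This holds for the bags here because the query is safe. If needed, one can restrict $f$ to such sets or allow the value $\infty$; neither choice affects the argument.

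\textbf{Step 3: apply the lemma.} Lemma~\ref{lemma:TransformToCanonicalTD} gives
\[
\min_\sigma \max_i \rho^*_\mH(\chi_\sigma(\sigma_i)) = \text{width}_{\rho^*_\mH}(\mH),
\]
and the right-hand side is $\textsf{fhw}(\mH)$ by definition. Combining this with Step 1 yields $\sfhow(\smH) = \textsf{fhw}(\mH)$.

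I expect no real obstacle. The only points needing care are the identification $\mT^\emptyset_\sigma = \mT_\sigma$, which is immediate since both are built by running the same elimination procedure on the same hypergraph, and the monotonicity check in Step 2.
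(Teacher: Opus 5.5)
Your proof is correct and is the same argument as the paper's. With $\mE^-=\emptyset$ the maximum over $S$ collapses to $S=\emptyset$, $\chi^\emptyset_\sigma(\sigma_i)=\chi_\sigma(\sigma_i)$, and Lemma~\ref{lemma:TransformToCanonicalTD} with $f=\rho^*_{\mH}$ finishes it; your explicit monotonicity check for $\rho^*_{\mH}$ is a detail the paper leaves implicit.
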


Recall that a signed hypergraph $\smH$ is {\em signed-acyclic} if for every $S \subseteq \mE^-$, the hypergraph $\mH^S$ is $\alpha$-acyclic.

\begin{restatable}[\rev{\cite{CapelliIrwin24,CapelliCarmeliIrwinSalvati26}}]{lemma}{signedacyclicwidthonelemma}
\label{lemma:signedacyclicwidth1}
    A signed hypergraph $\smH$ is signed-acyclic if and only if $\sfhow(\smH) = 1$.
\end{restatable}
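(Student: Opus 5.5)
We prove both directions through the standard characterization of $\alpha$-acyclicity by variable elimination: a hypergraph $\mH$ is $\alpha$-acyclic if and only if there is a variable order $\sigma$ such that every bag $\chi_\sigma(\sigma_i)$ of $\mT_\sigma$ is contained in a single hyperedge of $\mH$. This is the GYO / perfect-elimination-order characterization. We also use that $\rho^*_{\mH}(X) = 1$ for nonempty $X$ exactly when $X$ is contained in some hyperedge. Indeed, if no hyperedge contains $X$, then any fractional cover places weight at least $1$ on edges covering some fixed $i \in X$. Since no such edge covers all of $X$, some $j \in X$ must be covered by additional weight, so the total exceeds $1$. Hence $\sfhow(\smH)=1$ is equivalent to: there exists $\sigma$ such that for every $S \subseteq \mE^-$ and every $i$, the bag $\chi^S_\sigma(\sigma_i)$ lies inside a hyperedge of $\mE^+ \cup S$. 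Bags are nonempty, so the width is at least $1$ and equality is the only case to handle.

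\emph{($\Leftarrow$)} Suppose a common $\sigma$ exists. Then every $\mH^S$ has an elimination order whose bags are each inside an edge, so every $\mH^S$ is $\alpha$-acyclic, and $\smH$ is signed-acyclic.

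\emph{($\Rightarrow$)} This is the harder direction, because signed-acyclicity only supplies a possibly different elimination order for each $\mH^S$, and we need a single one. We build $\sigma$ greedily. Signed-acyclicity implies $\mH^{\mE^-}$ is $\alpha$-acyclic, and more strongly implies the existence of a vertex that is a \emph{nest point} of the whole signed hypergraph. That is, a vertex $v$ such that the hyperedges of $\mE^+\cup\mE^-$ containing $v$ form a chain under inclusion. We justify this with the known characterization that signed-acyclicity (equivalently, $\beta$-acyclicity of $\mE^+\cup\mE^-$ restricted appropriately) guarantees nest points. If it cannot simply be cited, we argue it directly: take $\mH^{\mE^-}$, and use the fact that every subhypergraph $\mH^S$ is $\alpha$-acyclic. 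This is where safety and the closure under subsets of $\mE^-$ enter. From it we derive, in the style of Brault-Baron, that a $\beta$-leaf exists. Locating this nest point is the main obstacle, and we expect to lean on the classical fact that $\beta$-acyclic hypergraphs always have nest points.

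Given a nest point $v$ and any $S$, the edges of $\mH^S$ containing $v$ still form a chain. Their union $U$ is therefore the largest edge in the chain, and it belongs to $\mE^+\cup S$. So the bag of $v$ is covered by a single edge with weight $1$. Eliminating $v$ replaces these edges by $U\setminus\{v\}$. We then check that the resulting signed hypergraph stays signed-acyclic, in the sense that each eliminated $\mH^S$ remains $\alpha$-acyclic. Eliminating a vertex whose bag lies in one edge preserves $\alpha$-acyclicity, since that step is a GYO reduction. We also check that nest points persist, because removing a vertex and shrinking edges preserves $\beta$-acyclicity. By induction we obtain one order $\sigma$ with all bags of width $1$ for every $S$, giving $\sfhow(\smH)=1$.
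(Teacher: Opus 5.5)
Your characterization of when $\rho^*$ equals $1$ and your ($\Leftarrow$) direction are fine and match the paper's argument.

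The gap is in ($\Rightarrow$). Your claim that signed-acyclicity yields a vertex whose incident edges in $\mE^+\cup\mE^-$ form a chain is false. Take $\mE^-=\emptyset$ and $\mE^+=\{\{a,b,c\},\{a,b\},\{b,c\},\{a,c\}\}$. The only completion is $\alpha$-acyclic, so $\smH$ is signed-acyclic and $\sfhow(\smH)=1$ (any order works). Yet every vertex lies in two incomparable edges, so your greedy construction cannot even start. Negation shows the same thing: $\mE^+=\{\{a,b,c\}\}$, $\mE^-=\{\{a,b\},\{b,c\},\{a,c\}\}$ is signed-acyclic, since every completion contains the covering edge $\{a,b,c\}$, but it has no such vertex. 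The error comes from appealing to $\beta$-acyclicity. Signed-acyclicity only constrains the families $\mE^+\cup S$. It says nothing about subfamilies of $\mE^+$, or about families that drop a covering positive edge. So $\mE^+\cup\mE^-$ need not be $\beta$-acyclic, and the classical nest-point lemma does not apply. Your persistence step (``removing a vertex preserves $\beta$-acyclicity'') likewise has no hypothesis to act on.

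What the first elimination step actually needs is weaker; by safety it is equivalent to $\chi^S_\sigma(v)$ lying in one edge of $\mH^S$ for every $S$:
\begin{itemize}
\item the positive edges through $v$ have an inclusion-maximum $P_v$;
\item the negative edges through $v$ that are not contained in $P_v$ all contain $P_v$ and form a chain.
\end{itemize}
Negative edges inside $P_v$ and non-maximal positive edges through $v$ are unconstrained. Put differently, your chain condition becomes the right one after you discard positive edges contained in other positive edges and negative edges contained in positive edges. This changes neither the $\alpha$-acyclicity of any completion nor any bag.

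To finish along your route, you would still need to prove two things:
\begin{itemize}
\item every signed-acyclic signed hypergraph has such a vertex;
\item what remains after eliminating it is again a single signed-acyclic signed hypergraph whose completions are the eliminated $\mH^S$ up to subsumed edges.
\end{itemize}
Knowing only that each eliminated $\mH^S$ is separately $\alpha$-acyclic does not give you a common next vertex. These two facts together are exactly the variable-order characterization of signed-acyclicity that the paper cites as known for this direction. Your proposal replaces that characterization with a false lemma instead of proving or correctly citing it.
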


\rev{The width $\sfhow$ can also be compared to} $\beta\textsf{-fhw}$~\cite{GottlobPichler01}.\footnote{To be precise, Gottlob and Pichler~\cite{GottlobPichler01} defined the weaker notion of $\beta$-hypertree width ($\beta\textsf{-hw}$), where covers are restricted to be integral. Since $\beta\textsf{-fhw} \leq \beta\textsf{-hw}$, we will use the fractional version instead for comparison.} This is defined by first choosing the subset $S \subseteq \mE^-$, and then picking the best variable order for $\mH^S$:
  \[
    \beta\textsf{-fhw}(\smH) := \max_{S\subseteq \mE^-} \textsf{fhw}(\mH^S) = \max_{S\subseteq \mE^-} \min_{\sigma} \max_i \rho_{\mH^S}^*(\chi_{\sigma}^S(\sigma_i)).
    \]
This is a critical difference: in $\beta\textsf{-fhw}$, we can pick a different variable order for each $S$, while in $\sfhow$, we need to pick one variable order that works well for all $S$.

\begin{restatable}[\rev{\cite{CapelliIrwin24,CapelliCarmeliIrwinSalvati26}}]{lemma}{betafhwlesfhwlemma}
\label{lemma:betafhwlesfhw}
 For any signed hypergraph $\smH$, $\beta\textsf{-fhw}(\smH) \leq \sfhow(\smH)$.
\end{restatable}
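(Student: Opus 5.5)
This is a min–max exchange. For each fixed $S \subseteq \mE^-$, the quantity $\textsf{fhw}(\mH^S)$ is a minimum over variable orders, so it is bounded above by its value at any particular order. The plan is to fix an order $\sigma^\star$ attaining the minimum in Definition~\ref{def:sfhow}, and then compare the two quantities for each $S$ at this single order.

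Concretely, let $w := \sfhow(\smH)$ and let $\sigma^\star$ be a minimizer, so that
\[
\max_{S\subseteq \mE^-}\max_{i\in[n]} \rho^*_{\mH^S}\bigl(\chi^S_{\sigma^\star}(\sigma^\star_i)\bigr) = w .
\]
Now fix an arbitrary $S \subseteq \mE^-$. The function $\rho^*_{\mH^S}$ is monotone: any fractional edge cover of a set $X$ is also a cover of every subset of $X$. Hence Lemma~\ref{lemma:TransformToCanonicalTD} applies to the hypergraph $\mH^S$ with $f=\rho^*_{\mH^S}$, and it gives
\[
\textsf{fhw}(\mH^S) = \min_\sigma \max_{i} \rho^*_{\mH^S}\bigl(\chi^S_\sigma(\sigma_i)\bigr).
\]
This identity is what justifies the second expression for $\beta\textsf{-fhw}$ in its definition. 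Alternatively, one could observe directly that $\mT^S_{\sigma^\star}$ is a tree decomposition of $\mH^S$ by the first lemma of \cite{FAQ}, and that its bags are exactly the sets $\chi^S_{\sigma^\star}(\sigma^\star_i)$.

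Either way, specializing the minimum to $\sigma=\sigma^\star$ yields
\[
\textsf{fhw}(\mH^S) \le \max_i \rho^*_{\mH^S}\bigl(\chi^S_{\sigma^\star}(\sigma^\star_i)\bigr) \le w .
\]
Taking the maximum over $S$ gives $\beta\textsf{-fhw}(\smH)\le w$, which is the claim. Equivalently, the lemma is an instance of the general inequality $\max_S \min_\sigma g \le \min_\sigma \max_S g$.

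I do not expect any real obstacle. The only point to check is that the bags used in the $\sfhow$ definition coincide with the bags of a genuine tree decomposition of $\mH^S$, so that the fractional cover of each bag is measured in the correct hypergraph $\mH^S$. This holds by construction: $\mT^S_\sigma$ is built from $\mH^S$ itself, and its node bags are exactly the disappearing bags $\chi^S_\sigma(\sigma_i)$.
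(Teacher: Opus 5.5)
Your proof is correct and is the paper's argument: both fix a minimizer $\sigma^\star$ of the $\sfhow$ expression and apply the max--min inequality $\max_S \min_\sigma f(\sigma,S) \le \min_\sigma \max_S f(\sigma,S)$ with $f(\sigma,S)=\max_i \rho^*_{\mH^S}(\chi^S_\sigma(\sigma_i))$. Your extra appeal to Lemma~\ref{lemma:TransformToCanonicalTD}, or to $\mT^S_{\sigma^\star}$ being a tree decomposition of $\mH^S$, only justifies the second form of the $\beta\textsf{-fhw}$ definition, which the paper takes as given.
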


\rev{The example below shows a query where the above inequality is strict.}

\begin{example} \label{example:NegTriangle}
    Consider the all-negative triangle query
    \[
    Q(x,y,z) \leftarrow R_1(x) \wedge R_2(y) \wedge R_3(z) \wedge \neg S_1(x,y) \wedge \neg S_2(y,z) \wedge \neg S_3(x,z).
    \]
    Here, $\sfhow(Q)=2$, but $\beta\textsf{-fhw}(Q)=3/2$. For any $\sigma$, the first variable is incident to two negative edges; taking those as $S$ forces the first bag to contain all three variables. The orange edges show the bad $S$ for each choice of first variable.
    \begin{center}
    \scalebox{0.8}{\begin{tikzpicture}[
    scale=0.85,
    vtx/.style={circle, draw, inner sep=2.5pt, font=\small},
    neg/.style={dashed, gray!50, thick},
    bad/.style={dashed, pOrange, very thick},
    slab/.style={font=\small\bfseries, anchor=east},
]

\def\colA{0.0}
\def\colB{5.6}
\def\colC{11.2}

\draw[gray!30, thick] (3.85, 1.10) -- (3.85, -0.80);
\draw[gray!30, thick] (9.45, 1.10) -- (9.45, -0.80);

\begin{scope}[xshift=\colA cm]
    \node[slab]                    at (0.80, 0.90) {$\sigma=(x,\ldots)$};
    \node[slab, text=pOrange] at (0.80,-0.30) {$S=\{S_1,S_3\}$};
    \node[vtx] (x) at (2.40,  0.90) {$x$};
    \node[vtx] (y) at (1.60, -0.30) {$y$};
    \node[vtx] (z) at (3.20, -0.30) {$z$};
    \draw[bad] (x)--(y) node[midway, left,  font=\scriptsize]{$S_1$};
    \draw[neg] (y)--(z) node[midway, below, font=\scriptsize]{$S_2$};
    \draw[bad] (x)--(z) node[midway, right, font=\scriptsize]{$S_3$};
\end{scope}

\begin{scope}[xshift=\colB cm]
    \node[slab]                    at (0.80, 0.90) {$\sigma=(y,\ldots)$};
    \node[slab, text=pOrange] at (0.80,-0.30) {$S=\{S_1,S_2\}$};
    \node[vtx] (x) at (2.40,  0.90) {$x$};
    \node[vtx] (y) at (1.60, -0.30) {$y$};
    \node[vtx] (z) at (3.20, -0.30) {$z$};
    \draw[bad] (x)--(y) node[midway, left,  font=\scriptsize]{$S_1$};
    \draw[bad] (y)--(z) node[midway, below, font=\scriptsize]{$S_2$};
    \draw[neg] (x)--(z) node[midway, right, font=\scriptsize]{$S_3$};
\end{scope}

\begin{scope}[xshift=\colC cm]
    \node[slab]                    at (0.80, 0.90) {$\sigma=(z,\ldots)$};
    \node[slab, text=pOrange] at (0.80,-0.30) {$S=\{S_2,S_3\}$};
    \node[vtx] (x) at (2.40,  0.90) {$x$};
    \node[vtx] (y) at (1.60, -0.30) {$y$};
    \node[vtx] (z) at (3.20, -0.30) {$z$};
    \draw[neg] (x)--(y) node[midway, left,  font=\scriptsize]{$S_1$};
    \draw[bad] (y)--(z) node[midway, below, font=\scriptsize]{$S_2$};
    \draw[bad] (x)--(z) node[midway, right, font=\scriptsize]{$S_3$};
\end{scope}

\end{tikzpicture}}
    \end{center}
\end{example}

\rev{The above example is of particular interest to us: even though $\sfhow(Q)=2$, we will see in Section~\ref{sec:multi-td} that, by using multiple variable orders, the query in Example~\ref{example:NegTriangle} can be enumerated with constant delay after $O(|I|^{3/2})$ preprocessing time (Theorem~\ref{theorem:kcycle}).}

\rev{Capelli et al.~\cite{CapelliIrwin24,CapelliCarmeliIrwinSalvati26} also show that $\sfhow \leq \textsf{nsw}$ when all positive atoms are unary, where $\textsf{nsw}$ is the nest-set width of Lanzinger~\cite{NestSetWidth}. This inequality can be strict: for negative $k$-cycle queries, $\textsf{nsw} \geq k-1$~\cite{NestSetWidth}, while $\sfhow = 2$.}

\section{Computing Conjunctive Queries with Negation}

In this section, we will prove the main result of the paper, by presenting an enumeration algorithm for Conjunctive Queries with Negation.%
\begin{restatable}{theorem}{mainthm}
\label{thm:main}
    Given a full $\cqneg$ $Q$ with a signed hypergraph $\smH$ and input $I$, there is an algorithm that enumerates $Q(I)$ with constant delay after a preprocessing phase with runtime $O(|I|^{\sfhow(\smH)})$.
\end{restatable}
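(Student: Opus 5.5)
We fix a variable order $\sigma$ attaining $\sfhow(\smH)$ and run a variable-elimination algorithm that processes the variables in \emph{reverse} elimination order. Concretely, we assign $\sigma_n$ first and $\sigma_1$ last. When $\sigma_i$ is about to be assigned, all variables of $\chi_\sigma^S(\sigma_i)\setminus\{\sigma_i\}$ are already bound, for every $S$. This holds because each such variable disappears later in $\sigma$, and by Lemma~\ref{lemma:BagsContainedDifferentS} the relevant bags only grow with $S$.

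For each $i$ we precompute a data structure $D_i$. Given a binding $\mathbf{a}$ of the earlier-assigned variables, restricted to $B_i:=\chi_\sigma^{\mE^-}(\sigma_i)\setminus\{\sigma_i\}$, it lists the values $v$ of $\sigma_i$ such that the partial assignment $(\mathbf{a},v)$ is \emph{extendible*}. By this we mean two things. First, $(\mathbf{a},v)$ satisfies all positive atoms whose variables are all assigned. Second, it violates no negative atom whose variables are all assigned. Moreover, $D_i$ must skip blocked values in $O(1)$ amortized time. The invariant we maintain is that every value offered extends to at least one full answer. This is the analogue of the ``no dangling'' property in the signed-acyclic algorithm of~\cite{SignedAcyclic}, and it yields constant delay by a standard depth-first enumeration.

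The key steps are as follows. \emph{(1) Positive part.} For each $S$, we materialize the bag relation on $\chi_\sigma^S(\sigma_i)$ as the join of the projections of the positive atoms together with the (complemented) constraints of $S$. This is done with a worst-case optimal join within time $|I|^{\rho^*_{\mH^S}(\chi^S_\sigma(\sigma_i))}\le |I|^{\sfhow}$. Only $2^{|\mE^-|}$ subsets are involved, which is a constant in data complexity. \emph{(2) Inclusion--exclusion over negative atoms.} To decide whether a value $v$ survives, we need, for each bound prefix, the count (or existence) of extensions avoiding all negative atoms. We express this as the signed sum $\sum_{S\subseteq\mE^-}(-1)^{|S|}\,\#\mathrm{ext}_{\mH^S}$, where $\#\mathrm{ext}_{\mH^S}$ counts extensions satisfying $\mE^+$ together with membership in every atom of $S$. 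Each term is a positive CQ over $\mH^S$, and all terms are evaluated bottom-up along the same order $\sigma$. That is, we compute counting messages $\mu_i^S$ on $\chi^S_\sigma(\sigma_i)\setminus\{\sigma_i\}$ by summing out $\sigma_i$. Because $\sigma$ is shared across all $S$, each message is a function of the earlier variables only, and its cost is bounded by the AGM bound above. The survivor count for $(\mathbf{a},v)$ is the signed sum of the messages at the child, and a value is kept iff this count is positive. \emph{(3) Skip structure.} For each binding of $B_i$ restricted to the positive bag, we store the candidate values of $\sigma_i$ in a list. Then, mirroring the skip links of~\cite{SignedAcyclic}, we store pointers that jump from a value to the next value with a positive survivor count under the current binding. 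Since the bindings of $B_i$ that actually matter are tuples of the materialized bag relations, the total size is $O(|I|^{\sfhow})$.

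The main obstacle is step (3): making the ``next surviving value'' query $O(1)$ when the binding of $B_i$ ranges over a set that may involve variables coming only from negative atoms. A priori these bindings are too many to tabulate. I would try to handle this in two moves. First, I would key skip links only by the binding of the positive bag $\chi_\sigma^{\emptyset}(\sigma_i)\setminus\{\sigma_i\}$. Second, I would resolve the extra variables in $\chi^S_\sigma(\sigma_i)$ by a nested argument over $S$ ordered by inclusion. The idea is that a value $v$ is blocked only if some $N\in\mE^-$ is fully instantiated, and the number of consecutive blocked candidates is bounded by charging them to tuples of $\bigcup_S$ bag relations, all of which are already paid for in preprocessing. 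If this charging fails to give worst-case constant delay, I would fall back to precomputing, for each tuple of the $\mE^-$-completed bag relation, the next unblocked successor. That precomputation is still within the $O(|I|^{\sfhow})$ budget by Definition~\ref{def:sfhow}.
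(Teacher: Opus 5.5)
\paragraph{Verdict.} There is a genuine gap, and it is exactly where you put the ``main obstacle'': step (3) is the whole technical content of the theorem, and none of your three suggestions closes it.

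\paragraph{Why your three suggestions fail.}
\begin{enumerate}
\item Keying jump pointers only by the positive bag $\chi_\sigma^\emptyset(\sigma_i)\setminus\{\sigma_i\}$ cannot work. Which values of $\sigma_i$ are blocked depends on the negative-only variables in $\chi_\sigma^{\mE^-}(\sigma_i)$.
\item Tabulating next-survivor pointers per full binding of $B_i$ is too expensive, and your claim that the relevant bindings are ``already paid for'' is false. Take the paper's query $Q(x,y,z)\gets R(x)\wedge T(y,z)\wedge\neg A(x,y)\wedge\neg B(x,z)$ with $\sigma=(x,y,z)$, so $\sfhow=3/2$. Let $A(\cdot,y_0)$ block every other value of $R$, and let $T$ pair $y_0$ with $\Theta(|I|)$ values $z$. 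Then a table indexed by bindings $(y,z)$ of $B_1$ has $\Theta(|I|^2)$ nontrivial jump entries.
\item Storing successors only for tuples of the $\mE^-$-completed bag relation $R\bowtie T\bowtie A\bowtie B$ misses mixed runs. In such a run some values are blocked only by $A(\cdot,y)$ and others only by $B(\cdot,z)$. Those values are not in that relation, yet the next survivor after them depends on the whole pair $(y,z)$.
\end{enumerate}
Also, ``$O(1)$ amortized'' skipping does not give constant delay, which is a worst-case bound between consecutive outputs.

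\paragraph{The missing idea.} The paper uses provenance-indexed skip links with a minimal-blocking-set counting argument.
\begin{itemize}
\item For each nonempty $S$, processed in order of increasing $|S|$, an $S$-link is stored in the bucket given by $\chi_\sigma^\emptyset(\sigma_i)\setminus\{\sigma_i\}$. Its label is over $\chi_\sigma^{\mA_S}(\sigma_i)\setminus\chi_\sigma^\emptyset(\sigma_i)$, and it covers a run of values each blocked by some atom of $S$.
\item The keys $t$ are restricted to the join $L^S$ over $\mH_i^S[\chi_\sigma^{\mA_S}(\sigma_i)\setminus\{\sigma_i\}]$.
\item If $S$ is a minimal blocking set for a maximal run, then every $R\in S$ uniquely witnesses some value in that run. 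Hence the number of $S$-runs for key $t$ is at most $\min_{R\in S}\deg_{P^R}(\sigma_i\mid t)$.
\item By Lemma~\ref{lemma:lightestextension}, the sum of these minima over $t\in L^S$ is $O(|I|^{\rho^*_{\mH^{\mA_S}}(\chi_\sigma^{\mA_S}(\sigma_i))})$. In the example above this bounds the merged $\{A,B\}$-links by $\sum_{(y,z)\in T}\min(\deg_A(x\mid y),\deg_B(x\mid z))=O(|I|^{3/2})$.
\item The runs are built by extending the points of $P^{R^*}$ with already-built $S'$-links, $S'\subsetneq S$, and then merging.
\item At enumeration time one does $2^{|\mE^-|}$ hash lookups and follows the longest consistent link. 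By Lemma~\ref{lemma:SemanticLemmaBackward}, this lands exactly on the next survivor.
\end{itemize}

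\paragraph{On your counting step.} The inclusion--exclusion counting of step (2) is not needed and does not help with delay. The paper instead passes every full-cover link up as a new negative atom $U_i^S$ of the reduced query. Lemma~\ref{lemma:ReductionCorrect} then guarantees that every offered value extends to an answer. Counting along $\sigma$ is essentially the direct-access route of Capelli et al. With prefix counts and binary search it gives polylogarithmic delay, not constant delay.
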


The algorithm works in a variable-at-a-time fashion: given a variable order $\sigma$, we create a sequence of \textit{data structures} -- one per variable -- that enumerate values over the next variable in the order, given a tuple over the previous variables. The data structure is similar to the one used in previous work for signed-acyclic queries~\cite{SignedAcyclic}; it consists of a hash table mapping to linked lists of values of the next variable. Additionally, there are \textit{skip links} over the linked lists, which enable skipping values which are prohibited because of negative atoms.

\subsection{Enumerating via Reduction to a Smaller Query}

To facilitate the variable-at-a-time enumeration, we define a reduction which, given a query over $\smH$, creates a reduced query with one fewer variable. This way, given a variable order $\sigma$, we obtain a sequence of signed hypergraphs. This is analogous to the sequence of hypergraphs in Section~\ref{sec:varorderrefresh}, but for signed hypergraphs.

Consider a signed hypergraph $\smH$ and let $\sigma$ be a variable order. We define $\smH_1=\smH$, while $\smH_{i+1}=(V_{i+1},\mE_{i+1}^+,\mE_{i+1}^-)$ is obtained inductively from $\smH_{i}=(V_{i}, \mE_{i}^+, \mE_{i}^-)$ as follows:
\begin{enumerate}
    \item remove the vertex $\sigma_{i}$;
    \item remove any hyperedge $K$ such that $\sigma_{i} \in K$;
    \item add a positive hyperedge $U_i^\emptyset \setminus\{\sigma_{i}\}$, where $U_i^\emptyset := \bigcup_{K \in \partial_{\mH_{i}^\emptyset} \sigma_{i}} K$;
    \item for each non-empty $S\subseteq \{K \in \mE_{i}^- : \sigma_i \in K\}$: add a negative hyperedge $U_i^S \setminus\{\sigma_{i}\}$, where $U_i^S := \bigcup_{K \in \partial_{\mH_{i}^S} \sigma_{i}} K$.
\end{enumerate}

Note that $\smH_{n+1}$ is well-defined and has no vertices.
Further, we maintain on each $K \in \mE_i^-$ an \textit{annotation} $\mA_K \subseteq \mE^-$. Annotations are defined recursively: an input $K\in \mE^-$ has annotation $\mA_K=\{K\}$, while $U_i^S$ has annotation $\mA_{U_i^S}=\cup_{K\in S}\mA_K$. Using our notation so far, $\mH_i^S$ is the hypergraph with vertices $V_i$, and edges $\mE_i^+ \cup S$, where $S \subseteq \mE_i^-$.

\begin{example}
    Let $\smH$ be a signed hypergraph with variables $V=\{x,y,w,z\}$, a unary positive atom for each variable, and negative atoms $\mE^-=\{{\color{pRed}R(x,y,w)},\, {\color{pBlue}S(x,z,w)},\, {\color{pOrange}T(x,z)}\}$. Consider a variable order $\sigma$ with $\sigma_1 = x$. The figure below visualizes one elimination step, from $\smH_1$ to $\smH_2$, alongside the hyperedges and annotations of each negative atom in the two queries:
    \begin{center}
    \small
    \resizebox{\textwidth}{!}{\begin{tabular}{@{}c@{\hspace{1em}}|@{\hspace{1em}}c@{}}
%
\begin{tikzpicture}[
    baseline=(current bounding box.center),
    >=stealth,
    var/.style ={circle, draw=black, fill=white, line width=0.5pt,
                 inner sep=1pt, minimum size=14pt, font=\scriptsize},
    elim/.style={circle, draw=gray!50, fill=gray!12, line width=0.4pt,
                 inner sep=1pt, minimum size=14pt, font=\scriptsize\itshape, text=gray!55},
    title/.style={font=\scriptsize\bfseries, anchor=south},
    ringR/.style ={draw=pRed,    line width=0.7pt, rounded corners=6pt},
    ringS/.style ={draw=pBlue,   line width=0.7pt, rounded corners=6pt},
    ringT/.style ={draw=pOrange,  line width=0.7pt, rounded corners=5pt},
    ringRS/.style={draw=pPurple, line width=0.7pt, rounded corners=8pt},
]

\pgfmathsetmacro{\sx}{0.70}     
\pgfmathsetmacro{\sy}{0.70}     
\pgfmathsetmacro{\rNode}{0.18}  

\pgfmathsetmacro{\ra}{0.12}
\pgfmathsetmacro{\rb}{0.24}
\pgfmathsetmacro{\rc}{0.36}
\pgfmathsetmacro{\rd}{0.44}
\pgfmathsetmacro{\na}{0.16}
\pgfmathsetmacro{\nb}{0.38}

\pgfmathsetmacro{\panelSep}{4.2}

\begin{scope}[xshift=0cm]
  \node[title] at (0, 1.42) {$\smH_1=\smH$};

  \coordinate (TL) at (-\sx,  \sy);
  \coordinate (TR) at ( \sx,  \sy);
  \coordinate (BL) at (-\sx, -\sy);
  \coordinate (BR) at ( \sx, -\sy);

  \pgfmathsetmacro{\OX}{\sx + \rNode + \rc}
  \pgfmathsetmacro{\OY}{\sy + \rNode + \rc}
  \pgfmathsetmacro{\NXR}{\sx - \rNode - \nb}
  \pgfmathsetmacro{\NYR}{\sy - \rNode - \nb}
  \draw[ringR]
        (-\OX,  \OY) --
        ( \NXR, \OY) --
        ( \NXR, \NYR) --
        ( \OX,  \NYR) --
        ( \OX, -\OY) --
        (-\OX, -\OY) -- cycle;

  \pgfmathsetmacro{\oX}{\sx + \rNode + \rb}
  \pgfmathsetmacro{\oY}{\sy + \rNode + \rb}
  \pgfmathsetmacro{\NXS}{-\sx + \rNode + \na}
  \pgfmathsetmacro{\NYS}{-\sy + \rNode + \na}
  \draw[ringS]
        (-\oX,  \oY) --
        ( \oX,  \oY) --
        ( \oX, -\oY) --
        ( \NXS,-\oY) --
        ( \NXS,\NYS) --
        (-\oX, \NYS) -- cycle;

  \pgfmathsetmacro{\tL}{-\sx - \rNode - \ra}
  \pgfmathsetmacro{\tR}{ \sx + \rNode + \ra}
  \pgfmathsetmacro{\tU}{ \sy + \rNode + \ra}
  \pgfmathsetmacro{\tD}{ \sy - \rNode - \ra}
  \draw[ringT] (\tL,\tD) rectangle (\tR,\tU);

  \node[var] at (TL) {$x$};
  \node[var] at (TR) {$z$};
  \node[var] at (BL) {$y$};
  \node[var] at (BR) {$w$};
\end{scope}

\pgfmathsetmacro{\arrL}{\sx + \rNode + \rc + 0.20}
\pgfmathsetmacro{\arrR}{\panelSep - \sx - \rNode - \rd - 0.20}
\draw[->, line width=0.7pt, gray!70]
      (\arrL, 0.0) -- node[above, font=\tiny, text=black]{eliminate $x$} (\arrR, 0.0);

\begin{scope}[xshift=\panelSep cm]
  \node[title] at (0, 1.42) {$\smH_2$};

  \coordinate (TL) at (-\sx,  \sy);
  \coordinate (TR) at ( \sx,  \sy);
  \coordinate (BL) at (-\sx, -\sy);
  \coordinate (BR) at ( \sx, -\sy);

  \pgfmathsetmacro{\OX}{\sx + \rNode + \rd}
  \pgfmathsetmacro{\OY}{\sy + \rNode + \rd}
  \pgfmathsetmacro{\NX}{-\sx + \rNode + \nb}
  \pgfmathsetmacro{\NY}{ \sy - \rNode - \nb}
  \draw[ringRS]
        (-\OX, -\OY) --
        ( \OX, -\OY) --
        ( \OX,  \OY) --
        ( \NX,  \OY) --
        ( \NX,  \NY) --
        (-\OX,  \NY) -- cycle;

  \pgfmathsetmacro{\oX}{\sx + \rNode + \ra}
  \pgfmathsetmacro{\bU}{-\sy + \rNode + \ra}
  \pgfmathsetmacro{\bD}{-\sy - \rNode - \ra}
  \draw[ringR] (-\oX, \bD) rectangle ( \oX, \bU);

  \pgfmathsetmacro{\sL}{ \sx - \rNode - \rb}
  \pgfmathsetmacro{\sR}{ \sx + \rNode + \rb}
  \pgfmathsetmacro{\sUu}{ \sy + \rNode + \rb}
  \pgfmathsetmacro{\sDd}{-\sy - \rNode - \rb}
  \draw[ringS] (\sL, \sDd) rectangle (\sR, \sUu);

  \pgfmathsetmacro{\Rut}{\rNode + \ra + 0.06}
  \draw[ringT] (TR) circle [radius=\Rut];

  \node[elim] at (TL) {$x$};
  \node[var]  at (TR) {$z$};
  \node[var]  at (BL) {$y$};
  \node[var]  at (BR) {$w$};
\end{scope}

\end{tikzpicture}
    &
{\renewcommand{\arraystretch}{0.95}%
\begin{tabular}[c]{@{}ll@{\hspace{0.8em}}!{\color{black!12}\vrule width 0.3pt}@{\hspace{0.8em}}ll@{\hspace{0.8em}}!{\color{black!12}\vrule width 0.3pt}@{\hspace{0.8em}}ll@{}}
\multicolumn{2}{c}{$\smH_1=\smH$} & \multicolumn{4}{c}{$\smH_2$} \\
\cmidrule(r){1-2}\cmidrule(l){3-6}
\multicolumn{1}{@{}l}{Hyperedges} & \multicolumn{1}{l@{\hspace{0.8em}}}{$\mA$} & \multicolumn{1}{@{\hspace{0.8em}}l}{Hyperedges} & \multicolumn{1}{l@{\hspace{0.8em}}}{$\mA$} & \multicolumn{1}{@{\hspace{0.8em}}l}{Hyperedges} & $\mA$ \\
\midrule
{\color{pRed}    $(x,y,w)$} & {\color{pRed}    $\{R\}$} & {\color{pRed}    $(y,w)$}   & {\color{pRed}    $\{R\}$}   & {\color{pPurple} $(y,z,w)$} & {\color{pPurple} $\{R,T\}$}   \\
{\color{pBlue}   $(x,z,w)$} & {\color{pBlue}   $\{S\}$} & {\color{pBlue}   $(z,w)$}   & {\color{pBlue}   $\{S\}$}   & {\color{pBlue}   $(z,w)$}   & {\color{pBlue}   $\{S,T\}$}   \\
{\color{pOrange}  $(x,z)$}   & {\color{pOrange}  $\{T\}$} & {\color{pOrange}  $(z)$}     & {\color{pOrange}  $\{T\}$}   & {\color{pPurple} $(y,z,w)$} & {\color{pPurple} $\{R,S,T\}$} \\
                                    &                                   & {\color{pPurple} $(y,z,w)$} & {\color{pPurple} $\{R,S\}$} &                                     &                                       \\
\end{tabular}}

    \end{tabular}}
    \end{center}
    In general, there can be multiple atoms with the same hyperedge variables and same annotations.
\end{example}

We are now ready to present the main algorithm (Algorithm~\ref{algorithm:MainAlgorithm}), which works in two phases: a preprocessing phase and an enumeration phase. In the preprocessing phase, we use a variable order $\sigma$ witnessing $\sfhow(\smH)$, starting from $\sigma_1$ to $\sigma_n$, to create a sequence of enumeration data structures $\mathbb{D}_1, \ldots, \mathbb{D}_n$. In step $i$, we call $\textproc{BuildEnumDS}$, which builds $\mathbb{D}_i$ given $\smH_i$ and $\sigma_i$. It also creates the next reduced query $\smH_{i+1}$ with a corresponding input. In the enumeration phase, we use the enumeration data structures, following the reverse variable order, to enumerate the output tuples.

\begin{algorithm}[H]
\caption{Main Algorithm}
\label{algorithm:MainAlgorithm}
\footnotesize
\begin{algorithmic}[1]
\Procedure{SolveQuery}{$\smH, \sigma$}
\For{$i=1, \dots, n$} \Comment{Preprocessing phase}
    \State $(\mathbb{D}_i, \smH_{i+1}) \gets \textproc{BuildEnumDS}(\smH_i, \sigma_i)$
\EndFor
\For{$t_n \in \textproc{Enumerate}(\mathbb{D}_n, ())$} \Comment{Enumeration phase}
\For{$t_{n-1} \in \textproc{Enumerate}(\mathbb{D}_{n-1}, t_n)$}
\State$\dots$
\For{$t_1 \in \textproc{Enumerate}(\mathbb{D}_1, t_{2} \cdot t_3 \cdot \dots \cdot t_n)$}
\State $\textproc{Emit}(t_{1} \cdot\dots\cdot t_n)$
\EndFor
\EndFor
\EndFor
\EndProcedure
\end{algorithmic}
\end{algorithm}

\paragraph{$\cqneg$ with Projections} Our algorithm can be extended to $\cqneg$ queries that are not full by considering only variable orders $\sigma$ consistent with the free variables $F$: this means that $F$ forms a suffix of $\sigma$. In this case, we modify Algorithm~\ref{algorithm:MainAlgorithm} to stop the enumeration once all variables in $F$ have been enumerated. Since the variable order is consistent with $F$, we will never need to enumerate a variable that must be projected out. The formula for the width measure is the same as $\sfhow$, with the only difference that $\min_\sigma$ does not go over all variable orders, but only the ones consistent with $F$.

\subsection{A Data Structure for Constant-Delay Enumeration}

We now describe the enumeration data structure $\mathbb{D}_i$ for enumerating a variable $\sigma_{i} = x$ after we have enumerated the suffix $\Sigma_i = \{\sigma_{i+1}, \ldots, \sigma_n\}$. Recall that at the point we want to enumerate $x$, we have fixed values for the suffix variables.

Given a tuple $t_X$ over variables $X \subseteq [n]$ and a tuple $t_Y$ over $Y \subseteq [n]$, where $X \cap Y = \emptyset$, we will sometimes denote by $t_X \cdot t_Y$ the concatenated tuple over variables $X \cup Y$. Given two tuples $t_X$ over variables $X$ and $t_Y$ over $Y$, we say that $t_X$ and $t_Y$ are \textit{consistent}, written $t_X \sim t_Y$, if for every variable $v \in X \cap Y$, $t_X.v = t_Y.v$, i.e., they assign the same value to every variable they have in common.

\paragraph{Hash tables of linked lists}
Let $R_K$ be a relation over variables $K$, with $\{x\} \subseteq K \subseteq \Sigma_i \cup \{x\}$ (we will see in Section~\ref{subsec:buildenumds} how $R_K$ is computed). We will store $R_K$ as a hash table: the hash keys are the tuples in $\pi_{K\setminus\{x\}}R_K$, and a bucket $b\in \pi_{K\setminus\{x\}}R_K$ has a linked list $\mL$ of the values over $x$ such that $b \cdot x$ is a tuple in $R_K$. The linked list starts at the bucket key $b$ and ends with a sentinel node. For a node $a \in \mL$, we write $\textsf{prev}_{\mathbb{D}}(a)$ and $\textsf{next}_{\mathbb{D}}(a)$ for its predecessor and successor. The predecessor of the first value is $b$, and the successor of the last value is the sentinel. For a non-empty set $A$ of nodes in $\mL$, we write $\min_{\mathbb{D}} A$ and $\max_{\mathbb{D}} A$ for the first and last node of $A$ in the linked-list order.

\paragraph{Skip links} Let $\mL$ be one of the linked lists in the hash table (say bucket $b$), which we view as a set of linked list nodes. A \textit{skip link} $s$ is a quadruple $(s_{start}, s_{end}, s_{label}, \textsf{prov}(s))$, where $s_{start}, s_{end} \in \mL$ ($s_{start}$ may be the bucket key and $s_{end}$ may be the sentinel), $s_{label}$ is a tuple over some set of variables $Y \subseteq \Sigma_i \setminus K$, and $\textsf{prov}(s)$ is the \textit{provenance}: the set of negative atoms that were used in creating $s$ (we omit $\textsf{prov}(s)$ when not relevant). Semantically, a skip link from $s_{start}$ to $s_{end}$ with label $s_{label}$ indicates that given a tuple $t$ over the suffix $\Sigma_i$ consistent with $s_{label}$ and the bucket $b$, the extensions of $t$ to $x$-values \textit{strictly} between $s_{start}$ and $s_{end}$ in the linked list are not part of any output tuple, and should not be enumerated (thus can be skipped), because each such extension is \textit{blocked} from being an output by some negative atom in $\textsf{prov}(s)$. We say $s$ \textit{covers} a set $A \subseteq \mL$ of consecutive nodes in the linked list if $s_{start}$ is strictly before every value in $A$ and $s_{end}$ is strictly after every value in $A$.

\begin{example}
\label{example:enumDS}
    Consider the query $Q(x,y,z,w) \gets R(x,y) \wedge \neg T(x,y,z) \wedge \neg W(x,w) \wedge {\color{gray} U(z) \wedge V(w)}$ ($U$ and $V$ are atoms needed for safety but are not of interest in the example). The following enumeration data structure for $x$ with suffix $(y,z,w)$ will be constructed. The color of the skip links indicates the provenance.
    \begin{center}
    \resizebox{0.8\textwidth}{!}{
\begin{tikzpicture}[
    thd/.style={draw=gray!40, fill=gray!15, minimum width=0.75cm,
                minimum height=0.40cm, font=\scriptsize\bfseries, inner sep=0, outer sep=0},
    tbl/.style={draw=gray!35, fill=white, minimum width=0.75cm,
                minimum height=0.40cm, font=\scriptsize, inner sep=0, outer sep=0},
    cell/.style={draw=gray!65!black, fill=white, minimum width=1.0cm,
                 minimum height=0.48cm, font=\small, inner sep=2pt},
    endcell/.style={draw=gray!65!black, fill=gray!12, minimum width=0.6cm,
                 minimum height=0.48cm, font=\scriptsize\sffamily, text=gray!55!black, inner sep=1pt},
    ptr/.style={->, >=stealth, line width=0.5pt, gray!60!black},
    hptr/.style={->, >=stealth, thin, dashed, gray!50},
    sk/.style={->, >=stealth, semithick, rounded corners=4pt},
]
\colorlet{cw}{pBlue}
\colorlet{cz}{pOrange}
\colorlet{czw}{pPurple}
\colorlet{grp}{pGray}

\node[font=\scriptsize\itshape] at (0.75, 0.55){$R(x,y)$};
\node[thd] at (0.375, 0.20){$x$};    \node[thd] at (1.125, 0.20){$y$};
\node[tbl] at (0.375,-0.20){$a_1$};  \node[tbl] at (1.125,-0.20){$b_1$};
\node[tbl] at (0.375,-0.60){$a_2$};  \node[tbl] at (1.125,-0.60){$b_1$};
\node[tbl] at (0.375,-1.00){$a_3$};  \node[tbl] at (1.125,-1.00){$b_1$};
\node[tbl] at (0.375,-1.50){$a_1$};  \node[tbl] at (1.125,-1.50){$b_2$};
\node[tbl] at (0.375,-1.90){$a_2$};  \node[tbl] at (1.125,-1.90){$b_2$};
\node[tbl] at (0.375,-2.30){$a_4$};  \node[tbl] at (1.125,-2.30){$b_2$};
\node[tbl] at (0.375,-2.70){$a_5$};  \node[tbl] at (1.125,-2.70){$b_2$};
\node[tbl] at (0.375,-3.10){$a_6$};  \node[tbl] at (1.125,-3.10){$b_2$};
\draw[grp, dashed, rounded corners=2pt] (-0.06, 0.02) rectangle (1.56,-1.21);
\draw[grp, dashed, rounded corners=2pt] (-0.06,-1.29) rectangle (1.56,-3.32);

\node[font=\scriptsize\itshape] at (3.125, 0.55){$\neg T(x,y,z)$};
\node[thd] at (2.375, 0.20){$x$};    \node[thd] at (3.125, 0.20){$y$};    \node[thd] at (3.875, 0.20){$z$};
\node[tbl] at (2.375,-0.20){$a_4$};  \node[tbl] at (3.125,-0.20){$b_2$};  \node[tbl] at (3.875,-0.20){$c_1$};
\node[tbl] at (2.375,-0.60){$a_5$};  \node[tbl] at (3.125,-0.60){$b_2$};  \node[tbl] at (3.875,-0.60){$c_1$};
\draw[cz, dashed, rounded corners=2pt] (2.00,-0.02) rectangle (4.25,-0.82);

\node[font=\scriptsize\itshape] at (2.75,-1.12){$\neg W(x,w)$};
\node[thd] at (2.375,-1.50){$x$};    \node[thd] at (3.125,-1.50){$w$};
\node[tbl] at (2.375,-1.90){$a_2$};  \node[tbl] at (3.125,-1.90){$d_1$};
\draw[cw, dashed, rounded corners=2pt] (2.00,-1.72) rectangle (3.50,-2.08);

\draw[gray!40, fill=gray!10] (5.00, 0.22) rectangle (6.50,-0.22);
\draw[grp, dashed, rounded corners=2pt] (5.00, 0.22) rectangle (6.50,-0.22);
\node[font=\scriptsize] at (5.75, 0.00){$b_1$};

\node[font=\normalsize,gray!40] at (5.75,-1.10){$\vdots$};

\draw[gray!40, fill=gray!10] (5.00,-1.98) rectangle (6.50,-2.42);
\draw[grp, dashed, rounded corners=2pt] (5.00,-1.98) rectangle (6.50,-2.42);
\node[font=\scriptsize] at (5.75,-2.20){$b_2$};

\node[cell](a1) at ( 7.75, 0.00){$a_1$};
\node[cell](a2) at ( 9.25, 0.00){$a_2$};
\node[cell](a3) at (10.75, 0.00){$a_3$};
\draw[hptr](6.50, 0.00)--(a1.west);
\draw[ptr](a1)--(a2);  \draw[ptr](a2)--(a3);
\node[endcell](ae) at (12.05, 0.00){end}; \draw[ptr](a3)--(ae);

\draw[sk,cw] (a1.north)--(7.75,0.65)--(10.75,0.65)--(a3.north);
\node[font=\scriptsize,cw,anchor=south,inner sep=1.5pt] at (9.25,0.65){$d_1$};

\node[cell](b1) at ( 7.75,-2.20){$a_1$};
\node[cell](b2) at ( 9.25,-2.20){$a_2$};
\node[cell](b3) at (10.75,-2.20){$a_4$};
\node[cell](b4) at (12.25,-2.20){$a_5$};
\node[cell](b5) at (13.75,-2.20){$a_6$};
\draw[hptr](6.50,-2.20)--(b1.west);
\draw[ptr](b1)--(b2);  \draw[ptr](b2)--(b3);  \draw[ptr](b3)--(b4);  \draw[ptr](b4)--(b5);
\node[endcell](be) at (15.05,-2.20){end}; \draw[ptr](b5)--(be);

\draw[sk,cw]  ([xshift=0.15cm]b1.north)--(7.90,-1.55)--(10.75,-1.55)--(b3.north);
\node[font=\scriptsize,cw,anchor=south,inner sep=1.5pt]  at (8.575,-1.55){$d_1$};
\draw[sk,cz]  (b2.north)--(9.25,-1.20)--(13.60,-1.20)--([xshift=-0.15cm]b5.north);
\node[font=\scriptsize,cz,anchor=south,inner sep=1.5pt]  at (11.50,-1.20){$c_1$};
\draw[sk,czw] ([xshift=-0.15cm]b1.north)--(7.60,-0.85)--(13.90,-0.85)--([xshift=0.15cm]b5.north);
\node[font=\scriptsize,czw,anchor=south,inner sep=1.5pt] at (10.75,-0.85){$c_1d_1$};

\draw[sk,cw]  (5.00,-3.20)--(5.70,-3.20); \node[anchor=west,font=\scriptsize,cw]  at (5.78,-3.20){Skip $\{W\}$};
\draw[sk,cz]  (7.60,-3.20)--(8.30,-3.20); \node[anchor=west,font=\scriptsize,cz]  at (8.38,-3.20){Skip $\{T\}$};
\draw[sk,czw] (10.20,-3.20)--(10.90,-3.20); \node[anchor=west,font=\scriptsize,czw] at (10.98,-3.20){Skip $\{T,W\}$};

\end{tikzpicture}}
    \end{center}
\end{example}

For each node in a linked list, we have local hash tables storing the skip links that start/end at that node, where the key is the skip link label. In this way, we can find skip links consistent with the current suffix in constant time, by checking for relevant subsets of the suffix as keys in the hash table.
Given a suffix tuple $(a_{i+1}, \ldots, a_n)$, we use the data structure as follows. First, we use the hash table to find the (unique) bucket $b$ consistent with the suffix tuple, giving a linked list $\mL$. Then, at every step of enumeration, we check all skip links starting from the current node $a$ in $\mL$ consistent with the suffix tuple, and follow the skip link with the longest skip. If no such skip link exists, we go to $\textsf{next}_\mathbb{D}(a)$. In Example~\ref{example:enumDS}, the \textproc{Enumerate} iterator on suffix $(b_2,c_2,d_1)$ enumerates $a_1\rightarrow a_4 \rightarrow a_5 \rightarrow a_6$, following the $d_1$ skip link. On suffix $(b_2,c_1,d_1)$, it enumerates $a_1\rightarrow a_6$, following the $c_1 d_1$ skip link.

We will call two skip links $s$ and $s'$ \textit{non-containingly overlapping} if $s_{start} < s'_{start} < s_{end} < s'_{end}$ or $s'_{start} < s_{start} < s'_{end} < s_{end}$.
Pairs of skip links $s$ and $s'$ that \textit{both} have consistent labels and are non-containingly overlapping will be of particular importance, as we want to skip both of them at once for correctness and constant delay (if we follow the first, we end up at a node which should also be skipped). Therefore, we add \textit{merged} skip links, which span across all the relevant skip links, enabling reaching the next value to output by following at most one skip link. In Example~\ref{example:enumDS}, the $c_1 d_1$ skip link in bucket $b_2$ is a merged skip link, which merges the $c_1$ and $d_1$ skip links, which are non-containingly overlapping and have consistent labels. The process of merging skip links is a critical technical component of the data structure construction.

The construction of the data structure for a variable $x$ will be guided by the bags $\chi_\sigma^S(x)$ for different subsets $S$ of negative edges: the hash table is built over the query over positive atoms induced on $\chi_\sigma^\emptyset(x)$. For each $S$, we will add skip links with provenance $S$, with label and bucket key described by $\chi_\sigma^S(x)\setminus\{x\}$, using a query induced on that bag.

\subsection{Building an Enumeration Data Structure}
\label{subsec:buildenumds}

This subsection describes \textproc{BuildEnumDS}, or Algorithm~\ref{algorithm:BuildEnumDS}, which builds an enumeration data structure $\mathbb{D}_i$ for variable $\sigma_i$, using the query over $\smH_i$. The procedure also creates the next reduced query $\smH_{i+1}$, including an input instance, which is used in the next step of the preprocessing phase.

\begin{algorithm}
\caption{Build Enumeration Data Structure}
\label{algorithm:BuildEnumDS}
\footnotesize
\begin{algorithmic}[1]
\Procedure{BuildEnumDS}{$\smH_i=(V_i,\mE_i^+,\mE_i^-),\sigma_i$}
\State $L^\emptyset \gets \textproc{Join}(\mH_i^\emptyset[\chi_\sigma^\emptyset(\sigma_i)])$
\State $\mathbb{D} \gets \textproc{HashTable}(L^\emptyset)$
\State // Create initial points to skip
\State $\mC \gets \{K \in \mE_i^- : \sigma_i \in K\}$ \Comment{Negative atoms containing $\sigma_i$}
\For{$R \in \mC$}
    \State $P^R \gets \textproc{Join}(\mH_i^{\{R\}}[\chi_\sigma^{\mA_R}(\sigma_i)])$ \Comment{Tuples to skip}
\EndFor
\State // Merge into skip links
\For{$S \subseteq \mC$, $S \neq \emptyset$} \Comment{Follow order of increasing $|S|$}
    \State $\mA_S \gets \bigcup_{R\in S}\mA_R$
    \State $L^S \gets \textproc{Join}(\mH_i^S[\chi_\sigma^{\mA_S}(\sigma_i)\setminus\{\sigma_i\}])$ \Comment{Possible label/bucket}
    \State $T^S \gets \emptyset$ \Comment{Skip links before merge}
    \For{$\ell\cdot B \in L^S$}
        \State $R^* \gets \arg\min_{R\in S}\deg_{P^R}(\sigma_i\mid \ell \cdot B)$
        \State $V^S \gets (\ell \cdot B) \bowtie P^{R^*}$ \Comment{Points to skip}
        \State $T^S \gets T^S \cup \bigcup_{v \in V^S} \textproc{Extend}(v, S)$
    \EndFor
    \State $\bar{T}^S \gets \textproc{Merge}(T^S)$
    \State $\mathbb{D} \gets \mathbb{D}$ with new skip links $\bar{T}^S$
\EndFor
\State Construct $\smH_{i+1}$ from $\smH_i$:
\Statex \quad $\bullet$\ remove $\sigma_i$ and every atom containing $\sigma_i$;
\Statex \quad $\bullet$\ add the positive atom $\;U_i^\emptyset \setminus \{\sigma_i\} \;:=\; \pi_{\setminus\sigma_i}\,L^\emptyset$;
\Statex \quad $\bullet$\ for each non-empty $S \subseteq \mC$, add the negative atom
\Statex \qquad\quad $U_i^S \setminus \{\sigma_i\} \;:=\; \{\, s_{label}\cdot b \,:\, s \in \bar{T}^S \text{ in bucket } b,\ s \text{ covers the entire linked list}\,\}$.
\State\Return $(\mathbb{D}, \smH_{i+1})$
\EndProcedure
\Statex
\Procedure{Extend}{$v$, $S$} \Comment{Extend $v$ into a skip link}
    \State $s_{in} \gets \arg\min\{ s' \in \mathbb{D} \mid \textsf{prov}(s') \subsetneq S,\, s'_{end} = \pi_{\sigma_i}v,\, s'_{label} \sim v \}$ \Comment{Longest incoming skip link}
    \State $s_{start} \gets (s_{in})_{start}$ if $s_{in} \neq \textsf{null}$, else $\textsf{prev}_\mathbb{D}(\pi_{\sigma_i}v)$
    \State $s_{out} \gets \arg\max\{ s' \in \mathbb{D} \mid \textsf{prov}(s') \subsetneq S,\, s'_{start} = \pi_{\sigma_i}v,\, s'_{label} \sim v \}$ \Comment{Longest outgoing skip link}
    \State $s_{end} \gets (s_{out})_{end}$ if $s_{out} \neq \textsf{null}$, else $\textsf{next}_\mathbb{D}(\pi_{\sigma_i}v)$
    \State $s_{label} \gets \pi_{\chi_\sigma^{\mA_S}(\sigma_i)\setminus\chi_\sigma^{\emptyset}(\sigma_i)}v$
    \State $\textsf{prov}(s) \gets S$
    \State\Return $\{(s_{start}, s_{end}, s_{label}, \textsf{prov}(s))\}$
\EndProcedure
\Statex
\Procedure{Merge}{$T$} \Comment{Merge non-containingly overlapping skip links}
    \For{$s, s' \in T$ non-containingly overlapping with $s_{label} = s'_{label}$}
        \State $T \gets (T \setminus \{s, s'\}) \cup \{(\min(s_{start}, s'_{start}),\, \max(s_{end}, s'_{end}),\, s_{label},\, \textsf{prov}(s))\}$
    \EndFor
    \State \Return $T$
\EndProcedure
\end{algorithmic}
\end{algorithm}

\paragraph{Intuition} We first give some intuition for the algorithm. As a first step, Lines 2--3 construct the initial hash table with linked lists.

Lines 5--7 perform the first step towards obtaining skip links. Here, we compute a set of tuples $P^R$ for each negative atom $R$ where $\sigma_i\in R$. $P^R$ describes which tuples in the hash table are skipped by the negative atom $R$. We do this through the join over $\mH_i^{\{R\}}[\chi_\sigma^{\mA_R}(\sigma_i)]$, because this join gives us tuples that exist over the positive query, but also match the negative atom $R$ -- this allows us to find tuples that would be output if not for the negative atom.

Lines 9--18 turn the skipped tuples into skip links. Note that we merge consecutive skipped tuples both when they are skipped by the same $R$ and when they are skipped by different negative atoms. Every subset $S$ of negative atoms creates a set of skip links with provenance $S$, which we sometimes call \textit{$S$-skip links}: these are the skip links that cover a region of a linked list, where each node is skipped by some negative atom in $S$. We build the $S$-skip links in order of increasing $|S|$, and will utilize the already built $S'$-skip links when constructing $S$-skip links, where $S' \subsetneq S$.

For each $S$, we compute a set $L^S$. This contains possible pairs of skip link label and hash bucket for $S$-skip links. For each such label/hash bucket pair, we then compute which values over $\sigma_i$ are skipped. To do this, we essentially perform one step of a worst-case optimal join: we pick the cheapest $P^R$ to extend to the variable $\sigma_i$ for each tuple in $L^S$, giving a set of points in the skip link. We then \textit{extend} these points into $S$-skip links, by using, for each point, incoming/outgoing $S'$-skip links, where $S'\subsetneq S$. Finally, we merge resulting skip links which are non-containingly overlapping and have the same label, and tag the resulting skip links with provenance $S$.

When we construct the next reduced query $\smH_{i+1}$, we put all hash table keys into $U_i^\emptyset\setminus\{\sigma_i\}$. This enforces that the next reduced query only creates output tuples which have positive extensions to $\sigma_i$, essentially acting like the semijoins in Yannakakis algorithm. For $U_i^S\setminus\{\sigma_i\}$ where $S \neq \emptyset$, we put the skip link labels/bucket pairs with provenance $S$ which span an entire linked list. Any tuple consistent with such a pair will not create an output over $\smH_i$, and therefore does not need to be output from $\smH_{i+1}$.


\begin{example}
    We continue Example~\ref{example:enumDS} with $Q(x,y,z,w) \gets R(x,y) \wedge \neg T(x,y,z) \wedge \neg W(x,w)$, showing how \textproc{BuildEnumDS} constructs the enumeration data structure for $x$. We only show the linked list for one hash bucket $b_2$. The relevant bags $\chi_\sigma^S(x)$, which describe the variables occurring in the hash table and in each type of skip link, are:
    \begin{center}
    \begin{tikzpicture}[
    scale=0.75,
    bag/.style={draw, rounded corners=3pt, inner sep=3pt, font=\scriptsize,
                fill=gray!8, draw=pGray},
    bagS/.style={draw, rounded corners=3pt, inner sep=3pt, font=\scriptsize,
                fill=pBlue!10, draw=pBlue},
    bagT/.style={draw, rounded corners=3pt, inner sep=3pt, font=\scriptsize,
                fill=pOrange!10, draw=pOrange},
    bagST/.style={draw, rounded corners=3pt, inner sep=3pt, font=\scriptsize,
                fill=pPurple!10, draw=pPurple},
    slab/.style={font=\scriptsize\bfseries, anchor=east},
]
\colorlet{cw}{pBlue}
\colorlet{cz}{pOrange}
\colorlet{czw}{pPurple}

\node[slab, text=pGray] at (1.00, 0) {$S=\emptyset$};
\node[bag] (b0) at (1.75, 0) {$\{\underline{x},y\}$};

\node[slab, text=cw] at (4.80, 0) {$S=\{W\}$};
\node[bagS] (b1) at (5.65, 0) {$\{\underline{x},y,w\}$};

\node[slab, text=cz] at (8.70, 0) {$S=\{T\}$};
\node[bagT] (b2) at (9.50, 0) {$\{\underline{x},y,z\}$};

\node[slab, text=czw] at (12.80, 0) {$S=\{T,W\}$};
\node[bagST] (b3) at (13.80, 0) {$\{\underline{x},y,z,w\}$};
\end{tikzpicture}
    \end{center}
    Lines~5--7: compute $P^W$ and $P^T$:
    \begin{center}
    \scalebox{0.8}{
\begin{tikzpicture}[
    cell/.style={draw=gray!65!black, fill=white, minimum width=0.9cm,
                 minimum height=0.44cm, font=\small, inner sep=2pt},
    keycell/.style={draw=pGray, dashed, fill=gray!10, minimum width=0.8cm,
                 minimum height=0.44cm, font=\small, inner sep=2pt},
    endcell/.style={draw=gray!65!black, fill=gray!12, minimum width=0.55cm,
                 minimum height=0.44cm, font=\scriptsize\sffamily, text=gray!55!black, inner sep=1pt},
    hptr/.style={->, >=stealth, thin, dashed, gray!50},
    ptr/.style={->, >=stealth, line width=0.5pt, gray!60!black},
    pt/.style={circle, fill, inner sep=1.5pt},
    lbl/.style={font=\scriptsize},
]
\colorlet{cs}{pBlue}
\colorlet{cz}{pOrange}

\def\dx{1.4}
\def\xoff{1.6}

\node[cell](c1) at (\xoff+0*\dx, 0){$a_1$};
\node[cell](c2) at (\xoff+1*\dx, 0){$a_2$};
\node[cell](c3) at (\xoff+2*\dx, 0){$a_4$};
\node[cell](c4) at (\xoff+3*\dx, 0){$a_5$};
\node[cell](c5) at (\xoff+4*\dx, 0){$a_6$};
\draw[ptr](c1)--(c2); \draw[ptr](c2)--(c3); \draw[ptr](c3)--(c4); \draw[ptr](c4)--(c5);
\node[keycell](ck) at (\xoff-1*\dx, 0){$b_2$}; \draw[hptr](ck)--(c1);
\node[endcell](ce) at (\xoff+5*\dx, 0){end}; \draw[ptr](c5)--(ce);

\node[pt, cz] at (\xoff+2*\dx, -0.40){};
\node[pt, cz] at (\xoff+3*\dx, -0.40){};
\node[lbl, cz, anchor=north] at (\xoff+2.5*\dx, -0.48){$P^T{:}\;(a_4,b_2,c_1),\;(a_5,b_2,c_1)$};

\node[pt, cs] at (\xoff+1*\dx, 0.40){};
\node[lbl, cs, anchor=south] at (\xoff+1*\dx, 0.48){$P^W{:}\;(a_2,b_2,d_1)$};
\end{tikzpicture}}
    \end{center}
    Lines~9--18: for each $S \subseteq \mC$, compute $V^S$ (dots, line~15), \textproc{Extend} to skip links $T^S$ (dashed, line~16), and \textproc{Merge} into $\bar{T}^S$ (solid, line~17). The three resulting skip links are the same as shown in Example~\ref{example:enumDS}. Here, we visualize the $S$-skip links separately for different $S$, but in the data structure they become one set of skip links. For $S = \{T,W\}$, we assume $R^* = T$. Extending $a_5$ finds no consistent incoming/outgoing skip links, giving $\textsf{prev}(a_5)\to\textsf{next}(a_5) = a_4\to a_6$. Extending $a_4$ finds an incoming skip link in $\bar{T}^{\{W\}}$ with label $d_1$, giving $a_1\to a_5$. Merging the two yields the skip link $a_1\to a_6$ with label $c_1 d_1$ in $\bar{T}^{\{T,W\}}$.
    \begin{center}
    \scalebox{0.8}{
\begin{tikzpicture}[
    cell/.style={draw=gray!65!black, fill=white, minimum width=0.9cm,
                 minimum height=0.44cm, font=\small, inner sep=2pt},
    keycell/.style={draw=pGray, dashed, fill=gray!10, minimum width=0.8cm,
                 minimum height=0.44cm, font=\small, inner sep=2pt},
    endcell/.style={draw=gray!65!black, fill=gray!12, minimum width=0.55cm,
                 minimum height=0.44cm, font=\scriptsize\sffamily, text=gray!55!black, inner sep=1pt},
    hptr/.style={->, >=stealth, thin, dashed, gray!50},
    ptr/.style={->, >=stealth, line width=0.5pt, gray!60!black},
    sk/.style={->, >=stealth, semithick, rounded corners=4pt},
    pt/.style={circle, fill, inner sep=1.5pt},
    lbl/.style={font=\scriptsize},
    steplbl/.style={font=\small\bfseries, anchor=west},
]
\colorlet{cs}{pBlue}
\colorlet{cz}{pOrange}
\colorlet{czw}{pPurple}

\def\dx{1.4}
\def\xoff{3.9}

\def\ya{0}
\node[steplbl] at (0, \ya) {$S{=}\{T\}$:};

\node[cell](d1) at (\xoff+0*\dx, \ya){$a_1$};
\node[cell](d2) at (\xoff+1*\dx, \ya){$a_2$};
\node[cell](d3) at (\xoff+2*\dx, \ya){$a_4$};
\node[cell](d4) at (\xoff+3*\dx, \ya){$a_5$};
\node[cell](d5) at (\xoff+4*\dx, \ya){$a_6$};
\draw[ptr](d1)--(d2); \draw[ptr](d2)--(d3); \draw[ptr](d3)--(d4); \draw[ptr](d4)--(d5);
\node[keycell](dk) at (\xoff-1*\dx, \ya){$b_2$}; \draw[hptr](dk)--(d1);
\node[endcell](de) at (\xoff+5*\dx, \ya){end}; \draw[ptr](d5)--(de);

\node[pt, cz] at (\xoff+2*\dx, \ya+0.46){};
\node[pt, cz] at (\xoff+3*\dx, \ya+0.46){};
\node[lbl, cz, anchor=east] at (\xoff+2*\dx-0.12, \ya+0.46){$V^{\{T\}}$};

\draw[sk,cz,dashed] (d2.south)--(\xoff+1*\dx, \ya-0.46)--(\xoff+3*\dx, \ya-0.46)--(d4.south);
\draw[sk,cz,dashed] (d3.south)--(\xoff+2*\dx, \ya-0.82)--(\xoff+4*\dx, \ya-0.82)--(d5.south);
\node[lbl,cz,anchor=west] at (\xoff+4*\dx+0.12, \ya-0.46){\textproc{Extend}};
\node[lbl,cz,anchor=west] at (\xoff+4*\dx+0.12, \ya-0.82){\textproc{Extend}};

\draw[sk,cz,thick] (d2.north)--(\xoff+1*\dx, \ya+0.72)--(\xoff+4*\dx, \ya+0.72)--(d5.north);
\node[lbl,cz,anchor=south] at (\xoff+2.5*\dx, \ya+0.78){$c_1$};
\node[lbl,cz,anchor=west] at (\xoff+4*\dx+0.12, \ya+0.72){\textproc{Merge}};

\def\yb{-2.0}
\node[steplbl] at (0, \yb) {$S{=}\{W\}$:};

\node[cell](e1) at (\xoff+0*\dx, \yb){$a_1$};
\node[cell](e2) at (\xoff+1*\dx, \yb){$a_2$};
\node[cell](e3) at (\xoff+2*\dx, \yb){$a_4$};
\node[cell](e4) at (\xoff+3*\dx, \yb){$a_5$};
\node[cell](e5) at (\xoff+4*\dx, \yb){$a_6$};
\draw[ptr](e1)--(e2); \draw[ptr](e2)--(e3); \draw[ptr](e3)--(e4); \draw[ptr](e4)--(e5);
\node[keycell](ek) at (\xoff-1*\dx, \yb){$b_2$}; \draw[hptr](ek)--(e1);
\node[endcell](ee) at (\xoff+5*\dx, \yb){end}; \draw[ptr](e5)--(ee);

\node[pt, cs] at (\xoff+1*\dx, \yb+0.46){};
\node[lbl, cs, anchor=east] at (\xoff+1*\dx-0.12, \yb+0.46){$V^{\{W\}}$};

\draw[sk,cs,dashed] (e1.south)--(\xoff+0*\dx, \yb-0.46)--(\xoff+2*\dx, \yb-0.46)--(e3.south);
\node[lbl,cs,anchor=west] at (\xoff+2*\dx+0.12, \yb-0.46){\textproc{Extend}};

\draw[sk,cs,thick] (e1.north)--(\xoff+0*\dx, \yb+0.72)--(\xoff+2*\dx, \yb+0.72)--(e3.north);
\node[lbl,cs,anchor=south] at (\xoff+1*\dx, \yb+0.78){$d_1$};
\node[lbl,cs,anchor=west] at (\xoff+2*\dx+0.12, \yb+0.72){\textproc{Merge}};

\def\yc{-3.9}
\node[steplbl] at (0, \yc) {$S{=}\{T,W\}$:};

\node[cell](f1) at (\xoff+0*\dx, \yc){$a_1$};
\node[cell](f2) at (\xoff+1*\dx, \yc){$a_2$};
\node[cell](f3) at (\xoff+2*\dx, \yc){$a_4$};
\node[cell](f4) at (\xoff+3*\dx, \yc){$a_5$};
\node[cell](f5) at (\xoff+4*\dx, \yc){$a_6$};
\draw[ptr](f1)--(f2); \draw[ptr](f2)--(f3); \draw[ptr](f3)--(f4); \draw[ptr](f4)--(f5);
\node[keycell](fk) at (\xoff-1*\dx, \yc){$b_2$}; \draw[hptr](fk)--(f1);
\node[endcell](fe) at (\xoff+5*\dx, \yc){end}; \draw[ptr](f5)--(fe);

\node[pt, czw] at (\xoff+2*\dx, \yc+0.46){};
\node[pt, czw] at (\xoff+3*\dx, \yc+0.46){};
\node[lbl, czw, anchor=east] at (\xoff+2*\dx-0.12, \yc+0.46){$V^{\{T,W\}}$};

\draw[sk,czw,dashed] (f1.south)--(\xoff+0*\dx, \yc-0.46)--(\xoff+3*\dx, \yc-0.46)--(f4.south);
\draw[sk,czw,dashed] (f3.south)--(\xoff+2*\dx, \yc-0.82)--(\xoff+4*\dx, \yc-0.82)--(f5.south);
\node[lbl,czw,anchor=west] at (\xoff+4*\dx+0.12, \yc-0.46){\textproc{Extend}};
\node[lbl,czw,anchor=west] at (\xoff+4*\dx+0.12, \yc-0.82){\textproc{Extend}};

\draw[sk,czw,thick] (f1.north)--(\xoff+0*\dx, \yc+0.72)--(\xoff+4*\dx, \yc+0.72)--(f5.north);
\node[lbl,czw,anchor=south] at (\xoff+2*\dx, \yc+0.78){$c_1 d_1$};
\node[lbl,czw,anchor=west] at (\xoff+4*\dx+0.12, \yc+0.72){\textproc{Merge}};

\path[use as bounding box] (-0.05, 1.02) rectangle (11.10, -5.05);

\end{tikzpicture}}
    \end{center}
\end{example}

\subsection{A Complete Example}

\label{example:complete-mixed-triangle}
    We give an overview of the complete algorithm on a triangle query:
    \[
        Q(x,y,z) \gets R(x) \wedge T(y,z) \wedge {\color{pBlue}\neg A(x,y)} \wedge {\color{pOrange}\neg B(x,z)}.
    \]
    We use the following instance:
    \begin{center}
{\small\renewcommand{\arraystretch}{1.1}%
\newcolumntype{Q}[1]{>{\centering\arraybackslash}p{#1}}%
\begin{tabular}[t]{@{}Q{0.7cm}@{}}
  \multicolumn{1}{c}{$R(x)$} \\ \toprule
  $x$ \\ \midrule
  $a_1$ \\ $a_2$ \\ $a_3$ \\ \bottomrule
\end{tabular}\qquad
\begin{tabular}[t]{@{}Q{0.7cm}Q{0.7cm}@{}}
  \multicolumn{2}{c}{$T(y,z)$} \\ \toprule
  $y$ & $z$ \\ \midrule
  $b_1$ & $c_1$ \\ $b_2$ & $c_1$ \\ $b_3$ & $c_2$ \\ \bottomrule
\end{tabular}\qquad
\begin{tabular}[t]{@{}Q{0.7cm}Q{0.7cm}@{}}
  \multicolumn{2}{c}{\textcolor{pBlue}{$\neg A(x,y)$}} \\ \toprule
  $x$ & $y$ \\ \midrule
  $a_1$ & $b_1$ \\ $a_1$ & $b_2$ \\ $a_2$ & $b_2$ \\ $a_3$ & $b_2$ \\ \bottomrule
\end{tabular}\qquad
\begin{tabular}[t]{@{}Q{0.7cm}Q{0.7cm}@{}}
  \multicolumn{2}{c}{\textcolor{pOrange}{$\neg B(x,z)$}} \\ \toprule
  $x$ & $z$ \\ \midrule
  $a_2$ & $c_1$ \\ $a_3$ & $c_1$ \\ $a_2$ & $c_2$ \\ \bottomrule
\end{tabular}}
    \end{center}

    We will use the variable order $\sigma=(x,y,z)$, so we start with the variable adjacent to the two negative edges ($x$). The tree decompositions for each $S\subseteq \mE^-$ corresponding to $\sigma$ are shown below:
    \begin{center}
    \begin{tikzpicture}[
    bag/.style={draw, rounded corners=3pt, inner xsep=3pt, inner ysep=2.5pt, font=\scriptsize, align=center},
    edge/.style={thick},
    slab/.style={font=\scriptsize, anchor=south},
]

\colorlet{c0}{pGray}
\colorlet{c1}{pBlue}
\colorlet{c2}{pOrange}
\colorlet{c3}{pPurple}

\def\yh{0.42}
\def\yx{0.00}
\def\yy{-0.74}
\def\yz{-1.48}

\node[slab, text=c0] at (0.00,\yh) {$S=\emptyset$};
\node[bag, fill=c0!7, draw=c0!75] (x0) at (0.00,\yx) {$\chi_\sigma^{\emptyset}(\underline{x})=\{\underline{x}\}$};
\node[bag, fill=c0!7, draw=c0!75] (y0) at (0.00,\yy) {$\chi_\sigma^{\emptyset}(\underline{y})=\{\underline{y},z\}$};
\node[bag, fill=c0!7, draw=c0!75] (z0) at (0.00,\yz) {$\chi_\sigma^{\emptyset}(\underline{z})=\{\underline{z}\}$};
\draw[edge, c0!75] (x0)--(y0);
\draw[edge, c0!75] (y0)--(z0);

\node[slab, text=c1] at (2.95,\yh) {$S=\{A\}$};
\node[bag, fill=c1!7, draw=c1!75] (x1) at (2.95,\yx) {$\chi_\sigma^{\{A\}}(\underline{x})=\{\underline{x},y\}$};
\node[bag, fill=c1!7, draw=c1!75] (y1) at (2.95,\yy) {$\chi_\sigma^{\{A\}}(\underline{y})=\{\underline{y},z\}$};
\node[bag, fill=c1!7, draw=c1!75] (z1) at (2.95,\yz) {$\chi_\sigma^{\{A\}}(\underline{z})=\{\underline{z}\}$};
\draw[edge, c1!75] (x1)--(y1);
\draw[edge, c1!75] (y1)--(z1);

\node[slab, text=c2] at (5.90,\yh) {$S=\{B\}$};
\node[bag, fill=c2!7, draw=c2!75] (x2) at (5.90,\yx) {$\chi_\sigma^{\{B\}}(\underline{x})=\{\underline{x},z\}$};
\node[bag, fill=c2!7, draw=c2!75] (y2) at (5.90,\yy) {$\chi_\sigma^{\{B\}}(\underline{y})=\{\underline{y},z\}$};
\node[bag, fill=c2!7, draw=c2!75] (z2) at (5.90,\yz) {$\chi_\sigma^{\{B\}}(\underline{z})=\{\underline{z}\}$};
\draw[edge, c2!75] (x2)--(y2);
\draw[edge, c2!75] (y2)--(z2);

\node[slab, text=c3] at (8.95,\yh) {$S=\{A,B\}$};
\node[bag, fill=c3!7, draw=c3!75] (x3) at (8.95,\yx) {$\chi_\sigma^{\{A,B\}}(\underline{x})=\{\underline{x},y,z\}$};
\node[bag, fill=c3!7, draw=c3!75] (y3) at (8.95,\yy) {$\chi_\sigma^{\{A,B\}}(\underline{y})=\{\underline{y},z\}$};
\node[bag, fill=c3!7, draw=c3!75] (z3) at (8.95,\yz) {$\chi_\sigma^{\{A,B\}}(\underline{z})=\{\underline{z}\}$};
\draw[edge, c3!75] (x3)--(y3);
\draw[edge, c3!75] (y3)--(z3);

\foreach \x in {1.475, 4.425, 7.40} {
  \draw[gray!30, line width=0.4pt] (\x, 0.80) -- (\x,-1.80);
}

\end{tikzpicture}
    \end{center}

    Running \textproc{BuildEnumDS} following $\sigma$ gives the following enumeration data structures.
    \begin{center}
    \resizebox{\textwidth}{!}{\input{figures/triangle-mixed-enumds.tikz}}
    \end{center}
    The rows show: the hypergraphs $\smH_1,\smH_2,\smH_3$, the data structures, and a table showing which skip links are created. In the preprocessing phase, we work left to right to build the data structures. In the enumeration phase, we work right to left, enumerating extensions to the next variable, and backtracking when we run out of values to extend.

    When we create $\mathbb{D}_x$, we first create a hash table with a unique key $\emptyset$, with a linked list with values $R(x)$. We then add skip links that skip values using $A(x,y)$ and $B(x,z)$. We also create merged skip links using both $A,B$, essentially by considering all $y,z \in T$ and checking if there are skip links from $A$ and $B$ with labels $y$ and $z$ which are non-containingly overlapping, which we merge into a skip link with label $y \cdot z$. Importantly, we do not merge skip links from $A$ and $B$ if the corresponding $y,z$ pair does not exist in $T$, since this would make the runtime $|I|^2$ instead of $|I|^{3/2}$. For example, no merged skip links are created for $b_1 c_2$ and $b_2 c_2$, since $(b_1,c_2), (b_2,c_2) \notin T$.

    Skip links that cover an \textit{entire} linked list (from bucket key to sentinel) are passed on to the next reduced query: their label and bucket become a tuple of a new negative atom. In $\mathbb{D}_x$, the $\{A\}$-skip link with label $b_2$ covers the whole list, so ${\color{pBlue}U_1^{A}} = \{b_2\}$. Similarly, the merged skip link with label $b_1 c_1$ gives ${\color{pPurple}U_1^{A,B}} = \{(b_1,c_1)\}$, while ${\color{pOrange}U_1^{B}} = \emptyset$. 
    
    Repeating for $y$, in bucket $c_1$ of $\mathbb{D}_y$, the $\{U_1^{A,B}\}$-skip link skips $b_1$, since $(b_1,c_1) \in U_1^{A,B}$, and the $\{U_1^A\}$-skip link skips $b_2$, since $b_2 \in U_1^A$. Neither covers the whole list on its own, so neither is passed up, i.e., $U_2^{U_1^A} = U_2^{U_1^{A,B}} = \emptyset$. However, they overlap, so they are merged into a $\{U_1^A, U_1^{A,B}\}$-skip link from $c_1$ to the sentinel, which covers the whole list. Hence $c_1$ is passed up, giving $U_2^{U_1^A,U_1^{A,B}} = \{c_1\}$, while $U_1^{B}$ is carried over. Hence $\mathbb{D}_z$ skips $c_1$.

    In the enumeration phase, we follow the data-structures from right to left. This gives the output tuples $(a_1,b_3,c_2)$ and $(a_3,b_3,c_2)$.
\subsection{Algorithm Analysis}

We will describe the key ideas of the algorithm analysis and present the key lemmas. The full analysis and proofs are quite long and thus deferred to \apporfull{sec:app:algorithm-analysis}. The analysis consists of three parts: (1) proving correctness of the algorithm, (2) proving the runtime of the preprocessing phase, and (3) proving that the enumeration phase runs with constant delay.

\paragraph{Correctness} The key ideas behind correctness are the following: First, lines~$2$--$3$ create a hash table which allows us to, given a tuple over suffix variables, correctly find all extensions to the next variable, given positive atoms. It then just remains to ``filter'' out the extensions which are blocked by a negative hyperedge. We do this by arguing that for any consecutive set of nodes in the linked list that are all blocked by some negative hyperedges and are consistent with the same suffix tuple, there is a skip link that covers all of them.

\begin{restatable}{lemma}{reductioncorrectlemma}
\label{lemma:ReductionCorrect}
\label{lemma:reductioncorrect}
    Let $Q$ be the query over $\smH$ with input $I$, and for each $i\geq 1$ let $Q_i$ over $\smH_i=(V_i,\mE_i^+,\mE_i^-)$ with input $I_i$ be the query obtained after $i-1$ successive applications of the reduction of Algorithm~\ref{algorithm:BuildEnumDS}. Then $Q_i(I_i) = \pi_{V_i}\, Q(I)$.
\end{restatable}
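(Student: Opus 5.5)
We argue by induction on $i$. The base case $i=1$ is immediate, since $Q_1=Q$, $I_1=I$ and $V_1=[n]$. For the inductive step it suffices to show the one-step statement: if $Q_i$ is the query over $\smH_i$ with input $I_i$ and $x=\sigma_i$, then $Q_{i+1}(I_{i+1})=\pi_{V_i\setminus\{x\}}Q_i(I_i)$. Combined with the hypothesis $Q_i(I_i)=\pi_{V_i}Q(I)$, this gives $Q_{i+1}(I_{i+1})=\pi_{V_{i+1}}Q(I)$.

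The atoms of $\smH_i$ not containing $x$ are copied unchanged into $\smH_{i+1}$. So a tuple $t$ over $V_{i+1}$ satisfies them in $Q_{i+1}$ iff it satisfies them in $Q_i$. The one-step claim therefore reduces to the following equivalence. For a tuple $t$ over $V_{i+1}$ satisfying the untouched atoms, there is a value $a$ such that $t\cdot a$ satisfies all positive atoms containing $x$ and falsifies all negative atoms in $\mC$ if and only if
\begin{enumerate}
\item $\pi_{U_i^\emptyset\setminus\{x\}}t\in\pi_{\setminus x}L^\emptyset$, and
\item for no nonempty $S\subseteq\mC$ is $\pi_{U_i^S\setminus\{x\}}t$ in the new negative relation $U_i^S\setminus\{x\}$.
\end{enumerate}
Condition (1) is the usual Yannakakis-style step. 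Because $L^\emptyset$ is the join of the positive atoms induced on $\chi_\sigma^\emptyset(x)=U_i^\emptyset$, and every positive atom containing $x$ lies in that bag, the $x$-extensions of $t$ satisfying the positive atoms are exactly the nodes of the linked list in bucket $b=\pi_{U_i^\emptyset\setminus\{x\}}t$. For the bag identifications, that $U_i^S$ computed in $\smH_i$ equals $\chi_\sigma^{\mA_S}(x)$ in the original hypergraph, I would prove a small auxiliary claim by induction along the reduction, using Lemma~\ref{lemma:BagsContainedDifferentS} for monotonicity.

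The core of the proof is a \emph{skip-link invariant}, stated for each bucket $b$ and each consistent suffix $t$. Given a maximal run $A$ of consecutive nodes in the list of $b$, each blocked for $t$ by some atom of $\mC$, let $S$ be the set of atoms used to block them. The invariant says there is an $S'$-skip link with $S'\subseteq S$ whose label is consistent with $t$ and which covers $A$. Conversely, every skip link whose label is consistent with $t$ covers only blocked nodes.

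Soundness is the easy direction. Any point in $V^S$ comes from $P^{R^*}$, so it is blocked by $R^*$. \textproc{Extend} only concatenates with already-sound $S'$-skip links whose labels are consistent, and \textproc{Merge} only unions overlapping sound links with equal labels.

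Completeness is shown by induction on $|S|$. Take a maximal blocked run $A$ and its set $S$ of blocking atoms. Project $t$ onto $\chi_\sigma^{\mA_S}(x)\setminus\{x\}$ to get a pair $\ell\cdot B\in L^S$. Whichever $R^*$ is chosen, the points $V^S$ include every node of $A$ blocked by $R^*$. Each such point is extended, via the longest incoming and outgoing links with smaller provenance, which exist by the induction hypothesis applied to the subruns blocked by $S\setminus\{R^*\}$. Merging then stitches these extensions into a single link covering $A$.

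I expect this step to be the main obstacle. One difficulty is that a run blocked by $S$ may also contain subruns blocked by proper subsets, so the argument must show that extensions from the $R^*$-points reach across those subruns. A second is that the argument must show all the resulting links carry the same label $\pi_{\chi^{\mA_S}\setminus\chi^\emptyset}t$, so that \textproc{Merge} applies. I would handle this by arguing that between two consecutive $R^*$-blocked nodes every node is blocked by $S\setminus\{R^*\}$, which the inductive links cover.

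Given the invariant, the equivalence follows. No valid extension $a$ exists iff the entire list of $b$ is a single blocked run. By the invariant this holds iff some $S$-skip link with label consistent with $t$ spans from bucket key to sentinel, which is exactly the condition $\pi_{U_i^S\setminus\{x\}}t\in U_i^S\setminus\{x\}$ for that $S$. The converse direction uses soundness.
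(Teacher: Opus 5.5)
Your architecture matches the paper's. It has induction on $i$, a one-step equivalence for the atoms containing $\sigma_i$, a soundness direction (the paper's Lemma~\ref{lemma:SemanticLemmaForward}) and a completeness direction (Lemma~\ref{lemma:SemanticLemmaBackward}), and whole-list links matching tuples of $U_i^S\setminus\{\sigma_i\}$. The gap is in the completeness induction, which you rightly call the crux but resolve incorrectly.

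\textproc{Extend} only concatenates with links that start or end \emph{exactly} at $\pi_{\sigma_i}v$. An inductive statement that only supplies a link \emph{covering} a subrun therefore does not feed it. Cutting $A$ at \emph{all} $R^*$-blocked nodes also does not produce links ending at those nodes. For example, let $A=\{a_1,\dots,a_5\}$, let $R^*$ block $a_1,a_3,a_5$ and $R'$ block $a_2,a_3,a_4$, and suppose line~14 picks $R^*$ (the degrees tie). The $\{R'\}$-singletons $(a_1,a_3),(a_2,a_4),(a_3,a_5)$ are merged into the single link $(a_1,a_5)$. The $\{R^*\}$-singletons $(\textsf{prev}(a_1),a_2),(a_2,a_4),(a_4,\textsf{next}(a_5))$ only touch, so they are not merged. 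Hence no link of smaller provenance starts or ends at $a_3$, although $a_3$ is an $R^*$-point bordering your subruns $\{a_2\}$ and $\{a_4\}$. Those subruns are not maximal for $S\setminus\{R^*\}$, because $a_3$ is also blocked by $R'$. Your invariant is moreover stated only for runs maximal with respect to all of $\mC$, so it cannot be invoked on them at all.

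There is a second failure when $S$ is the set of \emph{all} atoms blocking $A$. Let $R'$ block $a_1,a_2,a_3$ and $R^*$ block only $a_2$. Line~14 picks $R^*$, so $V^S=\{a_2\}$. The lower links are $(\textsf{prev}(a_1),\textsf{next}(a_3))$ and $(a_1,a_3)$, so \textproc{Extend} returns only $(a_1,a_3)$. Thus ``merging stitches the extensions into a link covering $A$'' is false for that $S$.

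The missing idea is minimality together with cutting only at nodes blocked by $R^*$ \emph{alone}. The paper's completeness lemma makes four assumptions about $t$, $A$ and $S$:
\begin{enumerate}
\item $t$ lies in the join defining $L^S$;
\item every node of $A$ is blocked by an atom of $S$;
\item every $R\in S$ is the only atom of $S$ blocking some node of $A$;
\item the two nodes bordering $A$ are blocked by no atom of $S$.
\end{enumerate}
It concludes that there is an $S$-link from exactly $\textsf{prev}(\min A)$ to $\textsf{next}(\max A)$.

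In the inductive step, let $A'$ be the nodes of $A$ blocked by $R^*$ and by no other atom of $S$; this set is nonempty by assumption~3. Each segment of $A\setminus A'$ then satisfies all four assumptions for some minimal $S^j\subseteq S\setminus\{R^*\}$, with borders in $A'$ or at the borders of $A$. The induction hypothesis therefore yields links ending exactly at $A'$-nodes, and these are the links \textproc{Extend} finds. Your soundness direction ensures the longest incoming or outgoing link cannot run past the unblocked borders of $A$. The common label $\pi_{\chi_\sigma^{\mA_S}(\sigma_i)\setminus\chi_\sigma^\emptyset(\sigma_i)}t$ makes \textproc{Merge} apply. For the reduction you then apply this with $A=\mL$ and a \emph{minimal} blocking set $S$; this is the $S'$ your invariant alludes to but never constructs.

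You could instead keep ``covers'' and upgrade it to exact endpoints via soundness plus maximality. That works only if the invariant is parameterized by the atom set relative to which $A$ is maximal, and the cut points are the nodes blocked by $R^*$ alone.
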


\paragraph{Preprocessing Runtime} The runtime of the preprocessing phase is dominated by joins. The crucial insight is that we can, for each join, analyze it as a Conjunctive Query over a bag $\chi_\sigma^S(\sigma_i)$, for some set $S$. This means that the tree decompositions $\mT_\sigma^S$ collectively capture all join results, bounding their runtime by $O(|I|^{\sfhow(\smH)})$.

\begin{restatable}{lemma}{enumdsconstructionlemma}
\label{lemma:EnumDSConstruction}
    Given a signed hypergraph $\smH$ with input $I$, Algorithm~\ref{algorithm:BuildEnumDS} runs in
    $ O( |I|^{\sfhow(\smH)} ) $ time.
\end{restatable}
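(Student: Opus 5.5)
The plan is to bound the cost of each line of Algorithm~\ref{algorithm:BuildEnumDS} separately. There are only $2^{|\mC|}$ subsets $S$, which is a constant in data complexity, so it suffices to show that every join, every \textproc{Extend} call, and every \textproc{Merge} costs $O(|I|^{\sfhow(\smH)})$.

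The first step is a size invariant on the reduced instances. By induction on $i$, every relation of $\smH_{i+1}$ has size $O(|I|^{\sfhow(\smH)})$. Moreover, for every bag $\chi_\sigma^{\mA_S}(\sigma_i)$ the query induced on it has an AGM bound of at most $|I|^{\rho^*_{\mH^{\mA_S}}(\chi_\sigma^{\mA_S}(\sigma_i))}$ when measured against the \emph{original} atoms. To prove this, each hyperedge of $\smH_i$ is traced back to the set of original atoms it came from. A positive hyperedge $U_j^\emptyset\setminus\{\sigma_j\}$ is a projection of a join over original positive atoms. A negative hyperedge with annotation $\mA$ is contained in the projection of the join over $\mH^{\mA}[\chi^{\mA}_\sigma(\sigma_j)]$. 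Indeed, each stored tuple is a label/bucket pair drawn from $L^S$, and $L^S$ is itself such a join.

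The second step shows that the variable sets match. The variables of $\mH_i^{S}$ at step $i$, with annotation $\mA_S$, coincide with the bag $\chi_\sigma^{\mA_S}(\sigma_i)$ of the canonical decomposition of $\mH^{\mA_S}$. This follows by comparing the two elimination procedures step by step: the rule that adds $U_i^S$ for every subset $S$ mirrors variable elimination in $\mH^{\mA_S}$. With the containment of Lemma~\ref{lemma:BagsContainedDifferentS}, this shows that the join in each line evaluates a query whose fractional cover is at most $\rho^*_{\mH^{\mA_S}}(\chi_\sigma^{\mA_S}(\sigma_i))\le\sfhow(\smH)$, where $\sigma$ is the witnessing order. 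A worst-case optimal join then computes $L^\emptyset$, $P^R$, and $L^S$ in time $O(|I|^{\sfhow})$.

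The third step handles lines 14--16, which I expect to be the main obstacle. Here we do not materialize the full join over $\chi_\sigma^{\mA_S}(\sigma_i)$. Instead we iterate over $L^S$ and, for each tuple, extend by the $P^{R^*}$ with minimum degree. The total work is $\sum_{\ell\cdot B\in L^S}\min_{R\in S}\deg_{P^R}(\sigma_i\mid \ell\cdot B)$. I would bound this sum by the standard generic-join argument, namely the Hölder inequality, or equivalently the query decomposition lemma of AGM. Take an optimal fractional cover $w$ of the bag $\chi_\sigma^{\mA_S}(\sigma_i)$ in $\mH^{\mA_S}$ and use $\min_R d_R\le\prod_R d_R^{\lambda_R}$ with weights $\lambda_R$ proportional to the cover weight placed on $\sigma_i$. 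Then the sum is at most the AGM bound of the full bag, which is at most $|I|^{\sfhow}$. The subtle part is that the atoms $P^R$ are themselves joins over the smaller bags $\chi^{\mA_R}_\sigma(\sigma_i)$. Their sizes therefore have to be controlled by the original atoms, using the first-step invariant, and not by their materialized sizes.

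Finally, the remaining costs are small. Each \textproc{Extend} call takes $O(1)$ time, because it needs only a constant number of hash lookups in the per-node skip-link tables, keyed by subsets of the label. \textproc{Merge} can be done by sorting the skip links of each label in a bucket by their start position in the linked list, which takes $O(|T^S|)$ time after one linear-time numbering of the list nodes, and then making a single sweep. Constructing $\smH_{i+1}$ is a linear scan over the skip links. Summing over the constantly many $S$ gives $O(|I|^{\sfhow(\smH)})$.
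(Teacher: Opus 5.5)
Your architecture matches the paper's proof: an inductive size bound on the passed-up atoms, a transfer of AGM bounds from joins over $\mH_i^S[X]$ to $\mH^{\mA_S}[X]$ (Lemma~\ref{lemma:canuseinputstats}), a lightest-extension bound for lines 14--15, $O(1)$ \textproc{Extend}, and linear-time \textproc{Merge}.

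However, the tracing claim in your first step is false as written. In the complete example of Section~4, $(b_1,c_1)$ is passed up in the atom annotated $\{A,B\}$. Yet no $x$ satisfies both $(x,b_1)\in A$ and $(x,c_1)\in B$, so $(b_1,c_1)$ is not in the projection of the join over $\mH^{\{A,B\}}[\{x,y,z\}]$. This is exactly why a merged skip link was needed there. What holds, and what the transfer needs, is the per-atom statement: for every $R'\in\mE^+\cup\mA_S$ there is $R\in\mE_i^+\cup S$ with $R'\cap V_i\subseteq R$ and $\pi_{R'\cap V_i}R\subseteq\pi_{R'\cap V_i}R'$. The cover weight of $R'$ can then be moved onto $R$. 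Your justification via $L^S$ yields exactly this. It also makes the bag identity of your second step unnecessary, because the algorithm joins over $\chi^{\mA_S}_\sigma(\sigma_i)$ by definition, and that set is a bag of $\mT^{\mA_S}_\sigma$.

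The substantive gap is in your third step, at exactly the point you flag as subtle. The H\"older step with weights $\lambda_R$ indexed by $R\in S$ does not go through as stated, for two reasons:
\begin{itemize}
\item The cover weight on $\sigma_i$ may lie partly or entirely on positive atoms, which have no index in $S$.
\item The query-decomposition lemma controls $\sum_t\prod_e d_e(t)^{w_e}$ only when each $d_e$ is a degree in a relation whose size enters the AGM product. $P^R$ is not such a relation: its size can be as large as the AGM bound of its own bag.
\end{itemize}
The missing idea is degree domination. If $R'$ is any atom of the join defining $P^R$ with $\sigma_i\in R'$, then $\deg_{P^R}(\sigma_i\mid t)\le\deg_{R'}(\sigma_i\mid t)$. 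Every positive atom through $\sigma_i$ lies in $\chi^\emptyset_\sigma(\sigma_i)\subseteq\chi^{\mA_R}_\sigma(\sigma_i)$ by Lemma~\ref{lemma:BagsContainedDifferentS}, so it occurs in every $P^R$. Every negative $R'\in S$ occurs in $P^{R'}$. Therefore
\[
\min_{R\in S}\deg_{P^R}(\sigma_i\mid t)\;\le\;\min_{R'\ni\sigma_i}\deg_{R'}(\sigma_i\mid t),
\]
where $R'$ ranges over all atoms of $\mH_i^S[\chi^{\mA_S}_\sigma(\sigma_i)]$ containing $\sigma_i$. Now apply the ordinary lightest-extension bound (Lemma~\ref{lemma:lightestextension}) to this single bag query, combined with the per-atom transfer. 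This bounds the sum over $t\in L^S$ of the right-hand side by $O(|I|+I_P+\textsf{AGM}(\mH^{\mA_S}[\chi^{\mA_S}_\sigma(\sigma_i)]))=O(|I|^{\sfhow(\smH)})$, where $I_P$ bounds the sizes of the passed-up atoms. The sizes of the $P^R$ never enter. With these two repairs, your argument is the paper's proof.
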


\paragraph{Constant Delay Enumeration} Constant-delay enumeration is proved together with the correctness of the algorithm. The key idea is that enumeration just follows the linked lists or skip links. By the correctness argument, we always need to follow \textit{at most} one skip link before reaching the next value to output over a variable. Finally, since the reduced queries contain relations $U_i^S$ which are ``passed up'' from earlier reduced queries, we can ensure we only enumerate the necessary suffixes.

\begin{restatable}{lemma}{enumfunctionstepoutputslemma}
\label{lemma:enumfunctionstepoutputs}
    Let $i\in[n]$, and let $t_{i+1}$ be an output over $\smH_{i+1}$ (recall that $\smH_{n+1}$ has no vertices, so its only possible output is the empty tuple $()$). Then, \textproc{Enumerate} iterates with constant delay over all outputs $t_i \sim t_{i+1}$ over $\smH_i$.
\end{restatable}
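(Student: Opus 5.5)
The plan is to prove the lemma by fixing $i$ and an output $t_{i+1}$ of $\smH_{i+1}$. We then show three things about the walk performed by \textproc{Enumerate} on the linked list selected by $t_{i+1}$. First, every node at which the walk stops is an output $t_i$ over $\smH_i$. Second, every output $t_i \sim t_{i+1}$ is visited. Third, between two consecutive visited outputs, the walk makes only $O(1)$ moves.

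\emph{Locating the bucket.} Since $t_{i+1}$ satisfies the positive atom $U_i^\emptyset\setminus\{\sigma_i\} = \pi_{\setminus \sigma_i} L^\emptyset$, the projection of $t_{i+1}$ onto $\chi_\sigma^\emptyset(\sigma_i)\setminus\{\sigma_i\}$ is a hash key. This takes constant time. By construction of $L^\emptyset$, the nodes of that list are exactly the values $a$ for which $t_{i+1}\cdot a$ satisfies all positive atoms of $\smH_i$ containing $\sigma_i$. The remaining atoms of $\smH_i$, those not containing $\sigma_i$, are still present in $\smH_{i+1}$ and are satisfied by $t_{i+1}$. Therefore $t_{i+1}\cdot a$ is an output of $\smH_i$ iff no negative atom $R\in\mC$ blocks it.

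\emph{Soundness of skip links.} By induction on $|S|$, following the order in which Algorithm~\ref{algorithm:BuildEnumDS} builds them, I would show that every $S$-skip link $s$ with $s_{label}\sim t_{i+1}$ covers only nodes $a$ that are blocked by some $R\in S$ under $t_{i+1}$. There are three cases to check:
\begin{itemize}
\item Points in $V^S$ lie in $P^{R^*}$ and so are blocked.
\item \textproc{Extend} concatenates an interval around such a point with incoming and outgoing $S'$-skip links, $S'\subsetneq S$, whose labels are consistent with $v$ and hence with $t_{i+1}$. These are sound by the induction hypothesis.
\item \textproc{Merge} takes a union of two overlapping intervals, each of which is sound.
\end{itemize}
With soundness, skipping never loses an output.

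\emph{Completeness, the key claim and the main obstacle.} The key claim is this: for every maximal run $A$ of consecutive blocked nodes under $t_{i+1}$, there is a single skip link starting at $\textsf{prev}(\min A)$ with label consistent with $t_{i+1}$ that covers $A$, and it is the longest skip link found at that start. I would prove this by induction on the number of distinct atoms of $\mC$ that block nodes of $A$. Let $S$ be that set. For each $a\in A$, the tuple $t_{i+1}$ restricted to $\chi_\sigma^{\mA_S}(\sigma_i)\setminus\{\sigma_i\}$ lies in $L^S$. This uses Lemma~\ref{lemma:BagsContainedDifferentS}, together with the fact that the join over the bag is implied by $t_{i+1}$ being an output of the reduced query. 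Then the worst-case-optimal step over $R^*$ produces all points of $A$ that are blocked by $R^*$. Here is the delicate part. Points of $A$ blocked only by atoms in $S\setminus\{R^*\}$ form subruns, and these are covered by $S'$-skip links for $S'\subsetneq S$ by induction. \textproc{Extend} glues these subruns onto the $R^*$-points, and \textproc{Merge} fuses the resulting overlapping intervals with equal label. The result should be one link spanning all of $A$. I expect this step to be the hardest. One difficulty is arguing that gaps between $R^*$-points are always bridged by a single incoming or outgoing link with consistent label; the maximal-length choices in \textproc{Extend} are what guarantee this. The other is arguing that equal labels really arise, because the label is the projection onto $\chi_\sigma^{\mA_S}\setminus\chi_\sigma^{\emptyset}$, which is determined by $t_{i+1}$. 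There is one further case: a run that covers the whole list. Such a run would make $t_{i+1}$ violate some negative atom $U_i^S\setminus\{\sigma_i\}$, and that is impossible because $t_{i+1}$ is an output. So the list contains at least one output.

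\emph{Delay.} At each node the walk does a constant number of hash lookups, one per subset of the suffix variables, a set fixed by the query. It follows the longest consistent link. By completeness, this link lands exactly on the next unblocked node or on the sentinel. Hence each step costs $O(1)$, and the walk ends right after the last output. This gives constant delay.
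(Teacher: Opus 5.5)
Your architecture matches the paper's. You locate the bucket, then show that consistent skip links only cover blocked nodes; this is the paper's Lemma~\ref{lemma:SemanticLemmaForward}, and your induction on $|S|$ for it is fine. You then show that every gap between consecutive outputs is spanned by a single consistent link (Lemma~\ref{lemma:SemanticLemmaBackward}), and conclude that the longest consistent outgoing link lands on the next output.

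The gap is in the completeness induction, exactly at the step you flag, and as you set it up it does not close.
\begin{itemize}
\item Your hypothesis is about maximal runs of blocked nodes, with measure the number of atoms blocking some node of the run. The subruns you recurse on (nodes of $A$ not blocked by $R^*$) are bordered by $R^*$-blocked nodes, so they are not maximal runs and the hypothesis does not apply. If you instead take maximal runs with respect to $S\setminus\{R^*\}$, they may contain $R^*$-blocked nodes, and the measure need not drop.
\item More importantly, ``covers'' is the wrong invariant, because \textproc{Extend}$(v,S)$ only picks up links whose end (resp.\ start) is exactly $\pi_{\sigma_i}v$. If an $R^*$-blocked node is also blocked by another atom of $S$, the link covering the adjacent subrun runs past it and is not glued there. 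For example, if $R^*$ blocks $a_1,a_2$ and $R_2$ blocks $a_2,a_3$, the $\{R_2\}$-link starts at $a_1$, so \textproc{Extend} at $a_2$ does not see it.
\item Because your $S$ is the set of all blocking atoms rather than a minimal one, $R^*$ may be the unique witness of no node at all. Then you must recurse on a smaller set that already covers $A$, which a measure determined by $A$ alone does not allow.
\end{itemize}

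The missing ingredient is the exact-endpoint statement of Lemma~\ref{lemma:SemanticLemmaBackward}. Fix $S$ and count witnesses only among atoms of $S$. Assume every $R\in S$ is the unique witness of some node of $A$, and that the nodes bordering $A$ are witnessed by no atom of $S$. The conclusion is a link with provenance $S$ whose endpoints are exactly $\textsf{prev}_\mathbb{D}(\min_\mathbb{D}A)$ and $\textsf{next}_\mathbb{D}(\max_\mathbb{D}A)$. In the inductive step, split $A$ at the nodes uniquely witnessed by $R^*$ (nonempty by minimality), not at all $R^*$-blocked nodes. Each remaining segment is then bordered by nodes witnessed by no atom of $S\setminus\{R^*\}$. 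After shrinking to a minimal $S^j\subseteq S\setminus\{R^*\}$ the hypotheses hold again, and the induction hypothesis gives links that end and start exactly at the uniquely-$R^*$ nodes (or at the borders of $A$). Only then does \textproc{Extend} produce a chain of overlapping intervals that \textproc{Merge} fuses into one link over $A$; soundness prevents this link from overshooting the borders of $A$. You then apply this to each gap between consecutive outputs with a minimal blocking set $S^*$. Condition~(1) there comes from the positive witnesses of $t_{i+1}$ together with the blocking tuples of the atoms in $S^*$; the fact that $t_{i+1}$ is an output does not by itself give it.
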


\section{Improving Time Bounds with Multiple Variable Orders}
\label{sec:multi-td}

So far our algorithms are based on a single variable order. In this section, we discuss how using multiple variable orders can improve the runtime bounds. This is analogous to Conjunctive Query evaluation, where the runtime can be improved from fractional hypertree width to submodular width by utilizing multiple tree decompositions. As a proof of concept, we apply this idea to give an improved runtime bound for $k$-cycles with negation, improving the preprocessing time from $O(|I|^2)$ to $O(|I|^{3/2})$. We leave it as an open question to give a more general characterization of the power of multiple variable orders for $\cqneg$ queries.

\subsection{Cycle Queries}

Consider the following general family of $k$-cycle queries:
\[
C_k^{\pm}(x_1, \dots, x_k) \gets X_1(x_1) \wedge \dots \wedge X_k(x_k) \wedge \pm R_1(x_1,x_2) \wedge \dots \wedge \pm R_k(x_k,x_1).
\]

Here, the relations $R_1,\dots,R_k$ can be either positive or negative. Let $m$ be the maximum size of $R_1,\dots,R_k$. If all edges are positive, then we obtain
\[
C_k(x_1, \dots, x_k) \gets X_1(x_1) \wedge \dots \wedge X_k(x_k) \wedge R_1(x_1,x_2) \wedge \dots \wedge R_k(x_k,x_1).
\]
which can be evaluated in time $O(m^{\textsf{subw}})$, where $\textsf{subw}(C_k) = 2 - \frac{1}{\lceil k/2 \rceil}$. Hence, as $k$ grows, the runtime approaches $O(m^2)$.
Next, we consider the case where at least one edge is negative.
Before we present the full algorithm for $C_k^{\pm}$, we first provide some intuition of the main technical idea.

\paragraph{Intuition.} Suppose we want to compute $C_k$ (all positive edges) using the variable order $x_1, \dots, x_k$, with canonical decomposition $\{x_1, x_2, x_k\} - \{x_2, x_3, x_k\} - \dots - \{x_{k-2}, x_{k-1}, x_k\} - \{x_{k-1},x_k\} - \{x_{k}\}$. The size of each bag is bounded by $O(m^2)$, so the total runtime for evaluating $C_k$ is $O(m^2)$. This bound can be improved by using multiple variable orders. Given a relation $R(x,y)$, we define the degree of $x$ given $y$ as $\deg_R(x \mid y) := \max_{v \in \pi_y R} |\{u : (u,v) \in R\}|$. The key idea is to partition the relations $R_1, R_2, \dots, R_k$ into subrelations with bounded degrees. Indeed, suppose that for every $i$, $\deg_{R_i}(x_{i} \mid x_{i+1}) \leq d$. Then, one can see that the size of each bag of the form $\{x_{i}, x_{i+1}, x_{k}\}$ is bounded by $O(m d^{i})$ instead of $O(m^2)$, since every increment of $i$ expands the number of values of $x_k$ reachable from a value over $x_i$ by a factor of at most $d$. By choosing an appropriate parameter $d$ and appropriate variable orders, we can achieve the improved runtime bound for $C_k$.

The expansion by a factor of $d$ is inherent in the all-positive case and this is what causes the runtime dependence on $k$. However, when we have at least one negative edge, we can circumvent this obstacle because, instead of storing for each value over $x_i$ the set of values of $x_1$ that it connects to through a path of positive edges, we store the complement (via the skip-link data structure). In this case, as the key formal lemma below shows, we pay no penalty of expansion.

We will say that a tuple $t$ \textit{satisfies} an atom $R$, if either $R$ is positive and $t$ is consistent with a tuple in $R$, or $R$ is negative and $t$ is not consistent with any tuple in $R$.

\begin{lemma}
\label{lemma:keykcyclelemma}
    Let $k \geq 2$ and consider a $k$-path query of the form:
    \[
    P_k^{\pm}(x_1, \dots, x_k) \gets X_1(x_1) \wedge \dots \wedge X_k(x_k) \wedge \neg R_1(x_1,x_2) \wedge \dots \wedge \pm R_{k-1}(x_{k-1},x_k).
    \]
    Here, the first atom is negative, and the remaining atoms are either positive or negative. Suppose that the following conditions hold for an instance $I$:
    \begin{enumerate}
        \item $R_1$ has the degree bound: $\deg_{R_1}(x_1 \mid x_2) \leq d$.
        \item for any $i\geq 3$, and any value $v_i$ over $X_i(x_i)$, there exists $v_{i-1}$ over $X_{i-1}(x_{i-1})$ such that $(v_{i-1},v_i)$ satisfies $R_{i-1}$.
    \end{enumerate}
    Then, for any value $v_k$ over $X_k(x_k)$, there are at most $d$ values $v_1$ over $X_1(x_1)$ such that there is no output tuple $t \in P_k^{\pm}(I)$ with $t.x_1 = v_1$ and $t.x_k = v_k$.
\end{lemma}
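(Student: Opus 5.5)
The plan is a direct witness construction: fix $v_k$, build one path backwards from $v_k$ to some value $v_2$ using condition~2, and then show that every $v_1$ outside the $R_1$-neighbourhood of that single $v_2$ completes this path to an output tuple. The degree bound then limits the exceptional $v_1$'s to at most $d$.

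Fix a value $v_k$ in $X_k$. For $k=2$ we simply set $v_2 := v_k$ and skip the construction. For $k\geq 3$ we apply condition~2 repeatedly, for $i=k,k-1,\dots,3$ in that order:
\begin{itemize}
\item at step $i$ we have already fixed $v_i\in X_i$;
\item condition~2 gives some $v_{i-1}\in X_{i-1}$ such that $(v_{i-1},v_i)$ satisfies $R_{i-1}$;
\item we fix this $v_{i-1}$ and continue.
\end{itemize}
This is a backward induction and uses only condition~2. It ends with values $v_2,\dots,v_k$ such that $v_j\in X_j$ for all $2\leq j\leq k$, and $(v_{j},v_{j+1})$ satisfies $R_j$ for every $2\leq j\leq k-1$. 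Note that it is irrelevant whether each $R_j$ with $j\geq 2$ is positive or negative, since condition~2 is stated in terms of satisfaction.

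Next, take any $v_1\in X_1$ with $(v_1,v_2)\notin R_1$. Then $(v_1,v_2)$ satisfies the negative atom $\neg R_1$. Hence the tuple $t=(v_1,v_2,\dots,v_k)$ satisfies every unary atom $X_j$ and every binary atom $\pm R_j$, so $t\in P_k^{\pm}(I)$ with $t.x_1=v_1$ and $t.x_k=v_k$. Taking the contrapositive, every $v_1\in X_1$ for which no such output tuple exists must satisfy $(v_1,v_2)\in R_1$. By condition~1, $\deg_{R_1}(x_1\mid x_2)\leq d$, so there are at most $d$ values $u$ with $(u,v_2)\in R_1$. This gives the claimed bound of $d$.

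There is no real obstacle here. The one point needing care is that condition~2 is only assumed for $i\geq 3$, so the backward construction must stop at $v_2$ and must not try to choose $v_1$. This is exactly why the argument works: the choice of $x_1$ is left free, and the complement structure of the negative edge $\neg R_1$ forbids at most $d$ values for it, rather than requiring a positive witness. We should also remark that the bound counts $v_1$'s relative to the single witness $v_2$. Other choices of the path could only rule out fewer $v_1$'s, so fixing one path is enough.
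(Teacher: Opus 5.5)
Your proof is correct, and it is a more direct route than the paper's. The paper argues by induction on $k$. It lets $V_{k-1}$ be the set of $v_{k-1}\in X_{k-1}$ such that $(v_{k-1},v_k)$ satisfies $R_{k-1}$. It then shows that $v_1$ is not output together with $v_k$ in $P_k^{\pm}$ if and only if it is not output together with any $v_{k-1}\in V_{k-1}$ in the shorter query $P_{k-1}^{\pm}$. Finally, it combines $V_{k-1}\neq\emptyset$ (condition~2) with the induction hypothesis, so the blocked set for $v_k$ is a nonempty intersection of sets of size at most $d$.

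You effectively unroll this induction. You fix one backward path $v_k,\dots,v_2$ and use only the easy extension direction: any $v_1\in X_1$ with $(v_1,v_2)\notin R_1$ completes the path to an output. That is all the upper bound requires. The converse direction of the paper's equivalence, which describes the blocked set exactly, is never needed for the count.

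Your version is shorter, needs no induction hypothesis on a shorter query, and makes the witness explicit: the blocked $v_1$'s lie in the $R_1$-neighbourhood of \emph{any} $v_2$ joined to $v_k$ by a satisfying path. The paper's formulation instead exposes the structural fact it stresses right after the proof. Moving one step along the path turns an ``or'' into an ``and'', so the blocked set can only shrink, and this is the intuition it carries into the $k$-cycle algorithm.
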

\begin{proof}
    We prove this by induction over $k$. In the base case $k=2$, the statement follows from condition $(1)$.

    Next, let $v_k$ be a value over $X_k(x_k)$. Let $V_{k-1}$ be the set of values over $X_{k-1}(x_{k-1})$ such that for $v_{k-1}\in V_{k-1}$, $(v_{k-1},v_k)$ satisfies $R_{k-1}$. $V_{k-1}$ is non-empty by $(2)$. Let $P_{k-1}^\pm$ be the subquery over the same input where atoms $X_k$ and $R_{k-1}$ are removed.

    We claim that a value $v_1$ over $X_1(x_1)$ does not occur in an output tuple $t$ with $v_k$ in $P_k^\pm$ if and only if $v_1$ does not occur in an output tuple $t'$ of $P_{k-1}^\pm$ with any $v_{k-1}\in V_{k-1}$.
    $\Rightarrow$: There cannot be an output tuple $t'$ in $P_{k-1}^\pm$ with $t'.{x_1} = v_1$ and $t'.{x_{k-1}}\in V_{k-1}$, because then we can extend $t'$ to an output tuple $t$ of $P_k^\pm$ with $t.x_1 = v_1$ and $t.x_k = v_k$, since $(t'.{x_{k-1}},v_k)$ satisfies $R_{k-1}$.
    $\Leftarrow$: If $v_1$ is not output with any $v_{k-1}\in V_{k-1}$ in $P_{k-1}^\pm$, then any output tuple $t$ over $P_k^\pm$ must have value over $X_{k-1}(x_{k-1})$ that is not in $V_{k-1}$. But then it cannot have $t.x_{k}=v_k$ since it does not satisfy $R_{k-1}$.

    In other words, for a $v_1$ to not be output with $v_k$ in $P_k^\pm$, it must not be output with any $v_{k-1}\in V_{k-1}$ in $P_{k-1}^\pm$. By the induction hypothesis, for each $v_{k-1}\in V_{k-1}$, there are at most $d$ such values.
\end{proof}

An important intuition in the above lemma is that we have switched from an \textit{or} operation to an \textit{and} operation when we grow to the next variable.

\paragraph{Algorithm.}
Given a query in the family $C_k^{\pm}$ with at least one negative edge (say $R_k$) and an instance $I$ with relation cardinalities at most $m$, we will first partition $R_k$ into two subrelations $R_k^l$ and $R_k^h$ such that $\deg_{R_k^l}(x_k \mid x_1) \leq \sqrt{m}$ and $\deg_{R_k^h}(x_1 \mid x_k) \leq \sqrt{m}$ -- this can be done by sending all values of $x_1$ with degree at most $\sqrt{m}$ to $R_k^l$, and the remaining values to $R_k^h$. Let $X_1^l$ be the values of $X_1$ with degree at most $\sqrt{m}$ in $R_k$ (including those that do not occur in $R_k$), and $X_1^h = X_1 \setminus X_1^l$. We will then create two subinstances of $I$: $I^l$ is obtained from $I$ by replacing $R_k$ with $R_k^l$ and $X_1$ with $X_1^l$, and $I^h$ is obtained from $I$ by replacing $R_k$ with $R_k^h$ and $X_1$ with $X_1^h$. For each subinstance, we will perform a simple preprocessing step to remove tuples such that condition $(2)$ of Lemma~\ref{lemma:keykcyclelemma} is satisfied: this can be easily achieved in linear time $O(m)$ via a sequence of semijoins/antijoins.

Then, we will use the variable order $x_1, x_2, \dots, x_{k-1}, x_k$ to evaluate $C_k^{\pm}$ on $I^l$, and the variable order $x_k, x_{k-1}, \dots, x_2, x_1$ for $I^h$. For each subinstance, we will apply the algorithm essentially as presented in the previous section, with only minor modifications to handle a degree constraint, which we did not consider before. The key observation is that we can use Lemma~\ref{lemma:keykcyclelemma} to improve the bound of the data structure from $O(m^2)$ to $O(m\sqrt{m})$, since the reduced query has an atom with a degree constraint of $\sqrt{m}$. We therefore achieve preprocessing time $O(m^{3/2})$.

\begin{restatable}{theorem}{kcycletheorem}\label{theorem:kcycle}
    For every $k \geq 2$, if the query $C_k^{\pm}$ contains at least one negative edge, then it can be enumerated with constant delay and preprocessing time $O(|I|^{3/2})$.
\end{restatable}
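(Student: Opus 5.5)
The plan is to follow the algorithm sketched just before the statement. Rename variables so that $R_k(x_k,x_1)$ is negative, and let $m\le|I|$ bound the relation sizes.

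\textbf{Partitioning.} Split $X_1$ by the degree of $x_1$ in $R_k$ with threshold $\sqrt m$. This gives $I^l$, where every $x_1$-value has at most $\sqrt m$ neighbours in $R_k^l$, and $I^h$, where $|X_1^h|\le\sqrt m$ and hence $\deg_{R_k^h}(x_1\mid x_k)\le\sqrt m$. Since the $X_1$-parts are disjoint, $C_k^\pm(I)=C_k^\pm(I^l)\uplus C_k^\pm(I^h)$. Enumerating the two parts one after the other keeps constant delay. On each subinstance we first apply linear-time semijoin/antijoin passes along the path so that condition (2) of Lemma~\ref{lemma:keykcyclelemma} holds. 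Negative atoms only forbid pairs, so the pruning removes exactly the values with no satisfying neighbour, and it does not change the output.

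\textbf{Light part.} Run the main algorithm on $I^l$ with order $x_1,\dots,x_k$. Eliminating $x_1$ costs $O(m^{3/2})$, because its worst bag is $\{x_1,x_2,x_k\}$ with $R_1$ either positive or negative. In the positive case the join with $R_k^l$ costs $m\sqrt m$. In the negative case it touches only pairs with $x_k\in R_k^l(x_1)$, again $m\sqrt m$.

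After this step the reduced query contains a negative atom $U(x_2,x_k)$ of blocked pairs, together with atoms on $\{x_2\}$ and $\{x_k\}$. Every later bag has the form $\{x_i,x_{i+1},x_k\}$ and may carry a negative atom $U_i(x_i,x_k)$. I will apply Lemma~\ref{lemma:keykcyclelemma} to the path $x_k$--$x_1$--$x_2$--$\cdots$--$x_i$, reading it from $x_i$ toward $x_k$. The lemma needs a first negative edge with degree bound $\sqrt m$, and $R_k^l$ gives $\deg(x_k\mid x_1)\le\sqrt m$. It then says that for each $x_i$-value, at most $\sqrt m$ values of $x_k$ are blocked. So $|U_i|\le m\sqrt m$ in total, and the skip links created for $x_i$ can be charged against $|U_i|$ instead of against the $m^2$ AGM bound.

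The needed modification of \textproc{BuildEnumDS} is to generate skip-link points through $U_i$ rather than through the AGM join of the bag. Doing so materializes only the blocked triples, so each step costs $O(m\sqrt m)$. The skip-link merging and the constant-delay argument are then exactly those of Lemma~\ref{lemma:enumfunctionstepoutputs}.

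\textbf{Heavy part.} Use the reverse order $x_k,\dots,x_1$ with the symmetric argument. Here the path starts from the negative edge $R_k^h$, and we use $\deg(x_1\mid x_k)\le\sqrt m$. Even more simply, $|X_1^h|\le\sqrt m$ bounds every bag $\{x_i,x_{i-1},x_1\}$ by $m\sqrt m$ directly.

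\textbf{Main obstacle.} The delicate step is the light part: showing that the negative atoms $U_i$ produced by the reduction really coincide with the pairs bounded by Lemma~\ref{lemma:keykcyclelemma}, and that this is an exact equality rather than a superset. This requires checking the lemma's hypotheses on the reduced instance, in particular condition (2) after the earlier eliminations. It also requires arguing that building and merging skip links costs time proportional to these blocked sets rather than to the positive join. I would first verify this by writing the blocked sets explicitly as complements, following the ``or-to-and'' intuition in the proof of the lemma, and bound each elimination step separately.
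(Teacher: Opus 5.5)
Your overall structure matches the paper's: the degree split of $R_k$ at $\sqrt m$, linear-time pruning for condition (2), opposite variable orders on $I^l$ and $I^h$, and Lemma~\ref{lemma:keykcyclelemma} to show that each passed-up negative atom $U_i(x_i,x_k)$ has at most $\sqrt m$ values of $x_k$ per $x_i$-value. Your heavy-part shortcut is valid and simpler than the paper's ``symmetric'' treatment. Since $x_1$ is eliminated last and $|X_1^h|\le\sqrt m$, the size-aware AGM bound already gives $O(m\sqrt m)$ per bag, including the label sets over $\{x_{i-1},x_1\}$. The point you list as delicate, that $U_i$ consists of blocked pairs, only needs containment, and that follows from Lemma~\ref{lemma:SemanticLemmaForward} and Lemma~\ref{lemma:ReductionCorrect}.

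The genuine gap is the step you postpone as the ``main obstacle'' in the light part, and your sketch for it does not work. On a bag $\{x_i,x_{i+1},x_k\}$, line~11 of \textproc{BuildEnumDS} computes $L^S$ from the induced query on $\{x_{i+1},x_k\}$. That is a Cartesian product of size up to $m^2$: once $x_i$ is projected away, the degree bound on $U_i$ is invisible, so the general induced-subquery analysis breaks.

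For positive $R_i$, reading your ``generate through $U_i$'' as $L^S:=\pi_{x_{i+1},x_k}(U_i\bowtie R_i)$ is fine. For negative $R_i$, the merged $\{R_i,U_i\}$-skip links with labels $(x_{i+1},x_k)$ are the hard case, and neither of your suggestions handles it:
\begin{itemize}
\item Materializing ``blocked triples'' costs $|R_i|\cdot|X_k|$, which can be $m^2$.
\item Pairing each $(a,b)\in R_i$ with the $c$ such that $(a,c)\in U_i$ (your ``pairs with $x_k\in R_k^l(x_1)$'' at step~1) only finds labels whose blocks hit the \emph{same} node.
\end{itemize}
Suppose $R_i$ blocks node $a$ under $b$ and $U_i$ blocks the next node $a'$ under $c$. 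The two singleton links are then non-containingly overlapping and must be merged under label $(b,c)$. Moreover, $(b,c)$ must be passed up if the merged link spans the list. Omitting it produces wrong answers (the enumeration lands on the blocked $a'$), not just a larger delay.

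The missing idea is the paper's set $W$. Match $t\in U_i$ with $t'\in R_i$ whenever $t.x_i$ and $t'.x_i$ are consecutive in the single-bucket linked list for $x_i$; when $R_i$ is positive, take the ordinary join instead. Every node carries at most $\sqrt m$ blocked $x_k$-values, so $|W|=O(m\sqrt m)$. Then set $L^S:=\pi_{x_{i+1},x_k}W$, and semijoin $P^{R^*}$ with $W$ before line~15.

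Finally, you still need the correctness argument for the labels dropped relative to the original $L^S$. By Lemma~\ref{lemma:SemanticLemmaBackward}, each maximal blocked run such a label would cover is already spanned by an $\{R_i\}$- or $\{U_i\}$-skip link with the same endpoints and a sub-label. Hence correctness and constant delay are unaffected.
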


The proof is in \apporfull{sec:app:proofkcycle}. The figure below shows the partition of $R_k$ and the variable orders used for each subinstance.

\begin{example}
    As an example, consider the query $Q(x_1,x_2,x_3,x_4) \gets R_1(x_1,x_2) \wedge R_2(x_2,x_3) \wedge R_3(x_3,x_4) \wedge \neg R_4(x_4,x_1)$.
    \begin{center}
    \scalebox{0.8}{\begin{tikzpicture}[
    scale=0.9,
    vtx/.style={circle, draw, inner sep=2pt, font=\small},
    pedge/.style={draw=black!70, thick},
    negedge/.style={dashed, draw=pOrange, very thick},
    bag/.style={draw, rounded corners=3pt, inner sep=4pt, font=\scriptsize,
                fill=gray!8, draw=pGray!60},
    rbag/.style={draw, rounded corners=3pt, inner sep=4pt, font=\scriptsize,
                fill=pOrange!10, draw=pOrange!70},
    fwd/.style={draw=gray!60, thick},
]

\begin{scope}[xshift=0cm]
    \node[vtx] (n1) at ( 0.0,  0.9) {$x_1$};
    \node[vtx] (n2) at ( 1.1,  0.0) {$x_2$};
    \node[vtx] (n3) at ( 0.0, -0.9) {$x_3$};
    \node[vtx] (n4) at (-1.1,  0.0) {$x_4$};
    \draw[pedge] (n1)--(n2) node[midway, right, font=\scriptsize]{$R_1$};
    \draw[pedge] (n2)--(n3) node[midway, right, font=\scriptsize]{$R_2$};
    \draw[pedge] (n3)--(n4) node[midway, left,  font=\scriptsize]{$R_3$};
    \draw[negedge] (n4)--(n1) node[midway, above left, inner sep=1pt, font=\scriptsize]{$\neg R_4$};
\end{scope}

\draw[gray!25, thick] (1.85, 1.40) -- (1.85, -2.75);

\node[font=\scriptsize, text=black!80]
    at (5.1, 1.32) {$\sigma=(x_1,\ldots,x_4),\;\deg_{R_4}(x_4\mid x_1)\leq\sqrt{m}$};

\begin{scope}[xshift=3.7cm]
    \node[font=\scriptsize] at (0, 0.8) {$S=\emptyset$};
    \node[bag] (a4) at (0,  0.2) {$\{\underline{x_4}\}$};
    \node[bag] (a3) at (0, -0.65) {$\{\underline{x_3},x_4\}$};
    \node[bag] (a2) at (0, -1.5) {$\{\underline{x_2},x_3\}$};
    \node[bag] (a1) at (0, -2.35) {$\{\underline{x_1},x_2\}$};
    \draw[fwd] (a4)--(a3); \draw[fwd] (a3)--(a2); \draw[fwd] (a2)--(a1);
\end{scope}

\begin{scope}[xshift=6.3cm]
    \node[font=\scriptsize] at (0, 0.8) {$S=\{R_4\}$};
    \node[rbag] (b4) at (0,  0.2) {$\{\underline{x_4}\}$};
    \node[rbag] (b3) at (0, -0.65) {$\{\underline{x_3},x_4\}$};
    \node[rbag] (b2) at (0, -1.5) {$\{\underline{x_2},x_3,x_4\}$};
    \node[rbag] (b1) at (0, -2.35) {$\{\underline{x_1},x_2,x_4\}$};
    \draw[fwd] (b4)--(b3); \draw[fwd] (b3)--(b2); \draw[fwd] (b2)--(b1);
\end{scope}

\draw[gray!25, thick] (8.45, 1.40) -- (8.45, -2.75);

\node[font=\scriptsize, text=black!80]
    at (11.6, 1.32) {$\sigma=(x_4,\ldots,x_1),\;\deg_{R_4}(x_1\mid x_4)\leq\sqrt{m}$};

\begin{scope}[xshift=9.8cm]
    \node[font=\scriptsize] at (0, 0.8) {$S=\emptyset$};
    \node[bag] (c1) at (0,  0.2) {$\{\underline{x_1}\}$};
    \node[bag] (c2) at (0, -0.65) {$\{\underline{x_2},x_1\}$};
    \node[bag] (c3) at (0, -1.5) {$\{\underline{x_3},x_2\}$};
    \node[bag] (c4) at (0, -2.35) {$\{\underline{x_4},x_3\}$};
    \draw[fwd] (c1)--(c2); \draw[fwd] (c2)--(c3); \draw[fwd] (c3)--(c4);
\end{scope}

\begin{scope}[xshift=12.4cm]
    \node[font=\scriptsize] at (0, 0.8) {$S=\{R_4\}$};
    \node[rbag] (d1) at (0,  0.2) {$\{\underline{x_1}\}$};
    \node[rbag] (d2) at (0, -0.65) {$\{\underline{x_2},x_1\}$};
    \node[rbag] (d3) at (0, -1.5) {$\{\underline{x_3},x_2,x_1\}$};
    \node[rbag] (d4) at (0, -2.35) {$\{\underline{x_4},x_3,x_1\}$};
    \draw[fwd] (d1)--(d2); \draw[fwd] (d2)--(d3); \draw[fwd] (d3)--(d4);
\end{scope}

\end{tikzpicture}}
    \end{center}
    For the first variable order $(x_1, x_2, x_3, x_4)$, in the bag where $x_i$ disappears (underlined in the figure), the negative atom $R(x_i,x_4)$ has degree constraint $\deg_R(x_4 \mid x_i) \leq \sqrt{m}$.
\end{example}

\subsection{Generalizing the Width Measure}

Let $\mH_k^+$ be the hypergraph of the $k$-cycle query where all edges are positive, and $\smH_k$ a signed hypergraph of some $k$-cycle query, where at least one edge is negative. For $k \geq 3$, we get the following inequality for the known width measures:
{\setlength{\abovedisplayskip}{6pt}\setlength{\belowdisplayskip}{6pt}%
\[
\underbrace{\textsf{width}(\smH_k)}_{\frac{3}{2}}
\leq \underbrace{\textsf{subw}(\mH_k^+)}_{2 - \frac{1}{\lceil k/2 \rceil}}
\leq \underbrace{\textsf{fhw}(\mH_k^+)}_{\frac{3}{2}\text{ if }k=3,\ 2\text{ if }k\geq 4}
\leq \underbrace{\sfhow(\smH_k)}_{\substack{\frac{3}{2}\text{ if }k=3\text{ and some }R_i\text{ is positive},\\ 2\text{ otherwise}}}
\]}

Here, $\textsf{width}(\smH_k)$ is the width measure with multiple variable orders, obtained by Theorem~\ref{theorem:kcycle}. We leave it as an open question how one can generalize the formula for $\sfhow$ to capture the power of multiple variable orders, similar to how submodular width generalizes fractional hypertree width for CQs. As one can see from the above inequalities, when negation is present, the optimal width measure may behave very differently from submodular width.

\section{Related Work}

\paragraph{Width Measures for Queries.} For positive CQs, increasingly fine-grained tractability is captured by a hierarchy of hypergraph width measures: treewidth and hypertree width~\cite{GottlobLeoneScarcello02}, fractional hypertree width~\cite{GroheMarx06}, and submodular width~\cite{Marx10}. Generalized hypertree width and related hypergraph invariants are studied in~\cite{AdlerGottlobGrohe07,GottlobPichler01}. On the algorithmic side, the submodular-width preprocessing bound is realized by the \textsf{PANDA} algorithm~\cite{Panda,PandaExpress}.

\paragraph{Enumeration for Conjunctive Queries.} Bagan, Durand, and Grandjean~\cite{BaganDG07} showed that free-connex acyclic CQs admit constant-delay enumeration with linear-time preprocessing, a baseline that has since been extended in several directions: dichotomies for unions of CQs~\cite{CarmeliKroll21}, factorized representations as a succinct intermediate format~\cite{OlteanuZavodny15}, and enumeration under updates via dynamic-Yannakakis-style algorithms~\cite{IdrisUgarteVansummeren17,BerkholzKeppelerSchweikardt17}. For $\cqneg$, Zhao et al.~\cite{SignedAcyclic} identified the free-connex signed-acyclic class as the precise boundary for linear-time preprocessing with constant delay, sharpening earlier polylogarithmic-delay results of Brault-Baron~\cite{BraultBaron12,BraultBaronThesis}.

\paragraph{Space--Time Trade-offs and Parameterized Delay.} Beyond the constant-delay regime, a more recent line of work parameterizes the trade-off between preprocessing cost and per-answer latency. Deep and Koutris~\cite{DeepKoutris18} introduce compressed representations of CQ results that explicitly expose this trade-off, building on the factorized representations of~\cite{OlteanuZavodny15}. From a different angle, knowledge-compilation approaches~\cite{AmarilliCapelliMonetSenellart20} relate width parameters to the size of d-DNNF and decision-DNNF circuits encoding the answer set, yielding a circuit-based measure of the same trade-off.

\paragraph{Conjunctive Queries with Negation.} Brault-Baron~\cite{BraultBaron12,BraultBaronThesis} initiated the structural study of $\cqneg$, proving that all-negative queries are tractable if and only if their hypergraph is $\beta$-acyclic, and introduced the signed-acyclicity property used throughout this paper. Going beyond $\beta$-acyclicity, Lanzinger~\cite{NestSetWidth} introduced nest-set width as a finer hypergraph parameter that captures tractability of the decision problem for a strictly larger class of queries. \rev{Capelli et al.~\cite{CapelliIrwin24,CapelliCarmeliIrwinSalvati26} showed that, for queries with only negative atoms, bounded nest-set width also makes counting tractable, a question left open in~\cite{NestSetWidth}.} Related structural parameters have been studied in adjacent settings: Gottlob and Pichler~\cite{GottlobPichler01} compare hypergraph acyclicity and hypertree width to clique-width in the context of model checking, and Carbonnel, Romero, and \v{Z}ivn\'y~\cite{CarbonnelRomeroZivny20} introduce \emph{point-width} for Max-CSPs as a generalization of $\beta$-acyclicity.

\paragraph{\rev{Direct Access for $\cqneg$.}}
Direct-access algorithms support random access to the $j$-th answer in time independent of the database size. 
\rev{Capelli et al.~\cite{CapelliIrwin24,CapelliCarmeliIrwinSalvati26} study direct access for full $\cqneg$ queries, where the answers are ordered lexicographically according to a variable order, and give two algorithms whose preprocessing time is governed by $\sfhow$. The first is a reduction to positive queries: direct access for $Q$ is simulated by binary search over direct-access structures for the positive queries over the hypergraphs $\mH^S$, which are built using the algorithm of Bringmann, Carmeli, and Mengel~\cite{BringmannCarmeliMengel22}. The second compiles $Q(I)$ into an ordered circuit, which branches on one variable at a time and caches subproblems. After annotating the circuit with counts, direct access is answered by walking down the circuit. The first algorithm has preprocessing time $O(|I|^{\sfhow})$ and the second $\tilde{O}(|I|^{\sfhow})$, and both have polylogarithmic access time.
The authors also show that for self-join-free queries, their preprocessing time is optimal for direct access in lexicographic order, assuming the Zero-Clique Conjecture. 
These results naturally give polylogarithmic-delay enumeration algorithms for full $\cqneg$ queries.
Our algorithm extends their results by keeping the preprocessing time $O(|I|^{\sfhow})$ but reducing the delay to constant, by using the linked-list and skip-link framework from \cite{SignedAcyclic}.}

\section{Conclusion}

We have presented an algorithm for enumerating the output of $\cqneg$ queries with constant delay, where the preprocessing time is described by the \rev{fractional signed hyperorder width}. While this deepens the understanding of the complexity of computing $\cqneg$ queries, understanding the power of multiple variable orders is an enticing direction for future research.

\bibliographystyle{ACM-Reference-Format}
\bibliography{references}

@article{SignedAcyclic,
  author    = {Hangdong Zhao and Austen Z. Fan and Xiating Ouyang and Paraschos Koutris},
  title     = {Conjunctive Queries with Negation and Aggregation: {A} Linear Time Characterization},
  journal   = {Proceedings of the {ACM} on Management of Data ({PACMMOD})},
  volume    = {2},
  number    = {2},
  pages     = {97:1--97:26},
  year      = {2024},
  publisher = {ACM},
  doi       = {10.1145/3651138},
  note      = {PODS 2024}
}

@inproceedings{FAQ,
  author    = {Mahmoud Abo Khamis and Hung Q. Ngo and Atri Rudra},
  title     = {{FAQ}: Questions Asked Frequently},
  booktitle = {Proceedings of the 35th {ACM} {SIGMOD}-{SIGACT}-{SIGAI} Symposium on Principles of Database Systems ({PODS})},
  pages     = {13--28},
  year      = {2016},
  publisher = {ACM},
  doi       = {10.1145/2902251.2902280}
}

@inproceedings{Panda,
  author    = {Mahmoud Abo Khamis and Hung Q. Ngo and Dan Suciu},
  title     = {What Do {S}hannon-type Inequalities, Submodular Width, and Disjunctive {D}atalog Have to Do with One Another?},
  booktitle = {Proceedings of the 36th {ACM} {SIGMOD}-{SIGACT}-{SIGAI} Symposium on Principles of Database Systems ({PODS})},
  pages     = {429--444},
  year      = {2017},
  publisher = {ACM},
  doi       = {10.1145/3034786.3056105}
}

@inproceedings{PandaExpress,
  author    = {Mahmoud Abo Khamis and Hung Q. Ngo and Dan Suciu},
  title     = {{PANDAExpress}: a Simpler and Faster {PANDA} Algorithm},
  booktitle = {Proceedings of the 45th {ACM} {SIGMOD}-{SIGACT}-{SIGAI} Symposium on Principles of Database Systems ({PODS})},
  year      = {2026},
  publisher = {ACM},
  eprint    = {2512.10217},
  archivePrefix = {arXiv}
}

@inproceedings{BaganDG07,
  author    = {Guillaume Bagan and Arnaud Durand and Etienne Grandjean},
  title     = {On Acyclic Conjunctive Queries and Constant Delay Enumeration},
  booktitle = {Proceedings of the 21st International Workshop on Computer Science Logic ({CSL})},
  pages     = {208--222},
  year      = {2007},
  publisher = {Springer},
  doi       = {10.1007/978-3-540-74915-8\_18}
}

@article{NPRR,
  author    = {Hung Q. Ngo and Ely Porat and Christopher R{\'{e}} and Atri Rudra},
  title     = {Worst-case Optimal Join Algorithms},
  journal   = {Journal of the {ACM}},
  volume    = {65},
  number    = {3},
  pages     = {16:1--16:40},
  year      = {2018},
  publisher = {ACM},
  doi       = {10.1145/3180143}
}

@inproceedings{LeapfrogTriejoin,
  author    = {Todd L. Veldhuizen},
  title     = {Leapfrog Triejoin: A Simple, Worst-Case Optimal Join Algorithm},
  booktitle = {Proceedings of the 17th International Conference on Database Theory ({ICDT})},
  pages     = {96--106},
  year      = {2014},
  doi       = {10.5441/002/icdt.2014.13}
}

@inproceedings{BraultBaron12,
  author    = {Johann Brault-Baron},
  title     = {A Negative Conjunctive Query is Easy if and only if it is Beta-Acyclic},
  booktitle = {Proceedings of the 26th International Workshop / 21st Annual Conference of the {EACSL} on Computer Science Logic ({CSL})},
  series    = {LIPIcs},
  volume    = {16},
  pages     = {137--151},
  year      = {2012},
  publisher = {Schloss Dagstuhl -- Leibniz-Zentrum f{\"u}r Informatik},
  doi       = {10.4230/LIPIcs.CSL.2012.137}
}

@phdthesis{BraultBaronThesis,
  author    = {Johann Brault-Baron},
  title     = {De la pertinence de l'{\'e}num{\'e}ration: complexit{\'e} en logiques propositionnelle et du premier ordre},
  school    = {Universit{\'e} de Caen Basse-Normandie},
  year      = {2013}
}

@inproceedings{GottlobPichler01,
  author    = {Georg Gottlob and Reinhard Pichler},
  title     = {Hypergraphs in Model Checking: Acyclicity and Hypertree-Width versus Clique-Width},
  booktitle = {Proceedings of the 28th International Colloquium on Automata, Languages and Programming ({ICALP})},
  series    = {Lecture Notes in Computer Science},
  volume    = {2076},
  pages     = {708--719},
  year      = {2001},
  publisher = {Springer},
  doi       = {10.1007/3-540-48224-5\_58}
}

@inproceedings{NestSetWidth,
  author    = {Matthias Lanzinger},
  title     = {Tractability Beyond {$\beta$}-Acyclicity for Conjunctive Queries with Negation},
  booktitle = {Proceedings of the 40th {ACM} {SIGMOD}-{SIGACT}-{SIGAI} Symposium on Principles of Database Systems ({PODS})},
  pages     = {355--369},
  year      = {2021},
  publisher = {ACM},
  doi       = {10.1145/3452021.3458308}
}

@inproceedings{Marx10,
  author    = {D{\'a}niel Marx},
  title     = {Tractable Hypergraph Properties for Constraint Satisfaction and Conjunctive Queries},
  booktitle = {Proceedings of the 42nd {ACM} Symposium on Theory of Computing ({STOC})},
  pages     = {735--744},
  year      = {2010},
  publisher = {ACM},
  doi       = {10.1145/1806689.1806790}
}

@inproceedings{GroheMarx06,
  author    = {Martin Grohe and D{\'a}niel Marx},
  title     = {Constraint Solving via Fractional Edge Covers},
  booktitle = {Proceedings of the 17th Annual {ACM-SIAM} Symposium on Discrete Algorithms ({SODA})},
  pages     = {289--298},
  year      = {2006},
  publisher = {SIAM},
  doi       = {10.1145/1109557.1109590}
}

@article{GottlobLeoneScarcello02,
  author    = {Georg Gottlob and Nicola Leone and Francesco Scarcello},
  title     = {Hypertree Decompositions and Tractable Queries},
  journal   = {Journal of Computer and System Sciences},
  volume    = {64},
  number    = {3},
  pages     = {579--627},
  year      = {2002},
  publisher = {Elsevier},
  doi       = {10.1006/jcss.2001.1809}
}

@article{AdlerGottlobGrohe07,
  author    = {Isolde Adler and Georg Gottlob and Martin Grohe},
  title     = {Hypertree Width and Related Hypergraph Invariants},
  journal   = {European Journal of Combinatorics},
  volume    = {28},
  number    = {8},
  pages     = {2167--2181},
  year      = {2007},
  publisher = {Elsevier},
  doi       = {10.1016/j.ejc.2007.04.013}
}

@article{CarmeliKroll21,
  author    = {Nofar Carmeli and Markus Kr{\"o}ll},
  title     = {On the Enumeration Complexity of Unions of Conjunctive Queries},
  journal   = {{ACM} Transactions on Database Systems},
  volume    = {46},
  number    = {2},
  pages     = {5:1--5:41},
  year      = {2021},
  publisher = {ACM},
  doi       = {10.1145/3450263}
}

@inproceedings{IdrisUgarteVansummeren17,
  author    = {Muhammad Idris and Mart{\'\i}n Ugarte and Stijn Vansummeren},
  title     = {The Dynamic {Y}annakakis Algorithm: Compact and Efficient Query Processing Under Updates},
  booktitle = {Proceedings of the 2017 {ACM} International Conference on Management of Data ({SIGMOD})},
  pages     = {1259--1274},
  year      = {2017},
  publisher = {ACM},
  doi       = {10.1145/3035918.3064027}
}

@inproceedings{BerkholzKeppelerSchweikardt17,
  author    = {Christoph Berkholz and Jens Keppeler and Nicole Schweikardt},
  title     = {Answering Conjunctive Queries under Updates},
  booktitle = {Proceedings of the 36th {ACM} {SIGMOD}-{SIGACT}-{SIGAI} Symposium on Principles of Database Systems ({PODS})},
  pages     = {303--318},
  year      = {2017},
  publisher = {ACM},
  doi       = {10.1145/3034786.3034789}
}

@article{OlteanuZavodny15,
  author    = {Dan Olteanu and Jakub Z{\'a}vodn{\'y}},
  title     = {Size Bounds for Factorised Representations of Query Results},
  journal   = {{ACM} Transactions on Database Systems},
  volume    = {40},
  number    = {1},
  pages     = {2:1--2:44},
  year      = {2015},
  publisher = {ACM},
  doi       = {10.1145/2656335}
}

@inproceedings{DeepKoutris18,
  author    = {Shaleen Deep and Paraschos Koutris},
  title     = {Compressed Representations of Conjunctive Query Results},
  booktitle = {Proceedings of the 37th {ACM} {SIGMOD}-{SIGACT}-{SIGAI} Symposium on Principles of Database Systems ({PODS})},
  pages     = {307--322},
  year      = {2018},
  publisher = {ACM},
  doi       = {10.1145/3196959.3196979}
}

@article{AmarilliCapelliMonetSenellart20,
  author    = {Antoine Amarilli and Florent Capelli and Mika{\"e}l Monet and Pierre Senellart},
  title     = {Connecting Knowledge Compilation Classes and Width Parameters},
  journal   = {Theory of Computing Systems},
  volume    = {64},
  number    = {5},
  pages     = {861--914},
  year      = {2020},
  publisher = {Springer},
  doi       = {10.1007/s00224-019-09930-4}
}

@article{CarbonnelRomeroZivny20,
  author    = {Cl{\'e}ment Carbonnel and Miguel Romero and Stanislav {\v Z}ivn{\'y}},
  title     = {Point-Width and {M}ax-{CSP}s},
  journal   = {{ACM} Transactions on Algorithms},
  volume    = {16},
  number    = {4},
  pages     = {54:1--54:28},
  year      = {2020},
  publisher = {ACM},
  doi       = {10.1145/3409447}
}

@inproceedings{CapelliIrwin24,
  author    = {Florent Capelli and Oliver Irwin},
  title     = {Direct Access for Conjunctive Queries with Negations},
  booktitle = {Proceedings of the 27th International Conference on Database Theory ({ICDT})},
  series    = {LIPIcs},
  volume    = {290},
  pages     = {13:1--13:20},
  year      = {2024},
  publisher = {Schloss Dagstuhl -- Leibniz-Zentrum f{\"u}r Informatik},
  doi       = {10.4230/LIPIcs.ICDT.2024.13}
}

@misc{FullVersion,
  author    = {Simon Frisk and Paraschos Koutris},
  title     = {Conjunctive Queries with Negation: Beyond Signed-Acyclicity},
  year      = {2026},
  eprint    = {2610.03982},
  archivePrefix = {arXiv},
  url       = {https://arxiv.org/abs/2610.03982}
}

@article{CapelliCarmeliIrwinSalvati26,
  author    = {Florent Capelli and Nofar Carmeli and Oliver Irwin and Sylvain Salvati},
  title     = {Direct Access for Conjunctive Queries with Negations},
  journal   = {Logical Methods in Computer Science},
  volume    = {22},
  number    = {1},
  pages     = {24:1--24:53},
  year      = {2026},
  doi       = {10.46298/LMCS-22(1:24)2026}
}

@inproceedings{BringmannCarmeliMengel22,
  author    = {Karl Bringmann and Nofar Carmeli and Stefan Mengel},
  title     = {Tight Fine-Grained Bounds for Direct Access on Join Queries},
  booktitle = {Proceedings of the 41st ACM SIGMOD-SIGACT-SIGAI Symposium on Principles of Database Systems ({PODS})},
  pages     = {427--436},
  year      = {2022}
}

\newpage
\appendix
\section{Missing Proofs from Section~\ref{sec:nice-td}}
\label{sec:app:missing-proofs-sec3}

\bagscontaineddifferentslemma*
\begin{proof}
    For $j \in \{1,2\}$, let $\mH_0^{S_j}, \mH_1^{S_j}, \dots, \mH_n^{S_j}$ denote the sequence of hypergraphs obtained by running the canonical construction on $\mH^{S_j}$ with variable order $\sigma$, and let $U_i^{S_j} := \bigcup_{K \in \partial_{\mH_{i-1}^{S_j}} \sigma_i} K$ be the corresponding bag, so that $\chi_\sigma^{S_j}(\sigma_i) = U_i^{S_j}$. It therefore suffices to show that $U_i^{S_1} \subseteq U_i^{S_2}$ for every $i \in [n]$.

    We prove by induction on $i \in \{0, 1, \dots, n\}$ the following stronger claim: for every hyperedge $K_1 \in \mE(\mH_i^{S_1})$, there exists a hyperedge $K_2 \in \mE(\mH_i^{S_2})$ with $K_1 \subseteq K_2$.

    \emph{Base case ($i=0$).} Since $\mE(\mH_0^{S_1}) = \mE^+ \cup S_1 \subseteq \mE^+ \cup S_2 = \mE(\mH_0^{S_2})$, we may take $K_2 = K_1$.

    \emph{Inductive step.} Assume the claim holds for $i-1$. We first show $U_i^{S_1} \subseteq U_i^{S_2}$. Let $K_1 \in \partial_{\mH_{i-1}^{S_1}} \sigma_i$. By the inductive hypothesis, there exists $K_2 \in \mE(\mH_{i-1}^{S_2})$ with $K_1 \subseteq K_2$. Since $\sigma_i \in K_1 \subseteq K_2$, we have $K_2 \in \partial_{\mH_{i-1}^{S_2}} \sigma_i$, and therefore $K_1 \subseteq K_2 \subseteq U_i^{S_2}$. Taking the union over all such $K_1$ yields $U_i^{S_1} \subseteq U_i^{S_2}$.

    It remains to extend the inductive claim from $i-1$ to $i$. By the construction, $\mE(\mH_i^{S_j})$ consists of the edges of $\mH_{i-1}^{S_j}$ that do not contain $\sigma_i$, together with the new hyperedge $U_i^{S_j} \setminus \{\sigma_i\}$. The new edge is handled by the inclusion just established: $U_i^{S_1} \setminus \{\sigma_i\} \subseteq U_i^{S_2} \setminus \{\sigma_i\}$. For any inherited edge $K_1 \in \mE(\mH_{i-1}^{S_1})$ with $\sigma_i \notin K_1$, the inductive hypothesis gives $K_2 \in \mE(\mH_{i-1}^{S_2})$ with $K_1 \subseteq K_2$. If $\sigma_i \notin K_2$, then $K_2 \in \mE(\mH_i^{S_2})$ and we are done. Otherwise $K_2 \in \partial_{\mH_{i-1}^{S_2}} \sigma_i$, so $K_2 \subseteq U_i^{S_2}$, and hence $K_1 \subseteq K_2 \setminus \{\sigma_i\} \subseteq U_i^{S_2} \setminus \{\sigma_i\} \in \mE(\mH_i^{S_2})$.
\end{proof}

\sfhwequalsfhwlemma*
\begin{proof}
This is an immediate consequence of the definition and Lemma~\ref{lemma:TransformToCanonicalTD}. Indeed, for any $\sigma$ and $S=\emptyset$, we have $\chi_\sigma^S(\sigma_i) = \chi_\sigma(\sigma_i)$, where $\chi_\sigma(\sigma_i)$ is the bag where $\sigma_i$ disappears in the $\sigma$-canonical tree decomposition of $\smH$.
\end{proof}

\signedacyclicwidthonelemma*
\begin{proof}
    ($\Rightarrow$) Suppose $\smH$ is signed-acyclic. By the standard characterization of signed-acyclicity in terms of variable orders, there exists a variable order $\sigma$ such that for every $S \subseteq \mE^-$ and every $i \in [n]$, the bag $\chi_\sigma^S(\sigma_i)$ of the $\sigma$-canonical tree decomposition $\mT_\sigma^S$ of $\mH^S$ is contained in some hyperedge of $\mH^S$. For such a $\sigma$, $\rho_{\mH^S}^*(\chi_\sigma^S(\sigma_i)) = 1$ for every $S$ and every $i$, so $\sfhow(\smH) \le 1$. Since $\rho_{\mH^S}^* \ge 1$ on every non-empty set of vertices, we conclude that $\sfhow(\smH) = 1$.

    ($\Leftarrow$) Conversely, suppose $\sfhow(\smH) = 1$, witnessed by some variable order $\sigma$. Then for every $S \subseteq \mE^-$ and every $i \in [n]$, $\rho_{\mH^S}^*(\chi_\sigma^S(\sigma_i)) = 1$, which means that each bag $\chi_\sigma^S(\sigma_i)$ is contained in some hyperedge of $\mH^S$. Hence $\mT_\sigma^S$ witnesses $\textsf{fhw}(\mH^S) = 1$, so $\mH^S$ is $\alpha$-acyclic. Since this holds for every $S \subseteq \mE^-$, $\smH$ is signed-acyclic.
\end{proof}

\betafhwlesfhwlemma*
\begin{proof}
Let us write $f(\sigma, S) := \max_i \rho^*_{\mH^S}(\chi_\sigma^S(\sigma_i))$, so that
\[
\sfhow(\smH) \;=\; \min_{\sigma} \max_{S \subseteq \mE^-} f(\sigma, S),
\qquad
\beta\textsf{-fhw}(\smH) \;=\; \max_{S \subseteq \mE^-} \min_{\sigma} f(\sigma, S).
\]
The inequality is the standard max-min inequality applied to $f$. Concretely, let $\sigma^*$ be a variable order achieving the outer minimum in $\sfhow(\smH)$. Then for every $S \subseteq \mE^-$,
\[
\min_{\sigma} f(\sigma, S) \;\leq\; f(\sigma^*, S) \;\leq\; \max_{S' \subseteq \mE^-} f(\sigma^*, S') \;=\; \sfhow(\smH),
\]
where the first inequality holds because $\sigma^*$ is one specific choice of $\sigma$ and the second because $S$ is one specific choice of $S'$. Taking the maximum over $S$ on the left-hand side yields $\beta\textsf{-fhw}(\smH) \leq \sfhow(\smH)$.
\end{proof}

\iffullmode

\section{A Query Decomposition Lemma}

The lemma below is used in a similar way to the standard query decomposition lemma from previous work, arguing that we are bounded by the AGM bound when expanding to a new variable.

\begin{lemma}
\label{lemma:lightestextension}
    Let $\mH=(V,E)$ be a hypergraph. Let $U\subset V$ where $|U|=|V|-1$, and $X$ the variable in $V \setminus U$. Let $L$ be the set of outputs over the induced query of $\mH$ on $U$. Let $E_X$ be the hyperedges $e$ with $X\in e$. Let $\rho$ be a fractional edge cover of $\mH$. Then:
    \[
    \sum_{t\in L} \min_{R\in E_X} \deg(X | t) \leq O(|I|^{\rho})
    \]
\end{lemma}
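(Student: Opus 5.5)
We prove the bound with the standard query decomposition lemma behind worst-case optimal joins (as in the NPRR/Generic Join analysis), using Hölder's inequality. Let $\rho=(w_K)_{K\in E}$ be a fractional edge cover of $V$ with total weight $\rho$; we read $|I|^{\rho}$ as $|I|^{\sum_K w_K}$. Split the weights according to whether an edge contains $X$. For $K\in E_X$ write $w_K$. For $K\notin E_X$, $K\subseteq U$, and we use the restriction of $\rho$ to these edges. The projections $K\setminus\{X\}$ of the edges in $E_X$ also live inside $U$. Since $X$ is covered, $\sum_{K\in E_X} w_K\ge 1$. Scaling down, we may assume $\sum_{K\in E_X} w_K = 1$ exactly; reducing weights only makes the right-hand side smaller, but we must keep $U$ covered. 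To keep $U$ covered, we let the projections $\pi_{K\setminus X}R_K$ cover $U$ with the original weights.

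\textbf{Step 1: replace the minimum by a weighted geometric mean.} Pick weights $\lambda_K\ge 0$ with $\sum_{K\in E_X}\lambda_K=1$ and $\lambda_K\le w_K$. For each $t\in L$,
\[
\min_{R\in E_X}\deg_R(X\mid t)\le \prod_{K\in E_X}\deg_{R_K}(X\mid t)^{\lambda_K}.
\]
Here $\deg_{R_K}(X\mid t)$ is the number of $X$-extensions of $\pi_{K\setminus X}t$ in $R_K$.

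\textbf{Step 2: rewrite the sum as an output-size bound on $V$.} The sum $\sum_{t\in L}\prod_K \deg_{R_K}(X\mid t)^{\lambda_K}$ is at most the weighted count
\[
\sum_{t\in L}\ \prod_{K\in E_X}\deg_{R_K}(X\mid t)^{\lambda_K}\ \prod_{K\in E_X}1^{w_K-\lambda_K}.
\]
This is exactly the quantity controlled by the query decomposition lemma of Ngo et al. (the one used in NPRR's analysis). It states that for a fractional edge cover $(w_K)$ of $V$,
\[
\sum_{t\in L}\prod_{K\in E_X}|R_K\ltimes t|^{w_K}\le\prod_K|R_K|^{w_K}.
\]
Since $\lambda_K\le w_K$ and every degree is at least $1$ (otherwise the term vanishes), $\deg^{\lambda_K}\le\deg^{w_K}$, so we may pass to $\lambda_K=w_K$. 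The ratio $\prod_K|R_K|^{w_K}$ is at most $|I|^{\rho}$.

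\textbf{Step 3: prove the decomposition lemma.} We argue by induction on $|U|$, eliminating one variable of $U$ at a time. At each stage, apply Hölder's inequality over the values of that variable, using that the weights covering it sum to at least $1$. This is the standard proof, and it yields $\prod_K |R_K|^{w_K}$.

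The main obstacle is Step 1's normalization. If $\sum_{K\in E_X}w_K>1$, the bound $\min\le\prod\deg^{w_K}$ still holds when all degrees are at least $1$, because the geometric mean with total weight $s\ge1$ dominates the minimum. So no rescaling is needed. The delicate point is handling the terms for $t\in L$ that have zero extensions; these contribute $0$ to both sides and can simply be dropped.
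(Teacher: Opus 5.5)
Your proof is correct, but it takes a different route from the paper's.

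\textbf{The paper's route.} The paper uses no H\"older-type inequality. It builds a new instance. Every relation $R_K$ with $X\in K$ keeps its projection onto $K\setminus\{X\}$, but inside each group of size $d$ the $X$-values are relabeled to $1,\dots,d$. Relations not containing $X$ are left untouched.
\begin{itemize}
\item All relation sizes stay the same.
\item The $X$-extensions of each $t\in L$ become the common initial segment $\{1,\dots,\min_R\deg(X\mid t)\}$.
\item Hence the full join has exactly $\sum_{t\in L}\min_R\deg(X\mid t)$ tuples.
\end{itemize}
The AGM bound $\prod_K|R_K|^{w_K}\le|I|^{\rho}$ then applies directly as a black box.

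\textbf{Your route.} You apply the NPRR query decomposition lemma with the partition $V=U\cup\{X\}$. You combine it with the elementary inequality $\min_{K}d_K\le\prod_{K\in E_X}d_K^{w_K}$, which holds for integers $d_K\ge 1$ whenever $\sum_{K\in E_X}w_K\ge 1$.

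\textbf{Comparison.} Your argument gives a stronger intermediate bound, with the product of weighted degrees in place of the minimum, and it is the standard Generic-Join analysis. However, it depends on the decomposition lemma, which your Step~3 only sketches. The paper's relabeling trick needs nothing beyond AGM, and it shows directly that the quantity being bounded is an output size.

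\textbf{Two small clean-ups.}
\begin{itemize}
\item The rescaling in your first paragraph and the auxiliary weights $\lambda_K$ of Step~1 are unnecessary. As you note at the end, the decomposition lemma with the original weights $w_K$, together with the inequality above, already finishes the proof.
\item Discard the terms with $\min_R\deg(X\mid t)=0$ at the outset, before Step~1, not afterwards. If $\lambda_K=0$, a zero degree does not make $\prod_K d_K^{\lambda_K}$ vanish, so the termwise passage from $\lambda_K$ to $w_K$ could fail for such $t$.
\end{itemize}
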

\begin{proof}
    Suppose this is false. Suppose for each $t\in L$, we find extensions to $X$ using the atom $R$ reaching $X$ with the minimal degree, according to the above. Let $T$ be the set of all such obtained extensions. We claim there is an input instance $I$ to the query over $\mH$ such that the output of the query on $I$ is $T$. We construct $I$ as follows: for each $R\in E$, with $X\not\in R$, we use the same relation instance. If $X\in R$, for each $\pi_{R\setminus\{X\}}$, letting $d=\deg_R(X|R\setminus\{X\})$, we create $d$ tuples in $R$, each with values $\pi_{R\setminus\{X\}}$ on $R\setminus\{X\}$ and values $1,\dots, d$ on $X$. It is not hard to see that the output size is now $|T|$, since for each $t\in L$, the number of extensions to $X$ is exactly $\min_{R\in E_X} \deg(X | t)$. But then, the output size is above the AGM bound, which is a contradiction.
\end{proof}

\section{Induced Subquery Lemma}
\label{sec:appendix-induced-subquery}

\begin{lemma}
\label{lemma:InducedSubquery}
    Let $\mH=(V,E)$ be a hypergraph and let $U\subseteq V$. Then
    $
    \textsf{AGM}(\mH[U]) \leq \textsf{AGM}(\mH).
    $
\end{lemma}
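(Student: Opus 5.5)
The plan is to argue at the level of the fractional edge cover LP, without touching any particular instance. I read $\textsf{AGM}(\mH)$ as the (weighted) AGM bound $\min_{w}\prod_{K \in E}|R_K|^{w_K}$, with the minimum over fractional edge covers $w$ of $V$. This specializes to $|I|^{\rho^*_{\mH}(V)}$ when all relations have size at most $|I|$. For the induced query $\mH[U]$, I take the hyperedge multiset $\{K \cap U : K \in E,\ K \cap U \neq \emptyset\}$. The relation attached to $K \cap U$ is $\pi_{K\cap U} R_K$, as is implicit in how the induced joins $\textproc{Join}(\mH_i^S[\cdot])$ are used in Algorithm~\ref{algorithm:BuildEnumDS}.

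The key step is a transfer of covers. Fix an optimal fractional edge cover $w$ of $V$ in $\mH$, and define $w'_{K\cap U} := w_K$ for every $K$ with $K \cap U \neq \emptyset$. If several $K$ share the same trace $K \cap U$, I would keep them as separate parallel edges of the multiset, or equivalently sum their weights.

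Next I check feasibility. For every $i \in U \subseteq V$,
\[
\sum_{K' \ni i} w'_{K'} \;=\; \sum_{K \in E:\, i \in K} w_K \;\geq\; 1 .
\]
The equality holds because $i \in K \cap U$ iff $i \in K$ when $i \in U$, and the inequality is the cover condition of $w$ at $i$. So $w'$ is a fractional edge cover of $U$ in $\mH[U]$.

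For the objective, I use $|\pi_{K\cap U}R_K| \leq |R_K|$ and the fact that dropping the edges with $K \cap U = \emptyset$ only removes factors $|R_K|^{w_K} \geq 1$ (assuming all relations are nonempty; otherwise both bounds are trivially $0$). Together these give
\[
\textsf{AGM}(\mH[U]) \;\le\; \prod_{K'}|R'_{K'}|^{w'_{K'}} \;\le\; \prod_{K\in E}|R_K|^{w_K} \;=\; \textsf{AGM}(\mH).
\]
In the unweighted form this is just $\rho^*_{\mH[U]}(U) \le \rho^*_{\mH}(V)$.

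I do not expect a real obstacle. The only point needing care is the bookkeeping of the induced hypergraph: handling edges whose trace on $U$ is empty, parallel edges with the same trace, and the convention for empty relations, so that the weights transfer without loss.
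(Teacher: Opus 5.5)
Your argument is correct and is essentially the paper's proof: take an optimal fractional edge cover of $\mH$, move each weight $w_K$ to the trace $K \cap U$, and observe that every $u \in U$ is still covered, so the optimum for $\mH[U]$ cannot exceed that for $\mH$. The paper does exactly this at the level of $\rho^*$, without your weighted/projection bookkeeping; one caveat on that extension is that your aside that with an empty relation ``both bounds are trivially $0$'' fails when the empty relation sits on an edge disjoint from $U$ (then $\textsf{AGM}(\mH)=0$ while $\textsf{AGM}(\mH[U])$ need not be), but this does not affect the inequality $\rho^*_{\mH[U]}(U) \le \rho^*_{\mH}(V)$ that the lemma is about.
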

\begin{proof}
    Let $\mathbf{x}=(x_e)_{e\in E}$ be an optimal fractional edge cover for $\mH$. Restricting the linear program constraints to vertices in $U$ yields a feasible fractional edge cover for $\mH[U]$ with the same objective value, because for each $u\in U$ we still have $\sum_{e\ni u} x_e \ge 1$. Therefore the optimum for $\mH[U]$ cannot exceed the optimum for $\mH$, i.e., $\textsf{AGM}(\mH[U]) \leq \textsf{AGM}(\mH)$.
\end{proof}

\section{Algorithm Analysis}
\label{sec:app:algorithm-analysis}

In this section, we give the full analysis of the algorithm, resulting in the proof of the main theorem.

\subsection{Runtime}

\begin{lemma}
    $\textproc{Extend}(v,S)$ runs in $O(1)$.
\end{lemma}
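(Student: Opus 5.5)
The plan is to go through the procedure \textproc{Extend}$(v,S)$ line by line and show that each line costs $O(1)$ in data complexity. The data-complexity convention matters: the query is fixed, so $n$, $|\mE^-|$, the number of subsets of $\mC$, and the sizes of all bags $\chi_\sigma^{S'}(\sigma_i)$ are constants.

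The key observation is that every skip link stored in $\mathbb{D}$ is indexed at its start and end nodes. These indices are local hash tables keyed by the label tuple, and, as I would arrange explicitly in the construction, also separated by provenance. Line 1 asks for the longest incoming skip link $s'$ with three properties: $s'_{end}=\pi_{\sigma_i}v$, $\textsf{prov}(s')\subsetneq S$, and $s'_{label}\sim v$.

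Two facts make this lookup cheap.
\begin{itemize}
\item There are only constantly many candidate provenances $S'\subsetneq S$.
\item For each such $S'$, the label of an $S'$-skip link is a tuple over the fixed variable set $Y_{S'}=\chi_\sigma^{\mA_{S'}}(\sigma_i)\setminus\chi_\sigma^\emptyset(\sigma_i)$. Consistency with $v$ therefore forces the label to equal exactly $\pi_{Y_{S'}}v$.
\end{itemize}
For each $S'$ we compute the key $\pi_{Y_{S'}}v$ in constant time, since $v$ has constant arity, and probe the end-node hash table at $\pi_{\sigma_i}v$ with the pair $(S',\pi_{Y_{S'}}v)$.

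The probe must return a single skip link, not a list we would have to scan. I would maintain the invariant that, among the $S'$-skip links ending at a given node with a given label, only the longest one (the one with earliest start) is stored, or at least is the one the hash entry points to. This invariant is compatible with \textproc{Merge}. Merging removes non-containingly overlapping links with equal labels, so the links with the same label and the same end node are nested. We can therefore keep the one with minimal start when inserting $\bar T^{S'}$ into $\mathbb{D}$, with constant overhead per insertion.

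Taking the minimum start over the constantly many $S'$ gives $s_{in}$. Line 2 is then one pointer dereference, or a $\textsf{prev}_\mathbb{D}$ lookup. Lines 3--4 are symmetric, using the start-node index and the maximal end. Line 5 is a projection of $v$ onto a fixed set of variables, and line 6 is a constant-size assignment. Finally, comparing two positions to decide which is "longer" should be $O(1)$. To guarantee this, I would store with each linked-list node its rank in the list, assigned when the list is built in lines 2--3.

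The main obstacle is justifying that each hash probe returns a unique, already-longest skip link, rather than a bucket of many links that must be scanned. This is where I would rely on the structure produced by \textproc{Merge} and on the insertion discipline described above. I expect to state this indexing convention explicitly as part of how $\mathbb{D}$ stores $\bar T^{S'}$.
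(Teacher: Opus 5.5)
Your proposal is correct and follows the paper's argument: for each of the constantly many $S'\subsetneq S$, you do one probe in the local start/end hash tables at the node $\pi_{\sigma_i}v$, keyed by $(S',\pi_{Y_{S'}}v)$; the fact that $Y_{S'}$ lies inside the variables of $v$, which you use but do not state, follows from Lemma~\ref{lemma:BagsContainedDifferentS} since $\mA_{S'}\subseteq\mA_S$. The paper simply asserts that each probe yields at most one consistent link, which the pseudocode of \textproc{Merge} alone does not force (nested same-label links sharing an endpoint can survive it), so your keep-the-longest indexing and stored ranks make explicit what the paper leaves implicit.
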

\begin{proof}
    Recall that each node in the linked list has a hash table storing skip links which start/end at that node, where the key is the label of the skip link and the value is the start and end nodes.

    $\textproc{Extend}(v,S)$ can therefore naively find consistent skip links by checking each subset $S'\subset S$. For each $S'$, there is at most one incoming/outgoing skip link with a consistent label. It therefore runs in constant time.
\end{proof}

\begin{lemma}
    $\textproc{Merge}(T)$ runs in $O(|T| + |L^\emptyset|)$.
\end{lemma}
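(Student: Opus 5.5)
The plan is to implement \textproc{Merge} as a single left-to-right sweep per linked list, so that the pairwise loop as written never has to be executed literally. First, I group the skip links of $T$ by the pair (bucket, label). Hashing each $s \in T$ on its bucket key $b$ together with $s_{label}$ costs $O(|T|)$. Skip links with different labels or in different buckets are never merged, so each group can be processed independently. Within a group, merging non-containingly overlapping skip links until none remain is the same as replacing the intervals by the connected components of their overlap relation. I would check that nested intervals can be left as they are, or also absorbed, without changing the semantics of covering.

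To sweep a group, I need its skip links ordered by start position in the linked list. I avoid comparison sorting, which would cost a logarithmic factor, by assigning to every node of every linked list its rank within the list. This takes one pass over the hash table, i.e., $O(|L^\emptyset|)$ time, done once per call. With ranks in hand there are two options. I can bucket the skip links of all groups at once by start rank using a global array indexed by the node and its position, scanning the lists once and emitting the skip links starting at each node in list order; this costs $O(|T| + |L^\emptyset|)$ in total. Alternatively, I can radix-sort the pairs (group id, start rank), where both keys are polynomially bounded in $|I|$, which is linear time in the RAM model with $O(\log|I|)$-bit words. Either way, every group ends up listed in increasing start order within $O(|T|+|L^\emptyset|)$ total time.

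Given a group sorted by start, the standard interval-union sweep applies. I keep a current merged interval $[a,c]$. For the next skip link $s$, if $s_{start} < c$ in rank I extend $c \gets \max(c, s_{end})$; otherwise I output the current interval and start a new one. This is $O(1)$ per skip link, since end-rank comparisons are constant time. The provenance of each output is $S$, as in the pseudocode, and the result is a set in which no two same-label links overlap non-containingly. The delicate step is ensuring the total cost does not depend on the number of groups times the list length. This is why the node scan must be shared across all groups in one pass, touching each skip link once, rather than rescanned per label; with that, the bound $O(|T| + |L^\emptyset|)$ follows. Finally, I would add a brief remark that this sweep yields the same output semantics as the iterative pairwise \textproc{Merge}.
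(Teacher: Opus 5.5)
Your proposal is correct and takes essentially the same route as the paper. Both do one left-to-right scan of the linked lists, absorbing any same-label skip link that starts strictly inside the currently active one, for $O(1)$ work per skip link plus $O(|L^\emptyset|)$ for the traversal. The differences are only in implementation. You bucket $T$ by start node and group it by (bucket, label), whereas the paper keeps a label-keyed hash set of active links during the scan. You also make explicit the node ranks needed to evaluate $\max(s_{end}, s'_{end})$ in constant time, which the paper leaves implicit.
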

\begin{proof}
    Implement $\textproc{Merge}(T)$ as a linear scan of each linked list. During each scan, we maintain a hash set of skip links $s$ for which the current visited node $v$ is between $s_{start}$ and $s_{end}$, where the hash key is $s_{label}$. When visiting $v$, we check if any skip link $s'$ which starts at $v$ has the same label as a skip link $s$ in the hash set. If so, we remove $s$ and $s'$ and add in a merged skip link $s''=(s_{start}, \max\{s_{end}, s'_{end}\}, s_{label})$.

    We pay the cost of $O(|T|)$ since each skip link creates at most the following work: one addition to the hash set, one removal from the hash set, one merge check and one merge operation, each of which is $O(1)$. Additionally, we need to pay the cost of iterating over all linked lists, which has cost $O(|L^\emptyset|)$ since this is the total number of nodes across all linked lists.
\end{proof}

\begin{lemma}
\label{lemma:canuseinputstats}
    Denote by $\textsf{AGM}(H)$ the AGM bound on hypergraph $H$.
    For any $i$, for any $S\subseteq \mE_i^-$, for any $X \subseteq V_i$, the join over $\mH_i^S[X]$ is bounded by $\textsf{AGM}(\mH^{\mA_S}[X])$.
\end{lemma}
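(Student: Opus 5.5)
We argue by induction on $i$, proving a slightly stronger invariant: for every hyperedge $K$ of $\mH_i^S$ (positive, or negative in $S$), the relation stored for $K$ is contained in the projection onto $K$ of the join of $\mH^{\mA_S}$ restricted to $K$. In fact we aim for a bound: $|R_K|$ is at most $\textsf{AGM}(\mH^{\mA_{S_K}}[K])$, where $S_K=\emptyset$ for positive atoms and $S_K=\{K\}$ for negative ones, so $\mA_{S_K}=\mA_K$. This invariant is then combined with a cover-transfer argument to bound the join over $\mH_i^S[X]$.

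The base case $i=1$ is immediate. Here $\mA_S=S$, so $\mH_1^S=\mH^{\mA_S}$ and the join over $\mH_1^S[X]$ is exactly the join over $\mH^{\mA_S}[X]$.

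For the inductive step, consider the new hyperedges created when eliminating $\sigma_i$.
\begin{itemize}
\item The positive atom $U_i^\emptyset\setminus\{\sigma_i\}$ is $\pi L^\emptyset$, a projection of the join of $\mH_i^\emptyset$ on a bag.
\item The negative atom $U_i^{S'}\setminus\{\sigma_i\}$ consists of label/bucket pairs drawn from $L^{S'}$, the join over $\mH_i^{S'}[\chi_\sigma^{\mA_{S'}}(\sigma_i)\setminus\{\sigma_i\}]$.
\end{itemize}
By the inductive hypothesis applied to these joins, each new relation is bounded by the AGM bound of $\mH^{\mA_{S'}}$ induced on its variable set. We also use monotonicity from Lemma~\ref{lemma:InducedSubquery} and the fact that annotations compose, $\mA_{U_i^{S'}}=\bigcup_{K\in S'}\mA_K$.

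The core step is a cover transfer. Take an optimal fractional edge cover $w$ of $\mH^{\mA_S}[X]$ and use it to upper-bound the join over $\mH_i^S[X]$.
\begin{itemize}
\item Edges of $\mH^{\mA_S}$ that survive in $\mH_i^S$ keep their weight.
\item Weight on edges that were absorbed when eliminating earlier variables is moved onto the hyperedge that absorbed them. Every such original edge $K$ is contained in some hyperedge of $\mH_i^{S}$: this follows from the containment argument in the proof of Lemma~\ref{lemma:BagsContainedDifferentS}, applied along the elimination with annotations.
\end{itemize}
The absorbing hyperedge is a derived relation whose size is bounded by an AGM bound over the original edges it absorbed. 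We then apply the AGM product bound: a product of original relation sizes raised to the weights $w$ dominates the product over derived relations.

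The main obstacle is that step: showing that replacing absorbed original edges by a derived relation does not lose the AGM bound, i.e. that the join size over the derived edges is still at most $\prod |R_K|^{w_K}$ over the original edges. I would handle it by not bounding derived relation sizes separately, but by showing directly that every tuple of the join over $\mH_i^S[X]$ extends to (or is a projection of) a tuple of the join over $\mH^{\mA_S}[X']$ for some $X'\supseteq X$. This uses that every stored tuple was produced by a join witnessing consistency with the original atoms in its annotation. Then Lemma~\ref{lemma:InducedSubquery} and projection give the bound $\textsf{AGM}(\mH^{\mA_S}[X'])$. Reducing this to $\textsf{AGM}(\mH^{\mA_S}[X])$ requires care, and I would try to argue that $X'$ can be taken as $X$ itself, because witnesses only fix values on variables already in $X$.
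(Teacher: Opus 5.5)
There is a genuine gap, and it is exactly the step you flag as the main obstacle. A size invariant of the form $|R_K|\le\textsf{AGM}(\mH^{\mA_{S_K}}[K])$ cannot give the lemma. When you move the weight $w_K$ of an absorbed original edge $K$ onto the derived relation $R$ that absorbed it, the AGM product picks up $|R|^{w_K}$ (or $|\pi_{X\cap K}R|^{w_K}$). The target bound instead charges $|K|^{w_K}$, and derived relations such as $U_j^\emptyset\setminus\{\sigma_j\}$ or $U_j^{S'}\setminus\{\sigma_j\}$ can be superlinear, as large as the AGM bound of a bag.

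Your fallback does not repair this, for two reasons. First, passing to $X'\supseteq X$ goes the wrong way, since Lemma~\ref{lemma:InducedSubquery} gives $\textsf{AGM}(\mH^{\mA_S}[X])\le\textsf{AGM}(\mH^{\mA_S}[X'])$. Second, the extension claim is false. The negative atoms are built from $L^{S'}$, a join over $\chi_\sigma^{\mA_{S'}}(\sigma_i)\setminus\{\sigma_i\}$ that does not contain $\sigma_i$. So different atoms of $S'$ may be witnessed by different values of $\sigma_i$, and distinct derived atoms in $S$ have independent witnesses anyway. In the paper's complete example (the mixed triangle query), $(b_1,c_1)\in U_1^{\{A,B\}}$, yet no $x$ satisfies both $A(x,b_1)$ and $B(x,c_1)$. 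Hence this tuple is not a projection of any tuple of the join of $\mH^{\{A,B\}}$ over $\{x,y,z\}$.

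What is missing is a tuple-level invariant that charges each original edge to itself rather than to its absorber. The paper proves by induction on $i$ the following: for every $R'\in\mE^+\cup\mA_S$ there is $R\in\mE_i^+\cup S$, with $R\in S$ if $R'$ is negative, such that $V_i\cap R'\subseteq R$ and $\pi_{V_i\cap R'}R\subseteq\pi_{V_i\cap R'}R'$. The inductive step holds because the atom replacing the previous representative $R_{i-1}$ is contained in a projection of a join in which $R_{i-1}$ participates. Its projection onto $V_i\cap R'$ therefore stays inside $\pi_{V_i\cap R'}R_{i-1}\subseteq\pi_{V_i\cap R'}R'$.

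With this invariant, every tuple $t$ of the join over $\mH_i^S[X]$ satisfies $\pi_{X\cap R'}t\in\pi_{X\cap R'}R'$ for all $R'$. So $t$ already lies in the join over $\mH^{\mA_S}[X]$. The weight of $R'$ becomes a constraint on $\pi_{V_i\cap R'}R$, whose size is at most $|R'|$, and you need neither extensions to eliminated variables nor bounds on derived-relation sizes.

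Your set-level containment is only half of this invariant. It should also read $K\cap V_i$ rather than $K$, since $K$ may contain eliminated vertices. Your opening invariant could supply the missing half if read as containment in the induced join $\mH^{\mA_{S_K}}[K]$, with each original relation projected separately, rather than in a projection of the full join. But you immediately trade it for a size bound and never use it. Your closing hope that $X'=X$ works is precisely what this projection invariant makes rigorous.
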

\begin{proof}
    We will prove the following with induction over $i$: for any $R' \in \mE^+ \cup \mA_S$, there is a $R \in \mE_i^+\cup S$ such that (1) $V_i \cap R' \subseteq R$ (sets of variables) and (2) $\pi_{V_i \cap R'}R \subseteq \pi_{V_i \cap R'}R'$ (sets of tuples), where $R \in S$ if $R'$ is negative. Hence we can transfer the weight of $R'$ in a fractional edge cover of $\mH^{\mA_S}[X]$ to a cardinality constraint over $\pi_{V_i \cap R'}R$ over the variables $V_i \cap R'$, which gives the same AGM bound for $\mH_i^S[X]$. For $i=1$, the statement is trivial since $\smH_1=\smH$ and $\mA_S=S$, so we may take $R=R'$.

    For the inductive step, let
    \[
    S_{i-1} = (S \cap \mE_{i-1}^-) \cup \bigcup_{U_{i-1}^{S'}\setminus\{\sigma_{i-1}\} \in S} S'
    \]
    be the set of atoms at step $i-1$ that $S$ is built from. Since $\mA_{U_{i-1}^{S'}} = \mA_{S'}$, we have $\mA_{S_{i-1}} = \mA_S$, so by the induction hypothesis applied to $S_{i-1}$, there is $R_{i-1} \in \mE_{i-1}^+ \cup S_{i-1}$ satisfying (1) and (2), where $R_{i-1} \in S_{i-1}$ if $R'$ is negative. If $\sigma_{i-1} \notin R_{i-1}$, then $R_{i-1}$ is carried over to step $i$, and we take $R = R_{i-1}$. Otherwise, $R_{i-1}$ is used to create a new atom, which we take as $R$: if $R_{i-1}$ is positive, $R = U_{i-1}^\emptyset\setminus\{\sigma_{i-1}\}$, and if it is negative, $R_{i-1} \in S'$ for some $R = U_{i-1}^{S'}\setminus\{\sigma_{i-1}\} \in S$. Since $R$ is a projection of a join containing $R_{i-1}$, (1) and (2) carry over to $R$.
\end{proof}

\enumdsconstructionlemma*
\begin{proof}
    Let $I$ be the input size of the original query and let $I_P$ be an upper bound on $|U_j^S|$ over all $j<i$ and $S$. Denote by $T = |I|^{\sfhow(\smH)}$ the target cost. Since all $|U_j^S|$ are the output of line $2$ or $17$ at step $j$, showing those lines are bounded by $T$ inductively gives $I_P \leq T$; we thus treat $I_P \leq T$ as given in what follows.

    By Lemma~\ref{lemma:canuseinputstats}, every join over $\mH_i^S[X]$ below is bounded by $\textsf{AGM}(\mH^{\mA_S}[X]) \leq T$, where the inequality follows from the definition of $\sfhow(\smH)$. To prove the lemma, it suffices to bound the cost of each line by $\textsf{AGM}(\mH^{\mA_S}[X]) + |I| + |I_P|$.

    We claim we need to bound the cost of the following lines by $T$: $2$, $7$, $11$ and $15$. By the previous lemmas on the runtime of $\textproc{Extend}$ and $\textproc{Merge}$, other lines run in time linear of the output size of the mentioned lines.

    \textbf{Lines 2 and 7}: by Lemma~\ref{lemma:canuseinputstats}, these run in $O(|I| + |I_P| + \textsf{AGM}(\mH^\emptyset[\chi_\sigma^\emptyset(\sigma_i)]))$ and $O(|I| + |I_P| + \textsf{AGM}(\mH^{\mA_R}[\chi_\sigma^{\mA_R}(\sigma_i)]))$ respectively, both bounded by $T$.

    \textbf{Line 11}: by Lemma~\ref{lemma:canuseinputstats}, the runtime is $O(|I| + |I_P| + \textsf{AGM}(\mH^{\mA_S}[\chi_\sigma^{\mA_S}(\sigma_i) \setminus \{\sigma_i\}]))$. By Lemma~\ref{lemma:InducedSubquery}, the AGM term is at most $\textsf{AGM}(\mH^{\mA_S}[\chi_\sigma^{\mA_S}(\sigma_i)]) \leq T$.

    \textbf{Line 15}: the total cost across iterations of line $13$ is
    \[
        \sum_{t\in L^S} \min_{R\in S} \deg_{P^R}(\sigma_i \mid t).
    \]
    We claim for any atom $R' \in \mH_i^S[\chi_\sigma^{\mA_S}(\sigma_i)]$, $R'$ is in some $P^R$ (if it is positive, it is in all, if it is negative, it is in $P^R$ where $R=R'$).
    Then $\deg_{P^R}(\sigma_i\mid t) \leq \deg_{R'}(\sigma_i\mid t)$. By Lemma~\ref{lemma:lightestextension} applied to $\mH_i^S[\chi_\sigma^{\mA_S}(\sigma_i)]$ together with Lemma~\ref{lemma:canuseinputstats}, this sum is $O(|I| + |I_P| + \textsf{AGM}(\mH^{\mA_S}[\chi_\sigma^{\mA_S}(\sigma_i)])) \leq T$.
\end{proof}

\subsection{Semantic Lemmas}

We now present two lemmas which formalize precisely when a certain skip link exists in the data structure. This will help us analyze both correctness and efficiency.

\begin{lemma}
\label{lemma:SemanticLemmaForward}
    Let $S \subseteq \mC$.
    Let $\sigma_i$ be a variable in the variable order, $t_{i+1}$ a suffix tuple over $\smH_{i+1}$, $\mL$ a linked list in $\mathbb{D}_i$, and $A=\{a_1,\dots,a_m\}$ a set of consecutive values in $\mL$.
    For an atom $R\in S$, we say $R$ witnesses $a\in A$ if there is a tuple $t_R\in R$ with $\pi_{\sigma_i}(t_R)=a$ and $t_R \sim t$.
    If there is a skip link $s$ in $\mL$ with $\textsf{prov}(s) = S$ covering $A$, with label and bucket $t=\ell\cdot b$, then:
    \begin{enumerate}
        \item $t \in \textproc{Join}(\mH_i^{S}[\chi_\sigma^{\mA_{S}}(\sigma_i)\setminus\{\sigma_i\}])$
        \item For every $a\in A$, there is an $R\in S$ witnessing $a$
    \end{enumerate}
\end{lemma}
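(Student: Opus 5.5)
We argue by induction on $|S|$, following the order in which Algorithm~\ref{algorithm:BuildEnumDS} creates $S$-skip links. The claim concerns every skip link $s$ with $\textsf{prov}(s)=S$ in $\mL$, and each such link arises in one of two ways. Either it is produced directly by \textproc{Extend}$(v,S)$ for some $v \in V^S$, or it is the output of a sequence of \textproc{Merge} steps applied to such extended links. So we first prove both properties for the links in $T^S$, then show that \textproc{Merge} preserves them. The induction hypothesis supplies both properties for all $S'$-skip links with $S' \subsetneq S$.

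\emph{Property (1).} Every link in $T^S$ is built inside the loop over $\ell\cdot B \in L^S$ (line~13). It comes from a point $v \in (\ell\cdot B)\bowtie P^{R^*}$. \textproc{Extend} assigns it the label $\pi_{\chi_\sigma^{\mA_S}(\sigma_i)\setminus\chi_\sigma^\emptyset(\sigma_i)} v$, and its bucket is the projection of $v$ onto $\chi_\sigma^\emptyset(\sigma_i)\setminus\{\sigma_i\}$. By Lemma~\ref{lemma:BagsContainedDifferentS}, these two sets together form exactly $\chi_\sigma^{\mA_S}(\sigma_i)\setminus\{\sigma_i\}$. Hence label$\cdot$bucket equals $\ell\cdot B \in L^S$, which is the join in (1). \textproc{Merge} only combines links whose labels are equal and which lie in the same list, hence the same bucket. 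The merged link therefore keeps the same pair $t$, and (1) survives merging.

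\emph{Property (2), unmerged links.} Let $s$ be the output of \textproc{Extend}$(v,S)$ and write $a_v=\pi_{\sigma_i}v$. The covered nodes split into three groups: $a_v$ itself, the nodes strictly inside $s_{in}$, and the nodes strictly inside $s_{out}$. If $s_{in}$ or $s_{out}$ is null, its group is empty, because we then step to $\textsf{prev}$ or $\textsf{next}$. The node $a_v$ is witnessed by $R^*\in S$, since $v\in P^{R^*}$ and $v\sim t$. Now take the incoming link $s_{in}$. It has $\textsf{prov}(s_{in})=S'\subsetneq S$ and $(s_{in})_{label}\sim v$, and it covers the nodes strictly between its endpoints. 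Its endpoint $(s_{in})_{end}=a_v$ itself needs no witness from $s_{in}$. The induction hypothesis applied to $s_{in}$ gives, for each node it covers, a witness $R\in S'\subseteq S$ consistent with the label$\cdot$bucket $t'$ of $s_{in}$. The remaining step is to lift this to consistency with $t$. Witness tuples live over $R$'s variables. I would argue that $R$'s variables outside $\sigma_i$ lie in $\chi_\sigma^{\mA_{S'}}(\sigma_i)\setminus\{\sigma_i\}$, which is the domain of $t'$. The pairs $t'$ and $t$ agree on that domain: the link is selected only when $(s_{in})_{label}\sim v$, and the buckets coincide. It follows that $t_R\sim t$. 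The links $s_{out}$ are handled symmetrically.

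\emph{Merged links.} When \textproc{Merge} combines two non-containingly overlapping links with equal labels, the union of their covered sets is the merged link's covered set, except possibly for the interior endpoints. The overlap guarantees that those endpoints are strictly covered by the other link, so every node of the merged region is covered by one of the two, and (2) holds by induction on the number of merge steps.

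The main obstacle I expect is the consistency-transfer step for (2). It needs the precise facts that each witnessing atom $R$'s variables are contained in the bag $\chi_\sigma^{\mA_{S'}}(\sigma_i)$ and that $\chi_\sigma^{\mA_{S'}}(\sigma_i)\subseteq\chi_\sigma^{\mA_S}(\sigma_i)$. The second follows from Lemma~\ref{lemma:BagsContainedDifferentS} once we note that $\mA_{S'}\subseteq\mA_S$. The first requires checking that the annotation bags really contain the atoms of $\mH_i$, which should follow by the same induction as in Lemma~\ref{lemma:canuseinputstats}.
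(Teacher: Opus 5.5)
Your proposal is correct and follows the paper's proof. It uses induction on $|S|$. Property (1) comes from the fact that every $S$-skip link is produced for some $\ell\cdot B\in L^S$ and keeps that pair under \textproc{Merge}. Property (2) comes from $R^*$ witnessing the points of $V^S$, while the induction hypothesis, applied to the incoming and outgoing $S'$-skip links used by \textproc{Extend}, covers the gaps. Your explicit merge-step argument and consistency transfer (via $R\setminus\{\sigma_i\}\subseteq\chi_\sigma^{\mA_{S'}}(\sigma_i)\subseteq\chi_\sigma^{\mA_S}(\sigma_i)$) fill in details the paper leaves implicit, and the containment you flag does hold by the same induction as in Lemma~\ref{lemma:canuseinputstats}.
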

\begin{proof}
    We prove this by induction over $|S|$.

    \textbf{Base case}: $S=\{R\}$. (1) holds since otherwise $t\notin L^{\{R\}}$ and iteration $\{R\}$ would not create $s$. On lines~14--15, $R^*=R$ and $V^{\{R\}}=t\bowtie P^R$. Since no $S''\subsetneq\{R\}$ exists, $s_{in}=s_{out}=\textsf{null}$ in \textproc{Extend}, so each $v\in V^{\{R\}}$ yields the singleton $(\textsf{prev}_\mathbb{D}(\pi_{\sigma_i}v),\textsf{next}_\mathbb{D}(\pi_{\sigma_i}v),\ell)$. \textproc{Merge} only merges non-containingly overlapping singleton skip links and since $s$ covers $A$, every $a\in A$ must have such singleton skip link, giving (2).

    \textbf{Inductive Step}: $|S|>1$. Assume the lemma holds for all $S^\dagger\subsetneq S$. (1) holds by the same argument as in the base case. It remains to prove (2) for each $a\in A$. First, $s$ is output from \textproc{Merge}, which merged several outputs from \textproc{Extend}. The inputs to these \textproc{Extend} outputs are tuples in $V^S$. We can use the $R^*$ creating these tuples to witness a non-empty subset $A' \subseteq A$. Let $A_1,A_2,\dots$ be the consecutive segments of $A\setminus A'$ in $\mL$. Each $A_j$ must be covered by some skip link $s^j$ with $\textsf{prov}(s^j) \subsetneq S$, which was used by \textproc{Extend}, since otherwise, the resulting skip links would not have been merged to cover all of $A$. By the inductive hypothesis applied to $s^j$, every $a\in A_j$ has a witness in $S^j\subseteq S$.
\end{proof}

\begin{lemma}
\label{lemma:SemanticLemmaBackward}
    Let $\sigma_i$, $t_{i+1}$, $\mL$, $A$ be as above, and fix $S\subseteq\mC$.
    For an atom $R\in S$, we say $R$ witnesses $a\in A$ if there is a tuple $t_R\in R$ with $\pi_{\sigma_i}(t_R)=a$ and $t_R \sim t$.
    If the following hold:
    \begin{enumerate}
        \item $t \in \textproc{Join}(\mH_i^{S}[\chi_\sigma^{\mA_{S}}(\sigma_i)\setminus\{\sigma_i\}])$
        \item For every $a\in A$, there is an $R\in S$ witnessing $a$
        \item For every $R\in S$, there is an $a\in A$ witnessed by $R$ but not by any $R'\in S$ where $R'\neq R$
        \item The values bordering $A$ in $\mL$ ($\textsf{prev}_\mathbb{D}(\min_\mathbb{D} A)$ and $\textsf{next}_\mathbb{D}(\max_\mathbb{D} A)$, if they exist) are not witnessed by any $R\in S$.
    \end{enumerate}
    then there is a skip link $s$ in $\mL$ with $\textsf{prov}(s)=S$, label and bucket $t$, $s_{start}=\textsf{prev}_\mathbb{D}(\min_\mathbb{D} A)$ and $s_{end}=\textsf{next}_\mathbb{D}(\max_\mathbb{D} A)$
\end{lemma}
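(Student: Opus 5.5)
We argue by induction on $|S|$, mirroring the forward direction, Lemma~\ref{lemma:SemanticLemmaForward}. By (1), $t=\ell\cdot b\in L^S$, so iteration $S$ of lines 9--18 processes $t$: it picks $R^*=\arg\min_{R\in S}\deg_{P^R}(\sigma_i\mid t)$ and sets $V^S=t\bowtie P^{R^*}$. By (3) there is some $a^*\in A$ witnessed by $R^*$ and by no other atom of $S$; to see this, we use that $P^{R^*}$ contains exactly the tuples over $\chi_\sigma^{\mA_{R^*}}(\sigma_i)$ consistent with $t$ and witnessed by $R^*$. So $a^*$, together with every other $a\in A$ witnessed by $R^*$, yields a point $v\in V^S$ that is passed to \textproc{Extend}$(v,S)$. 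The plan is to show that the skip links produced from these points, after \textproc{Merge}, coalesce into one skip link spanning exactly from $\textsf{prev}_\mathbb{D}(\min_\mathbb{D} A)$ to $\textsf{next}_\mathbb{D}(\max_\mathbb{D} A)$.

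\textbf{Base case.} Take $S=\{R\}$. Every $a\in A$ is witnessed by $R$, so every $a\in A$ gives a singleton skip link $(\textsf{prev}(a),\textsf{next}(a),\ell)$. Consecutive singletons are non-containingly overlapping and have equal labels, so \textproc{Merge} joins them into a single link from $\textsf{prev}(\min A)$ to $\textsf{next}(\max A)$. By (4) no point beyond the borders of $A$ is in $V^S$, so this merged link does not grow past those borders.

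\textbf{Inductive step.} Let $A'\subseteq A$ be the values witnessed by $R^*$. Decompose $A\setminus A'$ into maximal consecutive segments $A_1,A_2,\dots$. Every value in a segment $A_j$ is witnessed by atoms of $S\setminus\{R^*\}$. Let $S^j\subsetneq S$ be a minimal subset of $S\setminus\{R^*\}$ whose atoms witness all of $A_j$; minimality of $S^j$ is what gives condition (3) for $S^j$ restricted to $A_j$. The borders of $A_j$ are either borders of $A$, which are unwitnessed by (4), or lie in $A'$. The latter case is problematic: such a border is witnessed by $R^*$, but it might also be witnessed by atoms of $S^j$, which would violate (4) for $S^j$.

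To handle this, I would instead enlarge each segment to the maximal run around it witnessed by $S^j$. That run is then bordered by values not witnessed by $S^j$. Its label and bucket are the projection of $t$, which lies in the join over $\chi_\sigma^{\mA_{S^j}}(\sigma_i)\setminus\{\sigma_i\}$ by Lemma~\ref{lemma:BagsContainedDifferentS} and by induced-subquery monotonicity, giving (1). The induction hypothesis then provides an $S^j$-skip link covering the run, with label consistent with $t$.

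With these links in place, \textproc{Extend} applied to a point of $A'$ adjacent to a run picks up an incoming or outgoing link whose provenance is a strict subset of $S$. It takes the longest such link, so the extended skip link reaches at least across the gap. Since $A$ is consecutive, the extended links from the points of $A'$ chain into a family of pairwise overlapping links with the common label $\ell$, where $\ell$ is the projection of $t$ onto $\chi_\sigma^{\mA_S}(\sigma_i)\setminus\chi_\sigma^\emptyset(\sigma_i)$. \textproc{Merge} unions them into one link.

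The remaining step, and the one I expect to be the main obstacle, is showing that the result ends exactly at the borders of $A$, neither shorter nor longer. It cannot be shorter: the first and last elements of $A$ are either in $A'$, or are covered by sub-links extended from $A'$. It cannot be longer: by Lemma~\ref{lemma:SemanticLemmaForward}, any link used by \textproc{Extend} covers only witnessed values, and by (4) the borders of $A$ are unwitnessed. Hence no link crosses them, and \textproc{Merge} (which only merges overlapping links) cannot extend beyond them.

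A further subtlety here is that \textproc{Extend} takes only the single longest incoming or outgoing link. I would argue that this suffices because previously built $S'$-links were themselves merged maximally, again by the induction hypothesis. Condition (3) is what guarantees that $R^*$ contributes at least one point, so that the merged link genuinely has provenance $S$.
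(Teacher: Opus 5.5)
Your base case and overall plan (extend the points of $V^S$ through lower-provenance links, then merge) match the paper. The inductive step, however, has a genuine gap, and it is introduced by your fix for condition (4).

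\textbf{Where it fails.} After you enlarge a segment to the maximal run witnessed by $S^j$, the border of that run need not lie in $A'$. It can be a value witnessed only by atoms of $S\setminus(S^j\cup\{R^*\})$, and such a value is not a point of $V^S$. \textproc{Extend}$(v,S)$ only consults links whose start or end is exactly $\pi_{\sigma_i}v$. \textproc{Merge} only acts on the $S$-links produced in this iteration. So the run's link is never absorbed, and your ``cannot be shorter'' step breaks.

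\textbf{Counterexample to your argument.} Take $S=\{R^*,R_1,R_2\}$ and $A=(a_1,a_2,a_3,a_4)$, with the following witnesses: $a_1$ only by $R_1$; $a_2$ by $R^*$ and $R_1$; $a_3$ only by $R_2$; $a_4$ only by $R^*$. Conditions (2)--(4) hold.
\begin{itemize}
\item Your sets are $A'=\{a_2,a_4\}$, with segments $\{a_1\}$ (for $S^1=\{R_1\}$) and $\{a_3\}$ (for $S^2=\{R_2\}$).
\item The enlarged run for $S^1$ is $\{a_1,a_2\}$, whose link is $(\textsf{prev}(a_1),a_3)$. Its endpoint $a_3$ is not in $A'$.
\item Using only the links your argument supplies, \textproc{Extend} at $a_2$ gives $(a_1,a_4)$ and at $a_4$ gives $(a_2,\textsf{next}(a_4))$.
\item These merge to $(a_1,\textsf{next}(a_4))$, which leaves $a_1$ uncovered.
\end{itemize}
The algorithm does reach $\textsf{prev}(a_1)$, but only through the $\{R_1,R_2\}$-link $(\textsf{prev}(a_1),a_4)$, which your decomposition never produces.

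\textbf{The missing idea.} Let $A'$ be the values witnessed \emph{only} by $R^*$; this set is non-empty by (3), as you already noted for $a^*$.
\begin{itemize}
\item Every $a\in A\setminus A'$ is then witnessed by some atom of $S\setminus\{R^*\}$, so a minimal $S^j\subseteq S\setminus\{R^*\}$ covers each segment $A_j$.
\item Each interior border of $A_j$ lies in $A'$, so it is witnessed by $R^*$ alone and by no atom of $S^j$. Condition (4) therefore holds directly, with no enlargement.
\item The induction hypothesis then gives links that start and end exactly at points of $A'\subseteq V^S$ or at the borders of $A$.
\end{itemize}
This alignment is what lets \textproc{Extend} at consecutive points of $A'$ pick up the sub-links and produce overlapping $S$-links for \textproc{Merge} to combine. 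In the example above, this choice gives $A'=\{a_4\}$ and a single segment $\{a_1,a_2,a_3\}$ with $S^1=\{R_1,R_2\}$. The resulting link is exactly $(\textsf{prev}(a_1),a_4)$, which \textproc{Extend} at $a_4$ picks up.
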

\begin{proof}
    We prove this by induction over $|S|$.

    \textbf{Base case}: $S=\{R\}$. By (1), $t \in L^{\{R\}}$. By (2), every $a\in A$ yields a singleton skip-link from $\textsf{prev}_\mathbb{D}(a)$ to $\textsf{next}_\mathbb{D}(a)$ in \textproc{Extend}. These are non-containingly overlapping since $A$ is consecutive, so \textproc{Merge} merges them into one skip link from $\textsf{prev}_\mathbb{D}(\min_\mathbb{D} A)$ to $\textsf{next}_\mathbb{D}(\max_\mathbb{D} A)$ with label $\ell$, tagged $\textsf{prov}\gets\{R\}$.

    \textbf{Inductive Step}: $|S|>1$. Assume the lemma holds for all $S^\dagger\subsetneq S$. By (1), $t \in L^S$. Let $R^*$ be the choice on line~14, and let $A'\subseteq A$ be the subset of $A$ such that $R^*$ uniquely witnesses each $a\in A'$. For each $a\in A'$, $V^S$ contains a tuple $t \cdot a$.

    Let $A_1,A_2,\dots$ be the consecutive segments of $A \setminus A'$ in $\mL$. Each segment $A_j$ satisfies (1)--(4) with respect to some $S^j \subseteq S\setminus\{R^*\}$. (1) holds since $t$ satisfies (1) for $S$. (2) holds since each $a\in A_j$ is witnessed by some $R\in S$ but not uniquely by $R^*$, so $S^j \subsetneq S$ suffices. (3) holds since if there is a $R'\in S_j$ where it does not hold, we can edit $S_j$ by removing $R'$, and doing this until (3) holds. (4) holds since each of $\textsf{prev}_\mathbb{D}(\min_\mathbb{D} A_j)$ and $\textsf{next}_\mathbb{D}(\max_\mathbb{D} A_j)$ is either a value in $A'$ (uniquely witnessed by $R^*\notin S^j$) or equals $\textsf{prev}_\mathbb{D}(\min_\mathbb{D} A)$ or $\textsf{next}_\mathbb{D}(\max_\mathbb{D} A)$ respectively (not witnessed by any $R\in S\supseteq S^j$ by (4) for $A$).

    By the inductive hypothesis, there is a skip link $s^j$ with $\textsf{prov}(s^j)=S^j\subsetneq S$, label consistent with $t$, $s^j_{start}=\textsf{prev}_\mathbb{D}(\min_\mathbb{D} A_j)$ and $s^j_{end}=\textsf{next}_\mathbb{D}(\max_\mathbb{D} A_j)$. These endpoints are either in $A'$ or at the border of $A$.

    Therefore, for each $a' \in A'$, \textproc{Extend} finds incoming/outgoing skip links reaching precisely to the next nodes in $A'$ (or to the borders of $A$), producing skip links over $A$ that are non-containingly overlapping. \textproc{Merge} then combines them into the desired $s$ with $s_{start}=\textsf{prev}_\mathbb{D}(\min_\mathbb{D} A)$ and $s_{end}=\textsf{next}_\mathbb{D}(\max_\mathbb{D} A)$, tagged with provenance $S$.
\end{proof}

We remark that there may be skip links $s$ and $s'$ which are non-containingly overlapping, and where $s_{label}$ and $s'_{label}$ are consistent, but there is no skip link $s''$ merged from them. This happens when there is no $t_{i+1}$ which is consistent with \textit{both} of $s$ and $s'$.

\subsection{Correctness and Constant Delay}

Algorithm~\ref{algorithm:BuildEnumDS} constitutes a reduction from $\smH_i$ to $\smH_{i+1}$, in that we compute $\smH_i$ by computing $\smH_{i+1}$, with an input created by the reduction, and then iterate over extensions to the next variable $\sigma_i$. We argue that this reduction is correct.

We will say that a tuple $t$ satisfies an atom $R$, if either $R$ is positive and $t$ is consistent with a tuple in $R$, or $R$ is negative and $t$ is not consistent with any tuple in $R$.

\reductioncorrectlemma*
\begin{proof}
    We perform induction on $i$. The base case $i=1$ is immediate, since $Q_1(I_1)=Q(I)$ and $V_1=V$. For the inductive step, assuming $Q_i(I_i)=\pi_{V_i}Q(I)$, it suffices to show:
    \[
        Q_{i+1}(I_{i+1}) = \pi_{V_{i+1}} Q_i(I_i).
    \]
    Let $t_i$ be a tuple over $V_i$, and $t_{i+1}=\pi_{V_{i+1}}(t_i)$ be the projection of $t_i$ to $V_{i+1}$. We assume that $t_i$ and $t_{i+1}$ satisfy all atoms that occur in both $\smH_i$ and $\smH_{i+1}$, since otherwise neither is an output (these atoms do not contain $\sigma_i$, so $t_i$ satisfies such an atom iff $t_{i+1}$ does). It therefore suffices to show:
    \begin{enumerate}
        \item If $t_i$ satisfies all atoms in $\smH_i$ containing $\sigma_i$, then for each $S$, $t_{i+1}$ satisfies atoms $U_i^S$.
        \item If $t_{i+1}$ satisfies all atoms $U_i^S$, then there is some $t_i$ where $t_i \sim t_{i+1}$ which satisfies all atoms in $\smH_i$ containing $\sigma_i$.
    \end{enumerate}

    \textbf{(1)}:
    We prove the contrapositive. Suppose there is some $S$ where $t_{i+1}$ does not satisfy $U_i^S$. If $S=\emptyset$, then $U_i^\emptyset$ is the positive atom $\textproc{Join}(\mH_i^\emptyset[\chi_\sigma^\emptyset(\sigma_i)])$. $t_{i+1}$ not satisfying it means every $t_i\sim t_{i+1}$ does not satisfy some positive atom containing $\sigma_i$. If $S\neq\emptyset$, then $U_i^S$ is a negative atom containing skip-link spanning an entire linked lists. $t_{i+1}$ not satisfying it means $t_{i+1}$ is consistent with the label/bucket of some such skip link $s$. By Lemma~\ref{lemma:SemanticLemmaForward}~(2) applied to $s$ with $A=\mL$, every $x\in\mL$ is witnessed by some $R\in S$, so for every $t_i=x\cdot t_{i+1}$ where $x\in\mL$, some negative $R\in S$ is not satisfied.

    \textbf{(2)}:
    We again prove the contrapositive. Suppose there is no $t_i$ where $t_i \sim t_{i+1}$ satisfies all atoms $R$ in $\smH_i$ containing $\sigma_i$. We consider two cases.

    First, suppose for every $t_i$ there is a positive $R$ which is not satisfied. Then $U_i^\emptyset=\textproc{Join}(\mH_i^\emptyset[\chi_\sigma^\emptyset(\sigma_i)])$ has no tuple consistent with $t_{i+1}$, so $t_{i+1}$ does not satisfy $U_i^\emptyset$.

    Second, let $X$ be the values over $\sigma_i$ such that $t_i=x\cdot t_{i+1}$ satisfies all positive $R$ with $\sigma_i\in R$. Note $X$ is exactly the values appearing in the linked list $\mL$ in bucket $B$ consistent with $t_{i+1}$, since $\mL$ is built from the join of the positive atoms. Suppose for every $x\in X$ the corresponding $t_i$ has a negative atom $R$ with $\sigma_i\in R$ which is not satisfied, and let $S$ be a minimal such set. By Lemma~\ref{lemma:SemanticLemmaBackward} applied with $A=X=\mL$, there is a skip link $s$ with $\textsf{prov}(s)=S$ spanning the entire linked list with label consistent with $t_{i+1}$. This link is passed up in $U_i^S$, which matches $t_{i+1}$, so $t_{i+1}$ would not be output over $\smH_{i+1}$. It remains to verify the conditions of Lemma~\ref{lemma:SemanticLemmaBackward}.

    (1): Consider the query over $\mH_i^S[\chi_\sigma^{\mA_S}(\sigma_i)\setminus\{\sigma_i\}]$. Atoms also in $\smH_{i+1}$ are satisfied by $t_{i+1}$'s witnesses. The remaining atoms are those of $S$. Each $R\in S$ contributes the tuple $t_R\sim t_{i+1}$ that witnesses some $x\in X$.

    (2): For each $x\in X$, we have established some such negative $R$ exists.

    (3): This is true by the minimality of $S$.

    (4): This holds since $A=X$ is the entire linked list.
\end{proof}

Next, we show that the \textproc{Enumerate} function of the enumeration data structure correctly enumerates extensions to the next variable $\sigma_i$.

\enumfunctionstepoutputslemma*
\begin{proof}
    First, the query over $\mH_i^\emptyset[\chi_\sigma^\emptyset(\sigma_i)]$, which gives the hash table, will have precisely extensions to $t_{i+1}$ which satisfy the positive atoms $R$ with $\sigma_i\in R$ (other $R$ are satisfied by the same witnesses as $t_{i+1}$). It remains to show the following: in the linked list $\mL$, let $\mL'$ be the (not necessarily contiguous) subsequence of $\mL$ which are the correct extensions, i.e. the values over $\sigma_i$ which are not blocked by any negative atom. We need to argue that when we follow skip links, we traverse exactly $\mL'$.

    We first argue that no skip link consistent with $t_{i+1}$ can skip a value in $\mL'$. Indeed, by Lemma~\ref{lemma:SemanticLemmaForward}~(2), if there were a skip link consistent with $t_{i+1}$ that skipped a value in $\mL'$, then there would be a negative atom that witnesses this skip, which contradicts the definition of $\mL'$.

    Next, we argue that for any pair of values $x_1,x_2 \in \mL'$, which are consecutive in $\mL'$ but not in $\mL$ (where $x_1$ may also be the bucket key and $x_2$ the sentinel), there is a skip link $s$ with label consistent with $t_{i+1}$ with $s_{start}=x_1$ and $s_{end}=x_2$.

    Let $A$ be the values in $\mL$ strictly between $x_1$ and $x_2$. Apply Lemma~\ref{lemma:SemanticLemmaBackward}: (2) holds since every $a\in A$ is blocked; let $S^*$ be a minimal such set, giving (3); (1) holds by using the same witnesses as for $t_{i+1}$ and the tuples in $S^*$ which witness each $a\in A$; (4) holds since $x_1,x_2\in\mL'$ are not witnessed by any negative atom. The resulting $s$ has $s_{start}=x_1$ and $s_{end}=x_2$.

    For constant delay, at each node, the algorithm inspects the at most $2^{|\mE_i^-|}$ outgoing skip links with label $\sim t_{i+1}$ and follows the longest. By the arguments above, the longest one from $x_1\in\mL'$ has $s_{end}=x_2$, so following one pointer suffices between outputs.
\end{proof}

\subsection{Proving the Main Theorem}

We now put together the pieces to prove the main theorem.

\mainthm*
\begin{proof}\mbox{}
    \textbf{Correctness}:
    By Lemma~\ref{lemma:ReductionCorrect}, the set of outputs of each reduced query satisfies $Q_i(I_i) = \pi_{V_i}Q(I)$. We argue inductively, starting from $i=n$, that the algorithm correctly enumerates $Q_i(I_i)$. The base case is $\smH_{n+1}$, the empty hypergraph, whose unique output is the empty tuple $t_{n+1}=()$. For each $i$ from $n$ down to $1$, Lemma~\ref{lemma:enumfunctionstepoutputs} applied to (any output) $t_{i+1}\in Q_{i+1}(I_{i+1})$ guarantees that \textproc{Enumerate} produces exactly the outputs $t_i\sim t_{i+1}$ over $\smH_i$. Taken over all $t_{i+1}$, this yields all of $Q_i(I_i)$. At $i=1$ we obtain $Q(I)$.

    \textbf{Preprocessing Time}: The preprocessing phase consists of $n$ calls to \textproc{BuildEnumDS}, which by Lemma~\ref{lemma:EnumDSConstruction} runs in $O(|I|^{\sfhow(\smH)})$.

    \textbf{Constant Delay}:
    Lemma~\ref{lemma:enumfunctionstepoutputs} shows that each step \textproc{Enumerate} in one attribute has constant delay. It remains to argue we only need to call this a constant number of times between different calls to \textproc{Emit}. By Lemma~\ref{lemma:ReductionCorrect}, every $t_i \in Q_i(I_i)$ is the projection of some $t \in Q(I)$, so each enumerated $t_i$ extends to at least one output of the original query. Hence, we need to do at most $2n$ calls to \textproc{Enumerate}, where $n$ is the number of variables (each \textproc{Enumerate} may run to the end of its linked list, causing us to backtrack, and then be called again to enumerate the next value).
\end{proof}

\section{Proof for $k$-Cycle Queries}
\label{sec:app:proofkcycle}

\kcycletheorem*
\begin{proof}
    Consider a query in the family $C_k^{\pm}$ with at least one negative edge and an instance $I$ with cardinality at most $m$. Let $R_k$ be a negative edge. Partition $R_k$ into two subrelations $R_k^l$ and $R_k^h$ such that $\deg_{R_k^l}(x_k \mid x_1) \leq \sqrt{m}$ and $\deg_{R_k^h}(x_1 \mid x_k) \leq \sqrt{m}$. This gives two subinstances of $I$: $I^l$ and $I^h$, where we also restrict $X_1$ to the values of $x_1$ with degree at most $\sqrt{m}$ in $R_k$ for $I^l$, and to the remaining values for $I^h$. For $I^l$, pick the variable order $x_1, \dots, x_k$. For $I^h$, pick the variable order $x_k, \dots, x_1$. Since $I^l$ and $I^h$ partition the values of $x_1$, and for every value of $x_1$ the atom $\neg R_k$ is equivalent to $\neg R_k^l$ in $I^l$ (resp.\ $\neg R_k^h$ in $I^h$), we enumerate $Q(I)$ by enumerating $Q(I^l)$ and then $Q(I^h)$. W.l.o.g. consider $I^l$ for the remainder of the proof.

    We first perform an intermediate preprocessing step on each subinstance, where we perform semijoins/antijoins to enforce condition (2) of Lemma~\ref{lemma:keykcyclelemma}. This is done in linear time.

    We will use the regular $\textproc{BuildEnumDS}$ procedure, with a small modification we describe in the next paragraph. In step $i$, when we remove variable $x_i$, there is a negative atom $R=\{x_i, x_k\}$ with degree constraint $\deg_R(x_k|x_i)\leq\sqrt m$ by Lemma~\ref{lemma:keykcyclelemma} applied to the path $x_k\to x_1\to\cdots\to x_i$ ((1) holds by the partition, (2) holds by the preprocessing). We therefore get: $|\mH_i^S[\chi_\sigma^{\mA_S}(x_i)]|\leq m\sqrt m$, which translates to the runtime of line~7 of \textproc{BuildEnumDS}.

    For lines 10--17, the bound becomes more subtle, as the induced subquery lemma, which we used in the original proof, does not apply with degree constraints. We will slightly modify how this step is done, considering different cases on the bag $\chi_\sigma^{\mA_S}(x_i)$:
    \begin{itemize}
        \item $\chi_\sigma^{\mA_S}(x_i)=\{x_i\}$: We use the normal method, and the runtime is bounded by the input size.
        \item $\chi_\sigma^{\mA_S}(x_i)\subseteq\{x_i,x_{i+1}\}$: We use the normal method, and the runtime is bounded by the input size.
        \item $\chi_\sigma^{\mA_S}(x_i)=\{x_i,x_k\}$: We use the normal method. The runtime is bounded by $m\sqrt{m}$, which is the size of $U_i^S$.
        \item $\chi_\sigma^{\mA_S}(x_i)=\{x_i,x_{i+1},x_k\}$: We will modify how we compute $L^S$, since line~11 is a Cartesian product with cost $m^2$. We first compute an intermediate $W$ as follows: If $R_i=(x_i,x_{i+1})$ is positive, $W=R\bowtie R_i$, and if $R_i=(x_i,x_{i+1})$ is negative, $W$ joins $R$ with $R_i$ by matching a tuple $t\in R$ with $t'\in R_i$ if $t.x_i$ and $t'.x_i$ are consecutive in the linked list for $x_i$ (which has only $1$ bucket). $|W| \leq m\sqrt{m}$ by the degree constraint on $R$. We then compute $L^S$ by projecting $W$ on $x_k$ and $x_{i+1}$, and doing any semijoin on unary atoms. Finally, line~15 is bounded by $|W|$ by first performing a semijoin of $P^{R^*}$ with $W$.
    \end{itemize}

    \textbf{Correctness}: The modification means that $L^S \subseteq \hat{L}^S$, where $\hat{L}^S$ is the set of labels and buckets obtained in the original procedure. Consider $t = \ell \cdot B \in \hat{L}^S \setminus L^S$. Then $t \in \textproc{Join}(\mH_i^S[\chi_\sigma^{\mA_S}(\sigma_i)\setminus\{\sigma_i\}])$. If $R_i$ is positive, then $t$ will be filtered out by the join with $P^{R^*}$, since $P^R$ contains both atoms in $W$. It remains the case when $R_i$ is negative.

    Since $t$ is in $\hat{L}^S \setminus L^S$, it means that considering the set $A$ of nodes in the linked list $\mL$ ($t$ determines a bucket) which are skipped by $R_{i}$ or $R$. Since $t$ is not in $L^S$, it must be that there are no two nodes $a_1,a_2 \in A$ which are consecutive in the linked list, where $a_1$ is only skipped by $R_{i}$ and $a_2$ is only skipped by $R$. Let $A_1,A_2,\dots$ be the consecutive subsequences of $A$. Then, for each $A_j$, we can pick either $R$ or $R_i$ to block the values in $A_j$, consistent with $t$. We can apply Lemma~\ref{lemma:SemanticLemmaBackward} to show that there is a skip link over each such $A_j$ already, with provenance $\{R\}$ or $\{R_i\}$. This skip link has the same start and end as the one that would have been created from $t$, and the label is a subset of the original label. Therefore, leaving out the skip links from $t$ does not affect the algorithm, since we can use these other skip links instead.

    For the modification of line~15, where we first semijoin $P^{R^*}$ with $W$, we note that a tuple $t$ filtered out by this semijoin is also in such an interval $A_j$, and by the same argument as above, these tuples are already covered by other skip links.

    \textbf{Constant Delay}: The constant delay argument is unaffected by the modifications, which only affect that some skip links are removed compared to the unchanged algorithm. By the same argument as above, if the algorithm was to follow such a skip link, there is another skip link with the same start and end, with a label which is a subset of the original label. If constant delay is achieved in the original algorithm, it is also achieved in the modified algorithm.
\end{proof}

\fi

\end{document}